\documentclass[12pt]{article}
\usepackage[utf8]{inputenc}
\usepackage[english]{babel}
\usepackage{amsmath, amssymb, amsthm}
\usepackage{natbib}
\usepackage{graphicx}
\usepackage{bm}
\usepackage{algorithm}
\usepackage{algorithmic}

\theoremstyle{proposition}
\newtheorem{proposition}{Proposition}

\theoremstyle{definition}

\theoremstyle{remark}

\newcommand{\bmu}{\bm{\mu}}
\newcommand{\bSigma}{\bm{\Sigma}}

\newcommand{\by}{\bm{y}}
\newcommand{\bx}{\bm{x}}
\newcommand{\bj}{\bm{j}}
\newcommand{\bC}{\bm{C}}
\newcommand{\bk}{\bm{k}}
\newcommand{\btau}{\bm{\tau}}
\begin{document}
	
	\title{Model-based estimation and imputation  with torus missing values}
	\author{
		Luca Greco\\
		University Giustino Fortunato, Benevento, Italy\\
		{\tt l.greco@unifortunato.eu}\\
		Lucia Filippozzi \\
		Department of Mathematics, University of Trento, Italy\\
		{\tt lucia.filippozzi@unitn.it} \\
		Claudio Agostinelli \\
		Department of Mathematics, University of Trento, Italy\\
		{\tt claudio.agostinelli@unitn.it}
	}
	\date{}
	\maketitle
	
	\begin{abstract}
		This paper addresses the problem of parameter estimation and model-based imputation for multivariate circular data lying on a $p$-dimensional torus in the presence of missing values. 
		Actually, the periodic nature of the sample space invalidates conventional imputation techniques designed for Euclidean data. Then, we propose a general framework for maximum likelihood estimation under the wrapped elliptically symmetric family of distributions, with particular interest in the  multivariate wrapped normal distribution, when the missing data mechanism is ignorable. 
		The methodology leverages the conditional properties of the elliptically symmetric distributions on the unwrapped space, embedding the imputation of missing torus data into an Expectation-Maximization algorithm that treats both the wrapping coefficients and the missing entries as latent variables. Derivation of both  the E and M steps is detailed and a working algorithm is discussed. Imputation methods are also taken into account.
		The finite-sample performance of the maximum likelihood estimator under ignorable missingness is assessed through Monte Carlo simulations under the wrapped normal specification. The methodology is also illustrated on data with artificially introduced missingness. 
	
	\end{abstract}
	
	{\bf Keywords}
	Circular data; EM algorithm; MAR, MCAR, Elliptically symmetric, Wrapped normal.
	
	\section{Introduction}
	\label{sec:intro}
	
	Multivariate circular data arise across a wide spectrum of scientific disciplines. In structural bioinformatics, the conformational space of protein and RNA backbones is characterized by sequences of dihedral torsion angles \citep{mardia2007protein, Eltzner2018, greco2023finite}. In environmental sciences, wind directions recorded at multiple monitoring stations or at successive time points define circular observations whose dependence structure is of primary interest \citep{jona2012spatial, lagona2012model}. In movement ecology, the simultaneous analysis of animal headings and turning angles naturally leads to multivariate circular responses \citep{rivest2016general, ranalli2020model}.
	This type of data can be represented as points on the surface of a $p$-dimensional torus $\mathbb{T}^p = [0,2\pi)^p$. The topology is not a convenience but is a constraint since angles (or directions) are periodic. The intrinsic geometry of the torus, where each coordinate is identified modulo $2\pi$, imposes fundamental constraints on statistical modeling and inference that distinguish torus data from conventional multivariate observations in the Euclidean space \citep{MardiaJupp2000, pewsey2013circular}. 
	
	The family of wrapped elliptically symmetric distributions (WES) is a flexible model for torus data \citep[][and references therein]{agostinelli2024weighted}.
	 A WES distribution arises by component-wise reduction modulo $2\pi$ of a $p$-variate random vector $\bx$ belonging 
	 to an elliptically symmetric family, i.e.  $\by = \bx (\mod 2\pi)$, where $\bx$  has density 
	\begin{equation*}
		m_p(\bx; \bmu, \bSigma)\propto |\bSigma|^{-\frac{1}{2}} h(d^2(\bx, \bmu, \bSigma))
	\end{equation*}
	with $\bmu \in \mathbb{R}^p$ the location vector,  $\bSigma \in \mathbb{R}^{p \times p}$ a positive definite matrix, $d^2(\bx, \bmu, \bSigma)=(\bx-\bmu)^\top\bSigma^{-1}(\bx-\bmu)$ the squared Mahalanobis distance (MD), and $h(t)>0$  a scalar real valued function 
	determining the shape of the distribution. 
	 The resulting WES family density is obtained by summing $m_p$ over all wrapping coefficients $\bj = (j_1, \ldots, j_p)^\top \in \mathbb{Z}^p$:
	\begin{equation} \label{eq:wn_density}
		m_p^\circ(\by; \bmu, \bSigma) = \sum_{\bj \in \mathbb{Z}^p} m_p(\by + 2\pi \bj; \bmu, \bSigma), 
	\end{equation}
	where $\bmu$ is now an element of the torus $\mathbb{T}^p$, which ensures identifiability. 
	In practice, the infinite sum over $\bj \in \mathbb{Z}^p$ in \eqref{eq:wn_density} is truncated to a finite set $\mathcal{C}_J = \{-J, -J+1, \ldots, J\}^p$ for a sufficiently large $J$, as the terms decay rapidly for concentrated distributions \citep{nodehi2020, greco2023finite}.  
	 The Wrapped Normal (WN) is obtained by setting $h(t)=\exp\left(-\frac{t}{2}\right)$, whereas the Wrapped Student $t$ with (known)  degrees of freedom $\nu$ corresponds to $h(t)=\left(1+\frac{t}{\nu}\right)^{-\frac{p+\nu}{2}}$. 
	
	The WES family is particularly amenable to likelihood-based inference.
	In particular, this is true for the WN distribution, since it inherits many desirable properties from the Normal distribution. 
	In the paper, the treatment of the WES  family will be presented in general terms; particular emphasis and attention will nonetheless be devoted to the WN.
	Finite mixtures of WN components have been proposed for model-based clustering of heterogeneous torus data \citep{greco2023finite}, demonstrating the potential of this distributional family for unsupervised learning on non-Euclidean manifolds. The WN has additionally provided a convenient framework to develop robust inference on the torus \citep{saraceno2021robust, greco2021robust, agostinelli2024weighted}. 
	
	Missingness is a ubiquitous complication in virtually all empirical investigations, arising from instrument failure, recording errors, study design, or simply the stochastic nature of data collection. In Euclidean settings, the analysis of incomplete data is supported by an extensive literature \citep[][for a review]{little2019statistical, van2018flexible}. Despite the maturity of the missing data literature, the treatment of incomplete multivariate circular observations has remained largely unexplored, to the best of our knowledge. For circular data, solutions to handle missing values cannot ignore the intrinsic periodicity in the data but should account for the torus topology. There are some notable exceptions \citep{lagona2012model, bulla2012multivariate}; however, their approaches rely on the von Mises distribution, which presents substantial analytical challenges \citep{mardia2012mixtures, mardia2014some, greco2023finite}. 
	By contrast, the WES family preserves a direct connection to elliptically symmetric distributions, inheriting many desirable properties and making it particularly well-suited in the development of likelihood-based inference also in the presence of missing data, through a principled Expectation-Maximization (EM) based approach.
	
	Throughout this paper, we operate under the assumption of an ignorable missing data mechanism.  
	Following \citep[Ch.~6]{little2019statistical}, let $\by = (\by_{\text{obs}}, \by_{\text{mis}})$ denote the complete data, partitioned into observed and missing components, and let $\bm{r}$ be a binary random vector indicating which entries of $\by$ are observed.  
	The missing data mechanism is characterised by the conditional distribution $p(\bm{r} \mid \by, \bm{\psi})$, where $\bm{\psi}$ is a vector of nuisance parameters.
	Under the missing completely at random (MCAR) mechanism,
	\[
	p(\bm{r} \mid \by, \bm{\psi}) = p(\bm{r} \mid \bm{\psi}) \quad \forall \, \by,
	\]
	that is, the probability of missingness does not depend on either the observed or the missing values.  
	Under the less restrictive missing at random (MAR) mechanism,
	\[
	p(\bm{r} \mid \by, \bm{\psi}) = p(\bm{r} \mid \by_{\text{obs}}, \bm{\psi}) \quad \forall \, \by_{\text{mis}},
	\]
   missingness may depend on the observed data, but, conditional on these, it is independent of the unobserved values.
   MCAR is recovered as the special case in which this dependence vanishes. When even the MAR condition fails, the mechanism is termed missing
   not at random (MNAR), and the missingness process must be explicitly modeled.
	
	A missing data mechanism is said to be ignorable for likelihood-based inference when, in addition to MAR,	the parameters $\bm \theta$ indexing the measurement model and the nuisance parameters $\bm{\psi}$ governing the missingness process are distinct, in the sense that their joint parameter space factorizes  \citep[Ch.~6]{little2019statistical}.	
	Under ignorability, inference can be legitimately based on the observed-data likelihood
	\[
	L(\bm{\theta} \mid \by_{\text{obs}}) = \int p(\by_{\text{obs}}, \by_{\text{mis}} \mid \bm{\theta}) \, d\by_{\text{mis}},
	\]
	without explicit specification of $p(\bm{r} \mid \by, \bm{\psi})$.  
	
	In the context of circular data, the ignorability assumption is both practically reasonable and methodologically convenient: 
	it allows us to focus on modeling circular measurements, while treating the missingness process as a nuisance that can be disregarded for the purposes of estimation and imputation.  Furthermore, 
	the EM algorithm provides a principled framework for maximum likelihood estimation under an ignorable missing data mechanism.
  
	In this paper, we develop maximum likelihood estimation of the parameters of the WES distribution from partially observed torus data under an ignorable missing data mechanism. Our approach builds upon the data-augmentation perspective that has been proven successful for complete-data fitting \citep{nodehi2020, greco2023finite, agostinelli2024weighted}. The key innovation lies in the recognition that the missing components, conditional on the observed ones and on a candidate vector of wrapping coefficients, obey an elliptically symmetric distribution.
	This insight enables the derivation of a principled EM algorithm that simultaneously treats two distinct sources of incompleteness: the unknown wrapping coefficients and the missing angular measurements. 
	
	The remainder of the paper is organized as follows. Section \ref{sec:background} reviews the EM algorithm to fit a wrapped distribution from complete data, first, and, second, to handle missing values in elliptically symmetric distributions in the Euclidean case. Section \ref{sec:methodology} presents a principled EM algorithm for incomplete torus data under the WES model, combining the previous results, and delivers imputation strategies for the missing values.
	Section \ref{sec:simulations} reports the results of Monte Carlo numerical studies designed to assess the finite-sample performance of the proposed estimation procedure with emphasis on the WN distribution under various missingness patterns. Section \ref{sec:applications} illustrates the methodology on some real datasets, where missing values are artificially introduced to benchmark imputation accuracy. Section \ref{sec:discussion} concludes with a discussion of limitations, extensions, and directions for future research. The Appendix \ref{appendix} contains further details on the introduced algorithm with a particular interest on computational complexity. Complete simulations results and data analyses are available in the Appendix.

	
	\section{Background}
	\label{sec:background}
	
	\subsection{Maximum likelihood estimation with complete torus data}
	\label{sec:em_wn}
	
	Before addressing the problem of missing values, we review maximum likelihood estimation (MLE) for the WES family when no observations are missing.  
	Consider an i.i.d. sample of size $n$,  $\by= (\by_1, \ldots, \by_n)$, from a circular random variable with density function as in \eqref{eq:wn_density}.  
	MLE can be obtained solving fixed point equations \citep{agostinelli2024weighted} of the form
	
	\begin{align} \label{MLE}
		\bmu & = \frac{\sum_{i=1}^n \sum_{\bj \in \mathbb{Z}^p} v_{i\bj} (\by_i + 2\pi \bj)}{\sum_{i=1}^n \sum_{\bk \in \mathbb{Z}^p} v_{i\bk}} \\
		\bSigma & = - \frac{2}{n} \sum_{i=1}^n \sum_{\bj \in \mathbb{Z}^p} v_{i\bj}(\by_i + 2\pi \bj-\bmu )  (\by_i + 2\pi \bj-\bmu )^\top \nonumber \ .
	\end{align}
			with 
		\begin{equation} \label{vijMLE}
		v_{i\bj} =  \frac{h^\prime(d^2_{i\bj})}{\sum_{\bk\in\mathbb{Z}^p} h(d^2_{i\bk})} \ 
	\end{equation}
	and $d^2_{i\bj}=(\by_i + 2 \pi \bj; \bmu, \bSigma)^\top \bSigma^{-1}(\by_i + 2 \pi \bj; \bmu, \bSigma)$.
	
	The Expectation-Maximization \citep[EM,][]{dempster1977maximum, mclachlan2008algorithm} algorithm provides a natural and computationally convenient framework to MLE, exploiting a data-augmentation scheme in which the unknown wrapping coefficients are treated as latent variables \citep{Fisher1994, coles1998inference, nodehi2020}.  
	Then, the complete data consist of the pairs $(\by_i, \bj_i)$, with complete-data log-likelihood
	\begin{equation} \label{eq:complete_ll_wn}
		\ell_c(\bmu, \bSigma) = \sum_{i=1}^n \log m_p(\by_i + 2\pi \bj_i; \bmu, \bSigma) = \sum_{i=1}^n w_{i\bj} \log m_p(\by_i + 2\pi \bj; \bmu, \bSigma) \ ,
	\end{equation}
	where $w_{i\bj}=1$ when $\bj=\bj_i$, that is $\bj_i$ is the wrapping coefficient vector for the i-th observation, and $w_{i\bj}=0$ otherwise.
	
	In the E-step, given current parameter estimates $\bmu^{(t)}$ and $\bSigma^{(t)}$, we compute 
	the conditional expected value of the complete-data log-likelihood in \eqref{eq:complete_ll_wn}, obtaining
	\begin{equation} \label{eq:q_function_wn}
		Q(\bmu, \bSigma \mid \bmu^{(t)}, \bSigma^{(t)}) = \mathbb{E}_{\bj}\left[ \ell_c(\bmu, \bSigma \mid \by, \bmu^{(t)}, \bSigma^{(t)})  \right].
	\end{equation}
	This evaluation only requires computing the conditional probabilities
	\begin{equation} \label{eq:posterior_weights}
		\hat w_{i\bj}^{(t)} = \frac{m_p(\by_i + 2\pi \bj; \bmu^{(t)}, \bSigma^{(t)})}{\sum_{\bk \in \mathcal{C}_J} m_p(\by_i + 2\pi \bk; \bmu^{(t)}, \bSigma^{(t)})}=\frac{h(d^{2; (t)}_{i\bj})}{\sum_{\bk\in\mathbb{Z}^p} h(d^{2; (t)}_{i\bk})} , \qquad \bj \in \mathcal{C}_J,
	\end{equation}
	where the infinite sum over $\mathbb{Z}^p$ is truncated to  $\mathcal{C}_J$; a value of $J = 1,2$ is typically sufficient for concentrated data \citep{nodehi2020, greco2023finite}.  
	These weights represent the conditional probability that observation $\by_ i$ has wrapping coefficient vector $\bj$.
	
	The M-step then updates the parameters by maximizing \eqref{eq:q_function_wn}, that is, by solving the complete-data score equations:

  \begin{align*}
		\bmu^{(t+1)} &= \frac{ \sum_{i=1}^n \sum_{\bj \in \mathcal{C}_J}  \hat w_{i\bj}^{(t)} \zeta_{i\bj} (\by_i + 2\pi \bj)}{\sum_{i=1}^n \sum_{\bj \in \mathcal{C}_J} \hat w_{i\bj}^{(t)} \zeta_{i\bj} }, \\
		\bSigma^{(t+1)} &= -\frac{2}{n}\sum_{i=1}^n \sum_{\bj \in \mathcal{C}_J}  \hat w_{i\bj}^{(t)} \zeta_{i\bj}  (\by_i + 2\pi \bj - \bmu^{(t+1)}) (\by_i + 2\pi \bj - \bmu^{(t+1)})^\top. 
	\end{align*}
	where $\zeta_{i\bj} =\frac{h^\prime(d_{i\bj}^2)}{h(d_{i\bj}^2)}$.
	It is worth noting that $\hat w_{i\bj}^{(t)} \zeta_{i\bj}^{(t)} = v_{i\bj}^{(t)} $, proving that the EM is a principled solution to solve the fixed point equations in (\ref{MLE}) according to the updating scheme
	
		\begin{align*}
		\bmu^{(t+1)} &= \frac{ \sum_{i=1}^n \sum_{\bj \in \mathcal{C}_J} \hat v_{i\bj}^{(t)} (\by_i + 2\pi \bj)}{\sum_{i=1}^n \sum_{\bj \in \mathcal{C}_J} \hat v_{i\bj}^{(t)} }, \\
		\bSigma^{(t+1)} &= -\frac{2}{n}\sum_{i=1}^n \sum_{\bj \in \mathcal{C}_J} \hat v_{i\bj}^{(t)} (\by_i + 2\pi \bj - \bmu^{(t+1)}) (\by_i + 2\pi \bj - \bmu^{(t+1)})^\top. 
	\end{align*}
	It is straightforward to obtain $-2\zeta_{i\bj}^{(t)}=1$ in the WN case, then simplifying the M-step updates to 
	
	\begin{align*}
		\bmu^{(t+1)} &= \frac{1}{n} \sum_{i=1}^n \sum_{\bj \in \mathcal{C}_J} \hat w_{i\bj}^{(t)} (\by_i + 2\pi \bj), \\
		\bSigma^{(t+1)} &= \frac{1}{n} \sum_{i=1}^n \sum_{\bj \in \mathcal{C}_J} \hat w_{i\bj}^{(t)} (\by_i + 2\pi \bj - \bmu^{(t+1)}) (\by_i + 2\pi \bj - \bmu^{(t+1)})^\top. 
	\end{align*}
	
	\subsection{MLE for elliptically symmetric distributions with missing values}
	\label{sec:em_normal}
	
	The family of elliptically symmetric distributions forms the backbone of the WES model, and its conditional properties \citep[][Ch. 2]{fang1990symmetric} are central to our treatment of missing torus data.  
	Partition $\bx_i = (\bx_{i,\text{obs}}, \bx_{i,\text{mis}})$ according to observed and missing entries and, analogously,
	\[
	\bmu = (\bmu_{\text{obs}}, \bmu_{\text{mis}}),
	\quad
	\bSigma = \begin{pmatrix}
		\bSigma_{\text{obs},\text{obs}} & \bSigma_{\text{obs},\text{mis}} \\
		\bSigma_{\text{mis},\text{obs}} & \bSigma_{\text{mis},\text{mis}}
	\end{pmatrix}.
	\]
	Under the ignorability assumptions discussed in Section~\ref{sec:intro}, the observed-data log-likelihood is
	\begin{equation} \label{eq:observed_likelihood_normal}
		\ell(\bmu, \bSigma \mid \bx_{\text{obs}}) = \sum_{i=1}^n \log m_{p_i}(\bx_{i,\text{obs}}; \bmu_{i,\text{obs}}, \bSigma_{i,\text{obs},\text{obs}}),
	\end{equation}
	where $p_i = \dim(\bx_{i,\text{obs}})$ and $m_{p_i}(\bx_{i,\text{obs}}; \bmu_{i,\text{obs}}, \bSigma_{i,\text{obs},\text{obs}})$ denotes the marginal density of the observed components, which is still in the elliptically symmetric family.
	Direct maximization of \eqref{eq:observed_likelihood_normal} is cumbersome due to the observation-specific marginal distributions; the EM algorithm provides a computationally convenient alternative.
	
	Whenever $h$ is completely monotone, Bernstein's theorem \citep{widder1941laplace, feller1971introduction, fang1990symmetric} guarantees the Gaussian scale-mixture representation
	\begin{equation}
		h(u) = \int_0^\infty (2\pi)^{-p/2}\tau^{p/2}\,e^{-\tau u/2}\, dG(\tau), \quad X\mid \tau \sim N_p(\bmu,\bSigma/\tau),\ \ \tau\sim G,
		\label{eq:kelker}
	\end{equation}
	which suggests augmenting each observation with an unobserved scale $\tau_i$. 	
	Conditionally on $\tau_i$ the complete-data density is Gaussian, so that
	$$\bx_{i, \text{mis}}\mid \bx_{i, \text{obs}}, \tau_i\sim N_{p-p_i}\left(\bmu_{i,\text{mis}\mid \text{obs}}, \bSigma_{i,\text{mis}\mid \text{obs}} /\tau_i\right)$$ with
	\begin{eqnarray} \label{eq:cond_mean}
		\bmu_{\text{mis}\mid \text{obs}}&=& \bmu_{\text{mis}} + \bSigma_{\text{mis},\text{obs}} \bSigma_{\text{obs},\text{obs}}^{-1} (\bx_{\text{obs}} - \bmu_{\text{obs}}), \\ 
		\bSigma_{\text{mis}\mid \text{obs}} &=& \bSigma_{\text{mis},\text{mis}} - \bSigma_{\text{mis},\text{obs}} \bSigma_{\text{obs},\text{obs}}^{-1} \bSigma_{\text{obs},\text{mis}} \nonumber
	\end{eqnarray}
The complete-data log-likelihood is then
	\begin{eqnarray}
		\ell_c(\bmu,\bSigma) &=& -\frac{n}{2}\log|\bSigma| - \frac{1}{2}\sum_{i=1}^n \tau_i\,(x_i-\bmu)^\top\bSigma^{-1}(x_i-\bmu) \label{eq:complete-loglik} \\
		& = & - \frac{n}{2} \log |\bSigma| - \frac{1}{2} \textrm{trace}\left[ \bSigma^{-1} \sum_{i=1}^n\tau_i\left( \bC_{i} - 2 \bx_i \bmu^\top + \bmu \bmu^\top \right) \right] \nonumber 
	\end{eqnarray}
	where $\bC_i = \bx_i \bx_i^\top$, so that $	\ell_c$ is linear in the sufficient statistics
	\[
	T_0=  \sum_{i=1}^n \tau_i, \quad T_1 = \sum_{i=1}^n \tau_i\bx_i, \quad T_2 = \sum_{i=1}^n \tau_i\bx_i \bx_i^\top = \sum_{i=1}^n \tau_i\bC_i \ . 
	\]
	
	In the E-step, the expected value of the complete-data log-likelihood \eqref{eq:complete-loglik}, conditional on the observed data and the current parameter estimates $\bmu^{(t)}, \bSigma^{(t)}$, 
	reduces to computing $\hat{\tau}_i^{(t)} =\mathbb{E}[\tau_i\mid \bx_{i,\text{obs}}, \bmu^{(t)}, \bSigma^{(t)}]$ and 
	\begin{eqnarray} \label{eq:exp_obs} 		
	\mathbb{E}[\tau_i\bx_i \mid \bx_{i,\text{obs}}, \bmu^{(t)}, \bSigma^{(t)}]
	&=&\hat\tau_i^{(t)}(\bx_{i,\text{obs}},\hat\bx_{i,\text{mis}})  =\hat\tau_i^{(t)}\left(\bx_{i,\text{obs}}, \mu_{i,\text{mis}\mid \text{obs}}^{(t)}\right) \\
	&=&\hat\tau_i^{(t)} \hat\bx_i^{(t)}, \nonumber \\
	\mathbb{E}[\tau_i\bx_i \bx_i^\top \mid \bx_{i,\text{obs}}, \bmu^{(t)}, \bSigma^{(t)}] 
	&=& \hat\tau_i^{(t)} \begin{bmatrix}
			\bx_{i,\text{obs}} \bx_{i,\text{obs}}^\top & \bx_{i,\text{obs}} \hat \bx_{i,\text{mis}} ^\top \\
			\hat \bx_{i,\text{mis}} \bx_{i,\text{obs}}^\top & \left(\hat\tau_i^{(t)}\right)^{-1}\bSigma_{i, \text{mis}\mid \text{obs}} + \hat \bx_{i,\text{mis}} \hat \bx_{i,\text{mis}}^\top
		\end{bmatrix} \nonumber  \\
	&=& \hat\tau_i^{(t)} \hat{\bC}_i^{(t)} , \nonumber
	\end{eqnarray}
	where the subfix ${\it i}$ denotes that observed and missing components correspond to the entries of $\bx_i$.
	
	Evaluating the expectation in \eqref{eq:exp_obs} requires computing the first two conditional moments of the unwrapped vector $\bx_{i\bj} = (\bx_{i\bj,\text{obs}}, \bx_{i\bj,\text{mis}})$ given the observed components. 
	Crucially, the conditional expectation of the outer product $\bx_{i\bj,\text{mis}} \bx_{i\bj,\text{mis}}^\top$ is not simply the outer product of the conditional means, but must also include the conditional covariance.
	
	The expectation of the latent scale variable conditional on the observed components $\mathbb{E}[\tau_i\mid \bx_{i,\text{obs}}]$ depends on the marginal elliptical distribution induced by the observed coordinates. Let  \(r_i=p-p_i\) denote the number of missing components. Marginalization of an elliptically symmetric density yields the marginal generator \citep{fang1990symmetric, kano1994consistency}
	\begin{equation}
		\label{eq:kano}
		h_{p_i}(u)
		=
		\frac{\pi^{r_i/2}}{\Gamma(r_i/2)}
		\int_u^\infty
		(s-u)^{r_i/2-1}h(s)\,ds,
		\qquad r_i>0,
	\end{equation}
	with \(h_{p_i}=h\) when \(r_i=0\). This representation makes explicit that the marginal generator may depend on the missingness pattern. 
	In the Gaussian case, the marginal generator \eqref{eq:kano} reduces exactly to the Gaussian generator in the observed dimension, that is,	
	$$
	h_{p_i}(u)=(2\pi)^{-p_i/2}e^{-u/2}.
	$$
	Then, under the Gaussian scale-mixture representation, Bayes' rule gives
	\begin{equation}
			\label{eq:tau-hat-missing}
	E(\tau_i\mid x_{i,\mathrm{obs}})
	= \frac{\int_0^\infty \tau^{p_i/2+1}e^{-\tau d^2_{i,\text{obs}}/2} d\tau}
	{\int_0^\infty \tau^{p_i/2}e^{-\tau d^2_{i,\text{obs}}/2} d\tau }
	=
	-2
	\frac{
		h_{p_i}'(d_{i,\mathrm{obs}}^2)
	}{
		h_{p_i}(d_{i,\mathrm{obs}}^2)
	},
	\end{equation}
	where \(d_{i,\text{obs}}^2 = (\bx_{i,\text{obs}}-\bmu_{i,\text{obs}})^\top\bSigma_{i,\text{obs},\text{obs}}^{-1}(\bx_{i,\text{obs}}-\bmu_{i,\text{obs}})\) is the squared Mahalanobis distance computed from the observed components. Thus, for general elliptically symmetric models, the E-step requires the marginal generator corresponding to the observed dimension rather than the original \(p\)-dimensional generator. 
	
	The conditional expected value of the complete-data log-likelihood is then
	\begin{equation}
		Q(\theta\mid\theta^{(t)}) = -\frac{n}{2}\log|\bSigma| - \frac{1}{2}\operatorname{tr}\!\left[\bSigma^{-1}\left(\hat T_2^{(t)}- 2\hat T_1^{(t)}\bmu^\top+\hat T_0^{(t)}\bmu\bmu^\top\right)\right],
		\label{eq:Q-missing}
	\end{equation}
	with $\theta=(\bmu, \bSigma)$ and $\hat T_k$ ($k=0,1,2$) being the $T_k$ evaluated at $\hat \tau_i^{(t)}$ and $\hat x_i^{(t)}$, which is exactly quadratic in $(\bmu, \bSigma)$, yielding the closed-form M-step
	\begin{equation}
		\hat\bmu^{(t+1)}  = \frac{\hat T_1^{(t)}}{\hat T_0^{(t)} }, \quad \hat\bSigma^{(t+1)}  = \frac{1}{n}\left(\hat T_2^{(t)}  - \frac{\hat T_1^{(t)} \left(\hat T_1^{(t)}\right) ^\top}{\hat T_0^{(t)}} \right).
		\label{eq:M-step-missing}
	\end{equation}
	
	In particular, the Gaussian family is closed under marginalization and gives equivalent to unity $\tau_i$ irrespective of the missingness pattern, since the marginal generator is $h_{\text{obs}}(u)\propto\exp(-u/2)$ for every observed dimensions. 
	The Student $t_p(\nu)$ family is likewise closed under marginalization, with $h_{\text{obs}}(u)\propto(1+u/\nu)^{-(\nu+p_{\text{obs}})/2}$, i.e. a $t_{p_{\text{obs}}}(\nu)$ generator with the same $\nu$. Equation \eqref{eq:tau-hat-missing} then gives the closed-form weights
	\begin{equation}
		\tau_i = \frac{\nu+p_{i}}{\nu+d_{i,\text{obs}}^2}.
		\label{eq:t-weight-missing}
	\end{equation}
	
	For a generator $h$ not closed under marginalization, $h_{p_{\text{obs}}}$ in \eqref{eq:kano} 
	does not belong to the same parametric family as $h$ and must be evaluated explicitly 
	(analytically or by one-dimensional quadrature) for each distinct missingness pattern occurring in the data.
	
	\section{EM for WES estimation with missing torus values}
	\label{sec:methodology} 
	
	The conditional formulas \eqref{eq:cond_mean} and the expectations in \eqref{eq:exp_obs} within the scale mixture representation  in \eqref{eq:kelker} are the key building blocks that will be extended to the WES setting, where the wrapping coefficients constitute an additional latent layer.
	
	Under the ignorability assumptions stated in Section~\ref{sec:intro}, inference can be based on the observed-data loglikelihood
	\begin{equation} \label{eq:observed_likelihood_wn}
		\ell(\bmu, \bSigma \mid \by_{\text{obs}}) = \sum_{i=1}^n \log m_{p_i}^\circ(\by_{i,\text{obs}}; \bmu_{i,\text{obs}}, \bSigma_{i,\text{obs}\text{obs}}),
	\end{equation}
	where $\by_{i,\text{obs}}$ denotes the observed torus components for the $i$-th observation and $m_{p_i}^\circ(\by_{i,\text{obs}}t; \bmu, \bSigma)$ is its marginal wrapped density, obtained by summing over all wrapping coefficients $\bj \in \mathbb{Z}^p$ but only over the observed dimensions. Direct maximization of \eqref{eq:observed_likelihood_wn} is intractable due to the two layers of latent structure. However, it is possible to recast the problem in an extended EM algorithm. 
	
	Let the complete data consist of the triples $(\by_{i,\text{obs}}, \by_{i,\text{mis}}, \bj_i)$, where it also holds that $\bj_i = (\bj_{i,\text{obs}}, \bj_{i,\text{mis}})$. Allowing the scale mixture representation in \eqref{eq:kelker} and introducing the further vector of unknowns $\btau$, the complete-data log-likelihood can be written
	\begin{align} \label{eq:complete_ll_wn_na}
		\ell_c(\bmu, \bSigma & \mid \by_{\text{obs}}, \by_{\text{mis}}, \bj, \btau) = \nonumber \\
		&=	-\frac{n}{2}\log|\bSigma|-\frac{1}{2} \sum_{i=1}^n \sum_{\bj\in\mathcal{C}_J} w_{i\bj} \tau_{i\bj} d^2_{i\bj}(\bmu, \bSigma)\\
		&=  - \frac{n}{2} \log |\bSigma| - \frac{1}{2} \textrm{trace}\left[ \bSigma^{-1} \sum_{i=1}^n \sum_{\bj\in\mathcal{C}_J} w_{i\bj}\tau_{i\bj}\left( \bC_{i\bj} - 2 \bx_{i\bj} \bmu^\top + \bmu \bmu^\top \right) \right] ,\nonumber
	\end{align}
	where the unknowns $\tau_{i\bj}$ are functions of the partial $d^2_{i\bj, \text{obs}}$ according to \eqref{eq:tau-hat-missing}.
	The EM algorithm can deliver maximum likelihood estimation by iteratively computing the expectation of \eqref{eq:complete_ll_wn_na} 
	with respect to the joint distribution of the latent variables $(\by_{\text{mis}}, \bj, \btau)$ given the observed data $\by_{\text{obs}}$. 
	The joint expectation over the latent space is factorized as
	\begin{equation} \label{eq:q_function_wn_na}
		Q(\bmu, \bSigma \mid \bmu^{(t)}, \bSigma^{(t)}) = 
		\mathbb{E}_{\bj} \mathbb{E}_{\by_{\text{mis}}, \btau \mid \bj} \left[ \ell_c(\bmu, \bSigma \mid \by_{\text{obs}}, \by_{\text{mis}}, \bj)\mid \by_{\text{obs}}, \bmu^{(t)}, \bSigma^{(t)} \right] .
	\end{equation}
	The key insight is that, conditional on the scale mixture component and the wrapping coefficients $\bj$, the unwrapped observations $\bx_{i\bj}= \by_i + 2\pi \bj$ follow a multivariate normal distribution as in \eqref{eq:kelker}, with $\bx_{i\bj} = (\bx_{i\bj,\text{obs}}, \bx_{i\bj,\text{mis}})$, as well.
	The inner conditional expectation is linear in the sufficient statistics $T_{0, \bj}=\sum_{i=1}^n \tau_{i\bj} , T_{1,\bj} = \sum_{i=1}^n \tau_{i\bj} \bx_{i\bj}$ and $T_{2,\bj} = \sum_{i=1}^n \tau_{i\bj} \bx_{i\bj} \bx_{i\bj}^\top=\sum_{i=1}^n \tau_{i\bj} \bC_{i\bj}$. Therefore, its evaluation only requires using  the first two conditional moments in \eqref{eq:exp_obs}, given the candidate wrapping coefficient vector.
	
	The outer expectation over the wrapping coefficients is delivered computing the conditional probabilities in \eqref{eq:posterior_weights} but only on the observed components. These are given by
	\begin{equation} \label{eq:posterior_weights_obs}
		\hat w_{i\bj, \text{obs}}^{(t)} = \frac{m_{p_i}(\by_{i,\text{obs}} + 2\pi \bj_{\text{obs}}; \bmu_{\text{obs}}^{(t)}, \bSigma_{\text{obs},\text{obs}}^{(t)})}{\sum_{\bk \in \mathcal{C}_J} m_{p_i}(\by_{i,\text{obs}} + 2\pi \bk_{\text{obs}}; \bmu_{\text{obs}}^{(t)}, \bSigma_{\text{obs},\text{obs}}^{(t)})}, \qquad \bj \in \mathcal{C}_J,
	\end{equation}
after integrating out the missing dimensions. Actually, marginalization over the missing toroidal components implies that only wrapping configurations associated with the observed components enter the observed-data likelihood.
	Specifically, the integration over the missing dimensions in the unwrapped space causes the cardinality of $\mathcal{C}_J$ to collapse into $(2J+1)^{p_{\text{obs}}}$ distinguishable marginal configurations.
	Note that when all components are observed, \eqref{eq:posterior_weights_obs} reduces to the standard complete-data weights \eqref{eq:posterior_weights}.
	Then, given $\hat{\bx}_{i\bj}^{(t)} = (\bx_{i\bj,\text{obs}}, \hat{\bx}_{i\bj,\text{miss}}^{(t)}) =  (\bx_{i\bj,\text{obs}},\mu_{i\bj,\text{miss}|\text{obs}}^{(t)})$, we obtain
	\begin{align*}
			\hat T_0^{(t)} &= \sum_{i=1}^n \sum_{\bj \in \mathcal{C}_J} \hat w_{i\bj, \text{obs}}^{(t)}\hat\tau_{i\bj}^{(t)} \\
		\hat T_1^{(t)} &=\sum_{i=1}^n \sum_{\bj \in \mathcal{C}_J} \hat w_{i\bj, \text{obs}}^{(t)}\hat\tau_{i\bj}^{(t)} \, \hat{\bx}_{i\bj}^{(t)}, \\
		\hat T_2^{(t)} &=\sum_{i=1}^n \sum_{\bj \in \mathcal{C}_J} \hat w_{i\bj, \text{obs}}^{(t)} \hat\tau_{i\bj}^{(t)}\, \hat{\bC}_{i\bj}^{(t)}
	\end{align*}
and M-step updates as in \eqref{eq:M-step-missing}. Notably, in the WN case the M-step reduces to computing
	\begin{align*}
		\bmu^{(t+1)} &= \frac{\sum_{i=1}^n \sum_{\bj \in \mathcal{C}_J} \hat w_{i\bj, \text{obs}}^{(t)} \, \hat{\bx}_{i\bj}^{(t)}}{n},\\
		\bSigma^{(t+1)} &= \frac{\sum_{i=1}^n \sum_{\bj \in \mathcal{C}_J} \hat w_{i\bj, \text{obs}}^{(t)} \, \hat{\bC}_{i\bj}^{(t)}}{n} - \bmu^{(t+1)} \bmu^{(t+1)\top}.
	\end{align*}
	The mean vector is always projected onto the torus by taking each component modulo $2\pi$, i.e., $\bmu^{(t+1)} \leftarrow \bmu^{(t+1)} \mod 2\pi$, to ensure identifiability. 
	More details on the  structure of the proposed EM algorithm with emphasis on the WN case are given in the Appendix, with a particular interest on computational complexity. It is worth to stress that the current implementation accounts for missingness patterns rather than casewise iterations in expectations evaluations during the E-step.
	
	The proposed algorithm is a genuine instance of the EM algorithm and therefore inherits its fundamental convergence properties. In particular, each iteration increases the observed-data log-likelihood, and under  standard regularity conditions the sequence of updates $\{\bmu^{(t)}, \bSigma^{(t)}\}_{t \geq 0}$ generated by the EM algorithm converges to a stationary point of the observed log-likelihood. These conditions are satisfied provided the covariance matrices remain positive definite and the vector means are confined to the torus. Moreover, it is worth noting that the proposed EM algorithm shares and expands the basic ideas of the solution first proposed in \cite{ghahramani1995learning} in the context of Gaussian mixture modeling with missing values.
	
	\subsection{Imputation}
	The WES  family of distributions is not closed under conditioning. 
	\begin{proposition}\label{propo1}
		Let $\by$ be a $p$-variate circular random variable whose density $m^\circ (\by; \bmu,\bSigma)$ is an element of the WES family. Let 
		$\by=(\by_1,  \by_2)$ with partial dimensions $p_1$ and $p_2$, respectively. The conditional distribution of $\by_2\mid \by_1$ is a mixture of circular distributions with density function of the form
		$$
		m^\circ(\by_2; \bmu_{2\mid 1}, \bSigma_{2\mid 1})=\sum_{\bj_1\in \mathbb{Z}^{p_1}} w_{\bj_1}m^\circ(\by_2; \bmu_{2\mid 1; \bj_1}, \bSigma_{2\mid 1})
		$$
		with 
		$$
		\bmu_{2\mid 1; \bj_1}= \bmu_2+\bSigma_{12}\bSigma_{11}^{-1}(\by_1 + 2\pi\bj_1 - \bmu_1), \quad \bSigma_{2\mid 1}=\bSigma_{22}-\bSigma_{21}\bSigma_{11}^{-1} \bSigma_{21}^\top
		$$
		and
	$$
		w_{\bj_1} = \frac{m_{p_1}(\by_{1} + 2\pi \bj_1; \bmu_{1}, \bSigma_{1,1})}
		{\sum_{\bj_1 \in \mathcal{Z}^{p_1}} m_{p_1}(\by_{1} + 2\pi \bj_{1}; \bmu_{1}, \bSigma_{1,1})}
	$$
		\end{proposition}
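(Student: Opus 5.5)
The plan is to work entirely on the unwrapped space and use the representation \eqref{eq:wn_density} of the joint WES density as a lattice sum. Write $\bj=(\bj_1,\bj_2)$ with $\bj_1\in\mathbb{Z}^{p_1}$ and $\bj_2\in\mathbb{Z}^{p_2}$, and set $\bx_1=\by_1+2\pi\bj_1$, $\bx_2=\by_2+2\pi\bj_2$. Then
$$
m^\circ(\by_1,\by_2;\bmu,\bSigma)=\sum_{\bj_1\in\mathbb{Z}^{p_1}}\sum_{\bj_2\in\mathbb{Z}^{p_2}} m_p(\by_1+2\pi\bj_1,\by_2+2\pi\bj_2;\bmu,\bSigma).
$$
All terms are nonnegative, so Tonelli's theorem lets us reorder and split the double sum freely.

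\textbf{Step 1: factor each term.} Factor each summand into a marginal part and a conditional part in the Euclidean space:
$$
m_p(\bx_1,\bx_2;\bmu,\bSigma)=m_{p_1}(\bx_1;\bmu_1,\bSigma_{11})\,m_{p_2\mid 1}(\bx_2\mid \bx_1).
$$
By the conditional properties of elliptically symmetric laws used in Section~\ref{sec:em_normal} (see \eqref{eq:cond_mean} and \citealp{fang1990symmetric}), the factor $m_{p_2\mid1}(\cdot\mid\bx_1)$ is elliptically symmetric. Its location is $\bmu_2+\bSigma_{21}\bSigma_{11}^{-1}(\bx_1-\bmu_1)$, which is $\bmu_{2\mid1;\bj_1}$ in the statement (the $\bSigma_{12}$ there should be read as $\bSigma_{21}$), and its scale matrix is $\bSigma_{2\mid1}$. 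Here $m_{p_1}$ is the marginal density with generator \eqref{eq:kano}.

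\textbf{Step 2: compute the marginal of $\by_1$.} Summing the joint density over $\by_2\in\mathbb{T}^{p_2}$ is the same as integrating over $\bx_2\in\mathbb{R}^{p_2}$, because the translates $[0,2\pi)^{p_2}+2\pi\bj_2$ tile $\mathbb{R}^{p_2}$. This gives
$$
m^\circ_{p_1}(\by_1)=\sum_{\bj_1} m_{p_1}(\by_1+2\pi\bj_1;\bmu_1,\bSigma_{11}).
$$

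\textbf{Step 3: sum out $\bj_2$ and divide.} For fixed $\bj_1$, summing $m_{p_2\mid1}(\by_2+2\pi\bj_2\mid\bx_1)$ over $\bj_2$ is exactly the wrapping \eqref{eq:wn_density} of the conditional elliptical law, namely $m^\circ(\by_2;\bmu_{2\mid1;\bj_1},\bSigma_{2\mid1})$. Dividing the joint density by the marginal from Step 2 gives
$$
\frac{\sum_{\bj_1} m_{p_1}(\by_1+2\pi\bj_1)\, m^\circ(\by_2;\bmu_{2\mid1;\bj_1},\bSigma_{2\mid1})}{\sum_{\bk_1} m_{p_1}(\by_1+2\pi\bk_1)}.
$$
This is the stated mixture, with weights $w_{\bj_1}$ equal to the normalized marginal terms; they are the analogue of \eqref{eq:posterior_weights_obs}. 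The weights are nonnegative and sum to one, and each component is a probability density on $\mathbb{T}^{p_2}$, so the mixture is a genuine circular density.

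\textbf{Main obstacle.} The delicate point is Step 1 outside the Gaussian case. For a general generator $h$, the conditional elliptical law has generator $h(\cdot+d_1^2)/h_{p_1}(d_1^2)$, where $d_1^2=(\bx_1-\bmu_1)^\top\bSigma_{11}^{-1}(\bx_1-\bmu_1)$. This generator therefore depends on $\bj_1$ through $d^2_{1,\bj_1}$, not only through the location. I would state the mixture with component generators indexed by $\bj_1$, and note that this notational dependence disappears in the WN case, where the conditional generator is again $\exp(-t/2)$. Alternatively, I would condition additionally on the scale variable $\tau$ via \eqref{eq:kelker} and reduce to the Gaussian computation. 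Apart from this point, the argument only requires nonnegativity and the tiling identity for exchanging the lattice sums with the integral.
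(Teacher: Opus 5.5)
Your proof is correct and follows the same route as the paper: write the joint wrapped density as a double lattice sum over $(\bj_1,\bj_2)$, factor each unwrapped term as the marginal density of $\by_1+2\pi\bj_1$ times the conditional elliptical density, sum out $\bj_2$, and divide by the lattice-summed marginal of $\by_1$. Your additions go beyond the paper's one-line proof and are sound: the Tonelli and tiling justification of the marginal, the correction of $\bSigma_{12}$ to $\bSigma_{21}$, and the observation that outside the WN case the conditional generator (hence the component law) depends on $\bj_1$ through $d^2_{1,\bj_1}$.
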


\begin{proof}
	
	The conditional density is obtained as the ratio of the joint wrapped density to the marginal wrapped density:
	\begin{align*}
	m^\circ(\by_{2} \mid \by_{1}) &= 
	\frac{
		\sum_{\bj_{1}}
		\sum_{\bj_{2}}
		m_p(
		\by_{1} + 2\pi \bj_{1},\,
		\by_{2} + 2\pi \bj_{2};
		\bmu, \bSigma
		)
	}{
		\sum_{\bj_{1}}
		m_{p_{1}}(
		\by_{1} + 2\pi \bj_{1};
		\bmu_{1},
		\bSigma_{1,1}
		)
	}\\
	&= \frac{
		\sum_{\bj_{1}}
		\sum_{\bj_{2}}
		m_{p_2}(
		\by_{2} + 2\pi \bj_{2};
		\bmu_{2\mid 1}, \bSigma_{2\mid 1}
		)
		m_{p_1}(
		\by_{1} + 2\pi \bj_{1};
		\bmu_{1}, \bSigma_{1,1}
		)
	}{
		\sum_{\bj_{1}}
		m_{p_{1}}(
		\by_{1} + 2\pi \bj_{1};
		\bmu_{1},
		\bSigma_{1,1})
	} \ . \\
	\end{align*}
	Carrying out the summation over \( \bj_{2} \) yields the result.
	\end{proof}
	
	The weights in Proposition \ref{propo1} are those given in \eqref{eq:posterior_weights_obs}.
	Therefore, model-based imputations of the missing multivariate circular data requires averaging over all plausible wrapping configurations, respecting the true mixture nature of the conditional distribution. 
	This ensures that the imputed values are coherent with the underlying torus geometry and with the estimated covariance structure. Principled conditional imputations are
	 obtained by reducing the average imputed value at convergence modulo $2\pi$:
	\begin{equation} \label{eq:imputation}
		\hat{\by}_{i,\text{mis}} = \left( \sum_{\bj \in \mathcal{C}_J} \hat w_{i\bj}^{(\infty)} \, \hat{\bx}_{i\bj,\text{mis}}^{(\infty)} \right) \mod 2\pi,
	\end{equation}
	where the superscript $(\infty)$ denotes quantities evaluated at convergence.
		
	Multiple imputations \cite[see][for a similar solution in Gaussian mixture modeling]{di2007imputation} can be generated by drawing from the conditional distribution of $\bx_{i,\text{mis}}^{(\infty)}$ given $\bx_{i,\text{obs}}$, using the final parameter estimates and averaging over the posterior distribution of the wrapping coefficients, by paralleling \eqref{eq:imputation}. This provides a flexible tool for subsequent inference that accounts for imputation uncertainty  \citep{van2018flexible}. 

	\section{Numerical studies}
	\label{sec:simulations}
	In this section, we investigate the finite sample behaviour of the proposed MLE evaluated through the EM algorithm. We compare our proposal with the following baselines: the MLE over the complete data before missingness is introduced; the MLE evaluated after discarding cases with missing entries; and the MLE obtained after imputing the data using the column-wise circular mean.  The EM algorithm is initialized from the MLE evaluated over complete cases. 
	Data are sampled from a $p-$variate WN with null mean vector and variance-covariance matrix $\bSigma=\bm{D}^{1/2} \bm{R} \bm{D}^{1/2}$, where $\bm{R}$ is a random correlation matrix with condition number set equal to $20$ and $\bm{D}=\sigma\mathrm{I}_p$. 
	We set the sample size $n=250$, number of dimensions $p\in \{2,5\}$, and $\sigma \in \{ \pi/8, \pi/4 \}$.
	Incomplete data with missing values are obtained using the function {\tt ampute} available from the {\tt R} package {\tt mice} \citep{mice}. The
	rate of missing values is set equal to $\epsilon_{NA} \in \{0.1, 0.2, 0.3, 0.4 \}$. The fractions refer to the number of rows with NA entries or NA cells. Both missing data mechanisms, MCAR and MAR are considered. In the situation with $p=5$, four missingness patterns are allowed with equal probabilities, including one to four missing values; the first variable is supposed to be completely observable.  Moreover, the missingness patterns under MAR only depend on the values of the first variable.
	The situation with $40\%$ of cellwise missing values has not been considered since it leads to a very few number of complete rows that badly compromise initialization.
	Fitting accuracy is evaluated according to
	\begin{itemize}
		\item[(i)] the chord circular distance
		\begin{equation*}
			\Delta(\hat{\bmu}, {\bf 0}) =\sum_{h=1}^p\left(\sqrt{2 (1 - \cos(\hat\mu_h))}\right)
		\end{equation*}
		\item[(ii)] the divergence
		\begin{equation*}
			\Delta(\hat{\bSigma}, \bSigma) = \textrm{trace}(\hat{\bSigma}\bSigma^{-1})-\log(\textrm{det}(\hat{\bSigma}\bSigma^{-1}))-p.
		\end{equation*}
	\end{itemize}
	averaged over $N = 500$ Monte Carlo trials. All the Figures are given in the Supplementary materials.
	Across the considered simulation settings, the proposed estimator generally exhibits a better performance than the benchmark procedures, with the largest gains observed for covariance estimation under cellwise missingness.
	Similar performances are observed under both MCAR and MAR mechanisms and both casewise and cellwise missingness rates. The effect of missing values is enhanced under MCAR and cellwise missingness rates.
	Missing values largely affect variance-covariance estimation: deleting cases with missing entries or imputing them with marginal means can be very misleading. Moreover, under MAR also mean estimation can be particularly negatively affected. 
	On the other hand, the proposed EM algorithm always performs satisfactory and improves over the other methods in all considered settings, leading to reliable fitting when compared with the MLE based on complete data.

	\section{Empirical applications}
	\label{sec:applications}
	
	We consider a subset of the TIM8 protein data available from the R package {\tt BAMBI} \citep{bambi}, which comprises $n = 241$ pairs of backbone dihedral angles for the enzyme Triose Phosphate Isomerase. A proportion of $30\%$ rows with missing values was introduced under both the MCAR and MAR mechanism. We compare the MLE obtained from the original complete data with those derived from the proposed EM algorithm, the MLE obtained after discarding incomplete cases and the MLE based on imputed data using marginal circular means.
	
	Figure \ref{fig:a} displays the original data with missing values highlighted as filled black circles, presented in the form of a Ramachandran plot over the torus $[-\pi, \pi) \times [-\pi, \pi)$. Superimposed $95\%$ tolerance ellipses facilitate visual comparison among the fitted models, particularly with respect to the complete-data MLE. Further insights into the comparative performance, effectiveness, and superiority of the proposed EM-based MLE 
	 are provided in Table \ref{tab:a}, which reports the chord circular distance $\Delta(\hat\bmu-\hat\bmu_{MLE})$ between the fitted mean vectors obtained under each incomplete-data approach and the complete-data MLE, along with the divergence between the corresponding fitted covariance matrices $\Delta(\hat\bSigma, \hat\bSigma_{MLE})$, averaged over a hundred replications, along with standard errors. 
	
	Figure \ref{fig:b} illustrates the imputed data obtained using both conditional (unwrapped) means and random draws from the conditional normal distribution. The results indicate the extent to which random imputation preserves the original data structure more faithfully than conditional mean imputation.

	\begin{figure}
		\centering
		\includegraphics[scale=0.35]{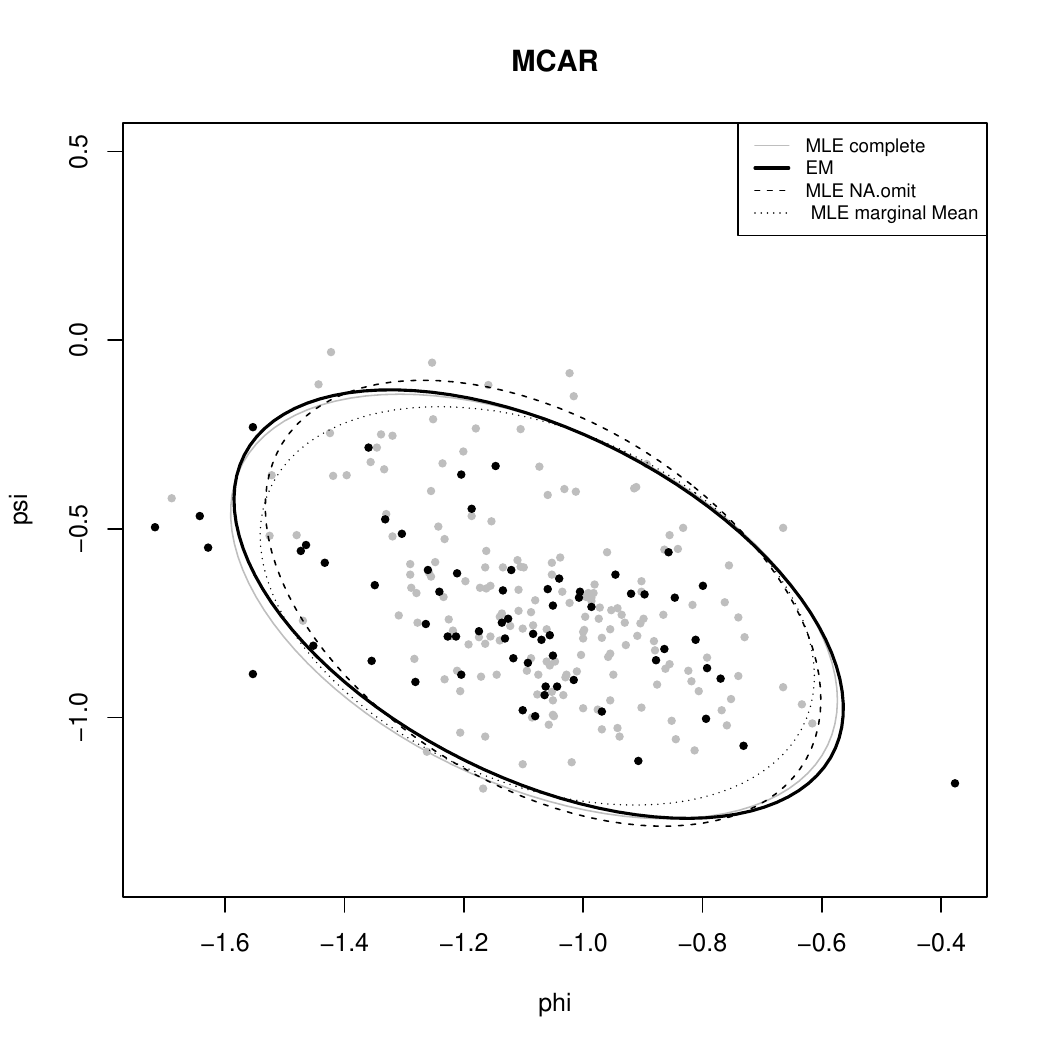}
		\includegraphics[scale=0.35]{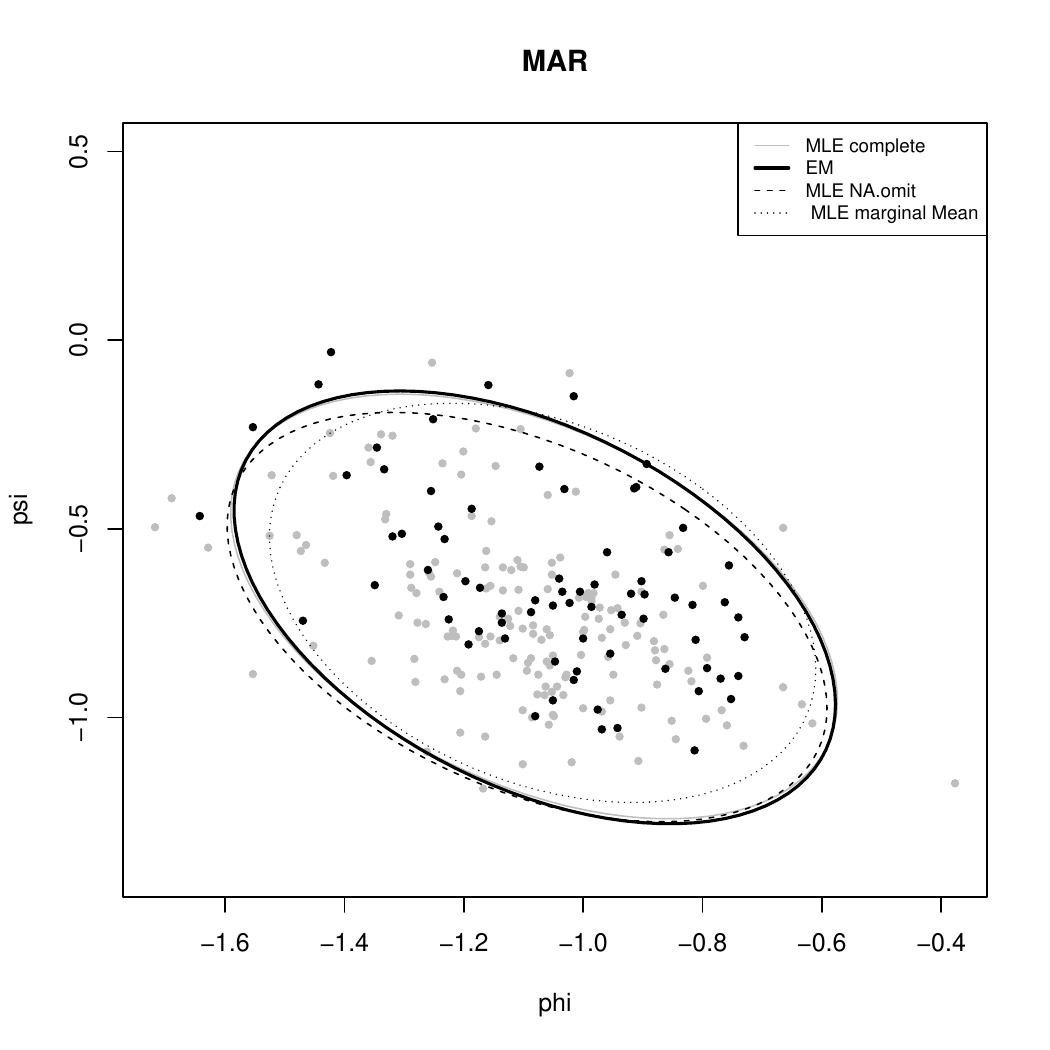}
		\caption{TIM8 data. Original and missing values (in black) with $95\%$ tolerance ellipses overimposed from different methods. }
		\label{fig:a}
	\end{figure}
	
	\begin{figure}
		\centering
		\includegraphics[scale=0.325]{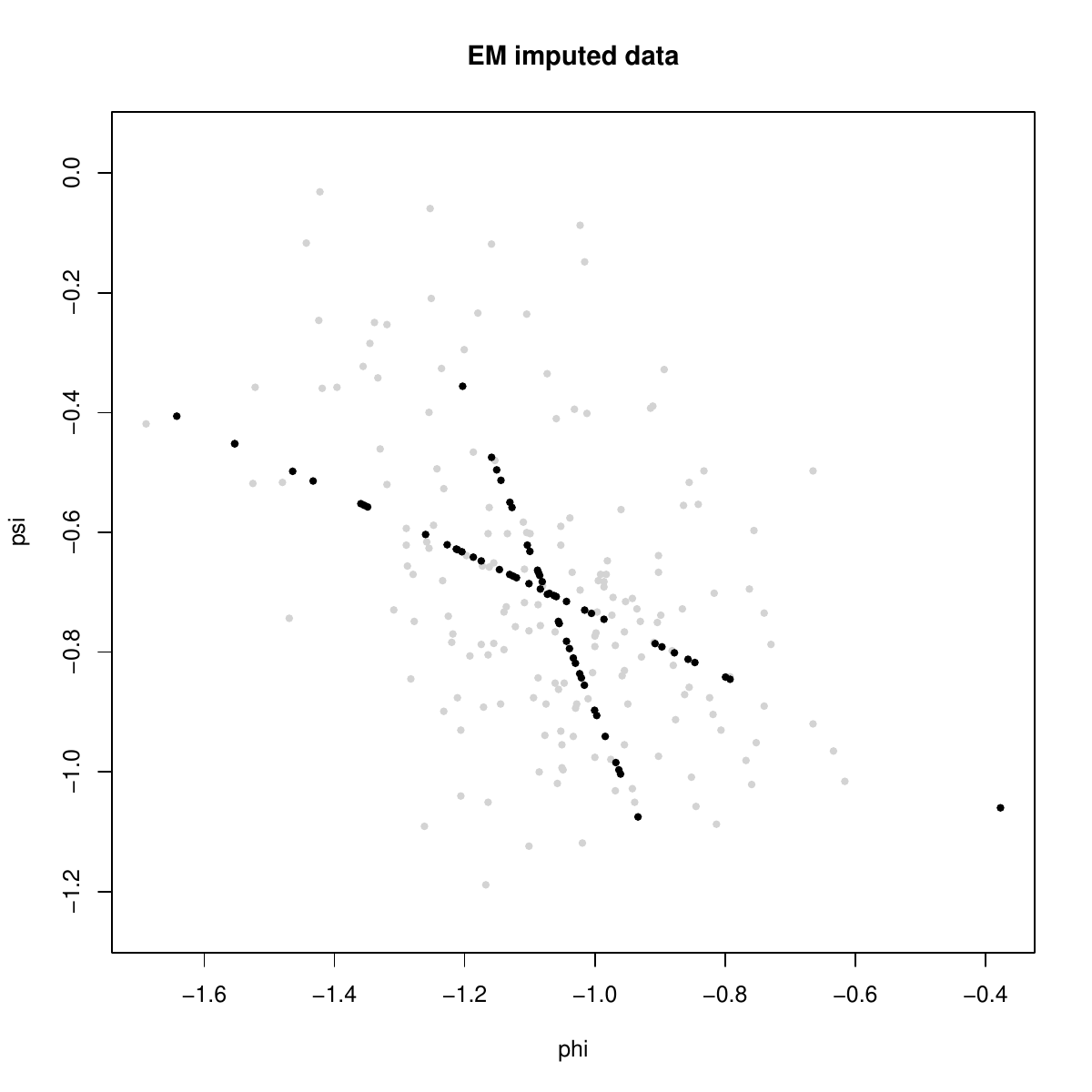}
		\includegraphics[scale=0.325]{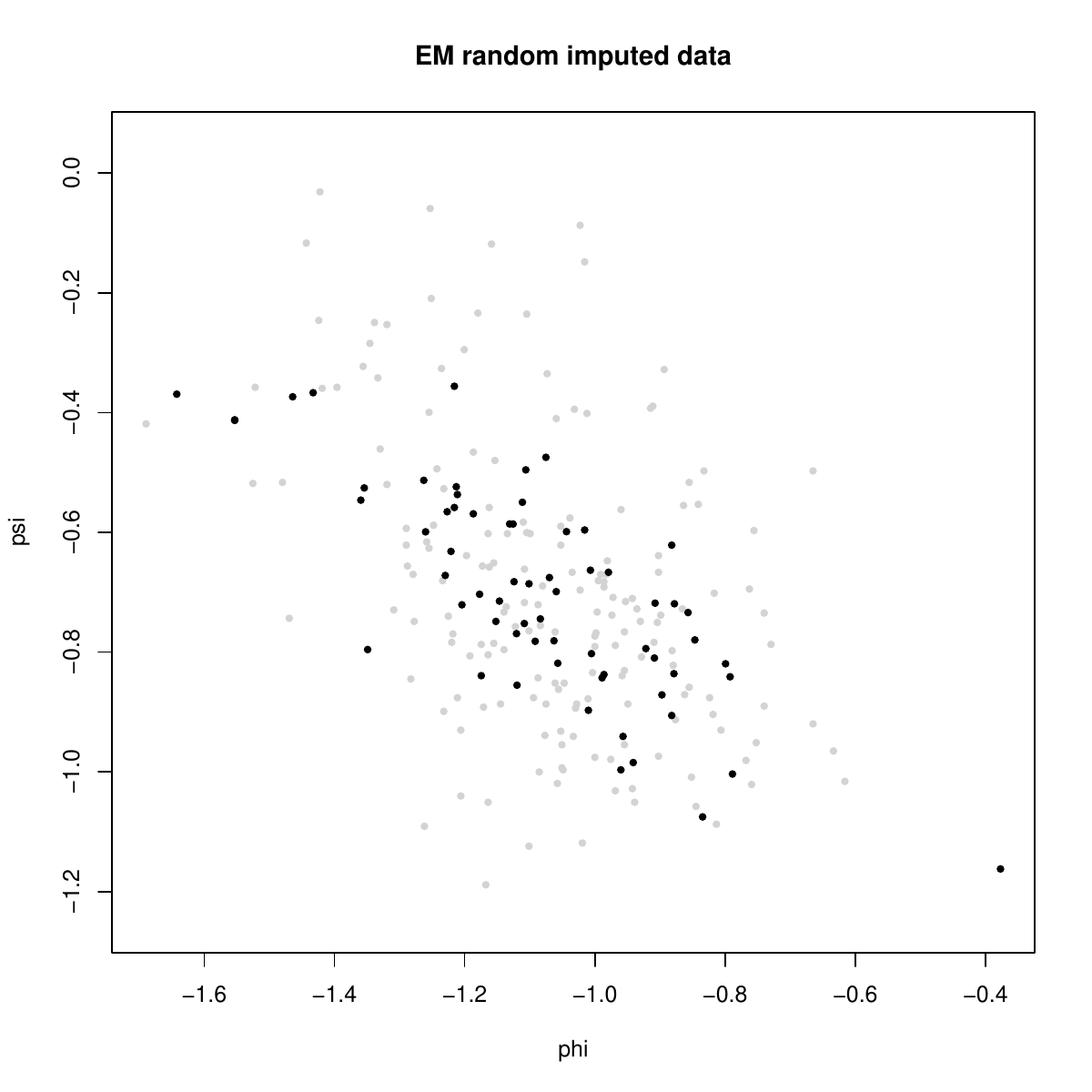}\\
		\includegraphics[scale=0.325]{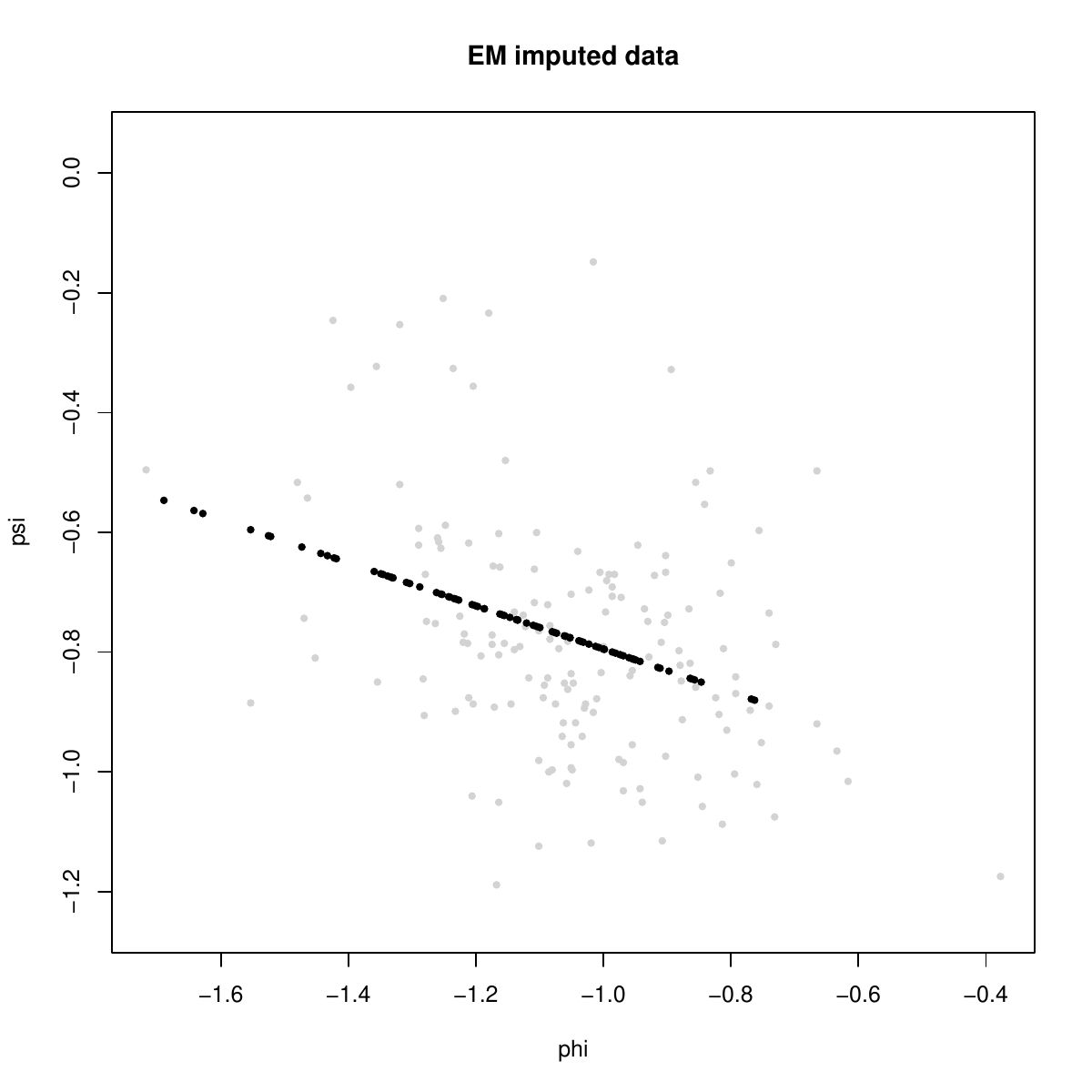}
		\includegraphics[scale=0.325]{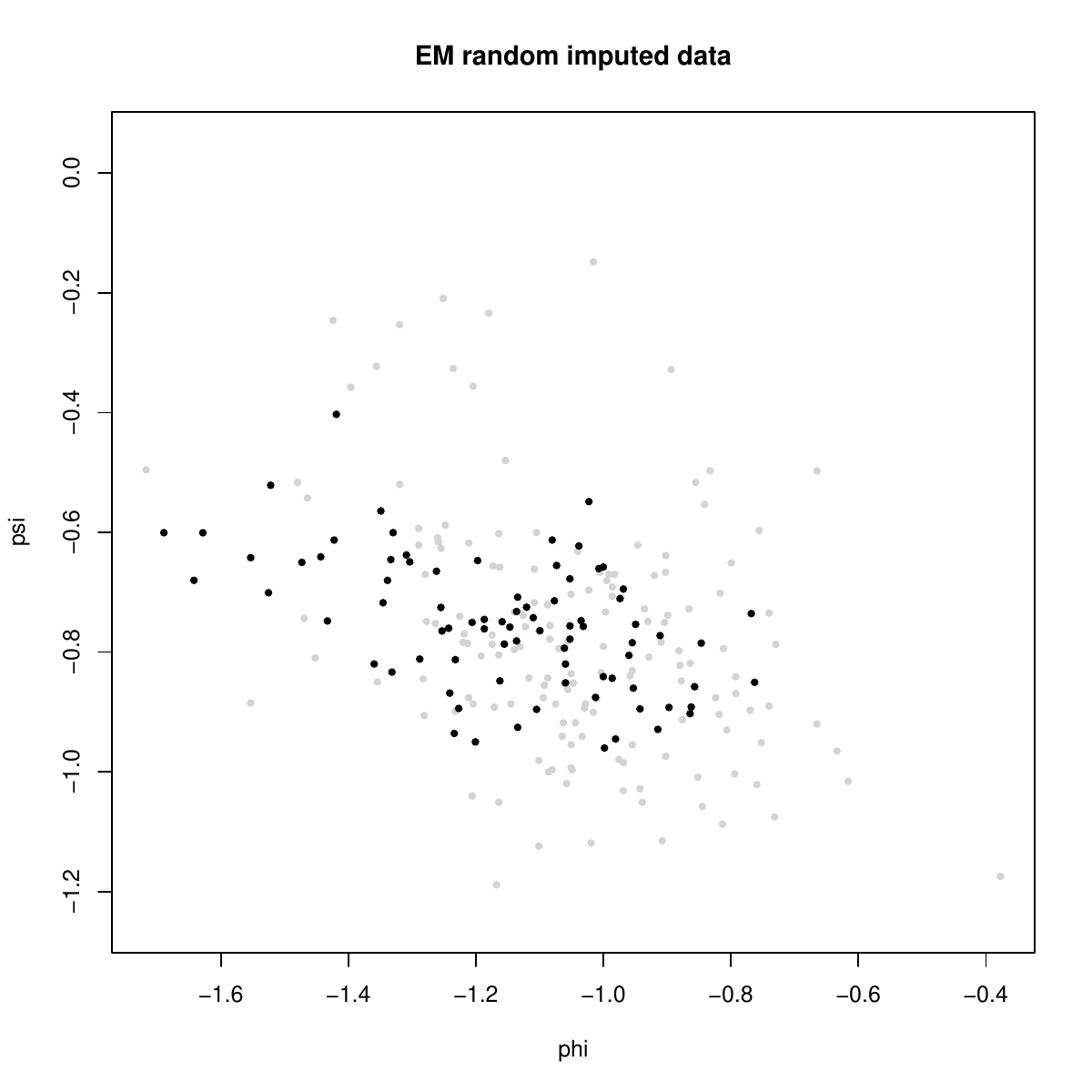}\\
		\caption{TIM8 data. Imputed data using conditional means (left) or random draws from the conditional normal (right) under MCAR (top panels) and MAR (bottom panels).}
		\label{fig:b}
	\end{figure}
	
	\begin{table}
		\centering
		\begin{tabular}{r|cc|cc}
			& \multicolumn{2}{c}{MCAR}&\multicolumn{2}{c}{MAR}\\
			Method & $\Delta(\hat\bmu-\hat\bmu_{MLE})$ & $\Delta(\hat\bSigma, \hat\bSigma_{MLE})$& $\Delta(\hat\bmu-\hat\bmu_{MLE})$ & $\Delta(\hat\bSigma, \hat\bSigma_{MLE})$\\
			\hline
			EM     & 0.0064 & 0.0035 & 0.0529 & 0.0551 \\
			       &(0.0049)&(0.0029)&(0.0086)&(0.0277)\\
			NA.omit& 0.0119 & 0.0061 & 0.0870 & 0.0590 \\
			       &(0.0093)&(0.0053)&(0.0143)&(0.0279)\\
			Marg. mean& 0.0070 & 0.0277 & 0.0626 & 0.2037 \\
			          &(0.0050)&(0.0109)&(0.0091)&(0.0543)\\
			\hline
		\end{tabular}
		\caption{TIM8 data. Comparison between incomplete and complete MLE for different methods and missing mechanisms. Values averaged over 200 replications (with standard errors in brackets).}
		\label{tab:a}
	\end{table}
	
	\section{Conclusions}
	\label{sec:discussion}	
	In this paper, we introduced a methodological framework for the estimation and imputation of torus data in the presence of ignorable missing values. We developed a general methodology within the Wrapped Elliptically Symmetric family of distributions with completely monotone generators. By leveraging the conditional properties of the elliptically symmetric distributions in the unwrapped space, we successfully embedded the handling of missing responses directly within an EM algorithm. This approach treats both the wrapping coefficients and the missing entries (and scale mixtures) as latent variables simultaneously.
	The proposed methodology respects the periodic nature of the $p$-dimensional torus, avoiding the distortions typical of Euclidean methods applied to multivariate circular data. 
	The methodology performed satisfactory both on numerical studies and empirical applications.
	One limitation lies in the algorithm becoming computationally demanding in high-dimensional settings: future research may explore different directions and the use of parallelization.

	\bibliographystyle{plainnat}

	\appendix
	\section{Appendix}
	\label{appendix}
	\subsection*{Algorithm}
	Algorithm~\ref{alg:em_wn_na} summarizes the proposed EM algorithm discussed in Section \ref{sec:methodology} but limited to the Wrapped Normal specification. The algorithm and the subsequent discussion can be extend to the more general results developed in Section \ref{sec:methodology}. 
	Convergence is assessed by monitoring the relative change in the observed log-likelihood. It reduces exactly to the standard EM for complete-data wrapped normal estimation when no observations are missing. 
	
	The numerical implementation leverages a \textit{pattern-based optimization} strategy, moving beyond standard observation-wise iterations. By partitioning the dataset into $N_{\text{patterns}}$ groups sharing the same missingness configuration, we achieve significant computational gains. Specifically, the inversion of the observed covariance sub-matrix $\bm{\Sigma}_{\text{obs,obs}}$ and the derivation of the regression operator $\bm{\beta} = \bm{\Sigma}_{\text{mis,obs}} \bm{\Sigma}_{\text{obs,obs}}^{-1}$ are performed only once per pattern. This approach drastically reduces redundant high-dimensional matrix operations, a feature particularly beneficial when the sample size $n$ is large relative to the number of unique missingness tracks $N_{\text{patterns}}$.
	The computational complexity per iteration is thus refined to $O(N_{\text{patterns}} \cdot p^3 + n \cdot d \cdot p^2)$ where $p_{\text{obs}}\leq p$ is the actual number of observed variables for a given pattern. The term $d$ denotes the cardinality of $\mathcal{C}_J$ that	represents the most significant computational bottleneck due to its exponential dependence on the dimension. However, 
	notably, for a pattern with $p_{\text{obs}}$ observed entries and a truncation level $J$, the number of distinguishable wrapping configurations reduces to $d = (2J+1)^{p_{\text{obs}}}$. 
	The first term $O(N_{\text{patterns}} \cdot p^3)$ accounts for the pre-computation of pattern-specific linear operators, such as the inversion of $\bm{\Sigma}_{\text{obs,obs}}$, which is an $O(p_{\text{obs}}^3)$ process executed only once per unique missingness configuration. While the inversion is strictly necessary for imputation only when $p_{\text{obs}}<p$ , it is nonetheless required for density evaluations in the E-step for all patterns, including fully observed ones.
	The second term, $O(n \cdot d \cdot p^2)$, represents the E-step computations, including the weights evaluation and the calculation of expected sufficient statistics, occurring within the nested loops over observations and wrapping coefficients.
	
	To ensure numerical stability when dealing with sparse patterns or localized collinearity, we incorporate a spectral regularization step. The estimated covariance matrix $\bSigma$ is monitored at each M-step, and its eigenvalues $\lambda_i$ are thresholded such that $\lambda_i^{adj} = \max(\lambda_i, \epsilon)$, where $\epsilon$ is a small tolerance (e.g., $10^{-8}$). This safeguard ensures that the dispersion matrix on the torus remains strictly positive definite, preventing algorithmic instability while preserving the essential correlation structure of the unwrapped space. Furthermore, the inversion of the matrix $\bm{\Sigma}_{obs,obs}$ is avoided by computing the Cholesky decomposition:
	$\bm{\Sigma}_{obs,obs} = R^\top R$
	where $R$ is an upper triangular matrix. The regression coefficients $\bm{\beta} = \bm{\Sigma}_{mis,obs} \bm{\Sigma}_{obs,obs}^{-1}$ are then efficiently calculated by solving the equivalent triangular systems. In the implemented code, this corresponds to:
	$\bm{\beta}=\left( R^{-1} \left( R^{-\top} \bm{\Sigma}_{obs,mis} \right) \right)^\top$.
	This approach, based on using the \texttt{backsolve} function from the {\tt R} environment \citep{erre}, ensures numerical stability and exploits the $O(p^3)$ efficiency of triangular substitution, which is significantly faster than direct matrix inversion.
	
	\begin{algorithm}[h!]
		\caption{Pattern-optimized EM algorithm for multivariate wrapped normal estimation with missing values}
		\label{alg:em_wn_na}
		\begin{algorithmic}[1]
			\REQUIRE Observed data $\mathbf{Y}$ with missing entries, initial parameters $\bm{\mu}^{(0)}, \bm{\Sigma}^{(0)}$, truncation limit $J$, tolerance $\varepsilon > 0$.
			\STATE Set $t = 0$, $\mathcal{C}_J = \{-J,\ldots,J\}^p$.
			\WHILE{$\ell(\bm{\mu}^{(t+1)}, \bm{\Sigma}^{(t+1)}\mid \mathbf{Y}_{obs})/ \ell(\bm{\mu}^{(t)}, \bm{\Sigma}^{(t)}\mid \mathbf{Y}_{obs}) -1 > \varepsilon$}
			\STATE Identify unique missingness patterns $\mathcal{P} = \{P_1, \dots, P_M\}$.
			\STATE \textbf{E-step:}
			\STATE Initialize $\hat{T}_1^{(t)} = \mathbf{0}$, $\hat{T}_2^{(t)} = \mathbf{0}$.
			\FOR{each pattern $P_m \in \mathcal{P}$}
			\STATE Identify observed indices $obs$ and missing indices $mis$.
			\STATE Compute $\bm{\Sigma}_{obs,obs} = R^\top R$ and 
			$\bm{\beta} = \left( R^{-1} \left( R^{-\top}\bm{\Sigma}_{obs,mis} \right) \right)^\top$
			\STATE Compute $\bm{\Sigma}_{mis|obs}^{(t)} = \bm{\Sigma}_{mis,mis}^{(t)} - \bm{\beta} \bm{\Sigma}_{obs,mis}^{(t)}$.
			\FOR{each observation $i \in P_m$}
			\FOR{each wrapping $\mathbf{j} \in \mathcal{C}_J$}
			\STATE Compute weight $\hat{w}_{i\mathbf{j}}^{(t)}$ and unwrapped $\mathbf{x}_{i,obs}^{(t)} = \mathbf{y}_{i,obs} + 2\pi \mathbf{j}_{obs}$.
			\STATE Compute conditional mean: $\hat{\mathbf{x}}_{i,mis}^{(t)}= \bm{\mu}_{mis}^{(t)} + \bm{\beta} (\mathbf{x}_{i,obs}^{(t)} - \bm{\mu}_{obs}^{(t)})$.
			\STATE Accumulate $\hat{T}_1^{(t)} \leftarrow \hat{T}_1^{(t)} + \hat{w}_{i\mathbf{j}}^{(t)} \hat{\mathbf{x}}_{i}^{(t)}$.
			\STATE Accumulate $\hat{T}_2^{(t)} \leftarrow \hat{T}_2^{(t)} + \hat{w}_{i\mathbf{j}}^{(t)} \left( \hat{\mathbf{x}}_{i}^{(t)} (\hat{\mathbf{x}}_{i}^{(t)})^\top + \mathbf{V} \right)$, where $\mathbf{V}$ is the $p \times p$ matrix such that $\mathbf{V}= \bm{\Sigma}_{mis|obs}^{(t)}$ and $0$ otherwise.
			\ENDFOR
			\ENDFOR
			\ENDFOR
			\STATE \textbf{M-step:}
			\STATE $\bm{\mu}^{(t+1)} =\frac{1}{n}\hat{T}_1^{(t)} \mod 2\pi$
			\STATE $\bm{\Sigma}^{(t+1)} = \frac{1}{n}\hat{T}_2^{(t)} - \bm{\mu}^{(t+1)}\bm{\mu}^{(t+1)\top}$
			\STATE Symmetrize and regularize $\bm{\Sigma}^{(t+1)}$ via spectral thresholding: $\lambda_i = \max(\lambda_i, \epsilon)$.
			\STATE $t \leftarrow t + 1$.
			\ENDWHILE
			\ENSURE $\hat{\bm{\mu}} = \bm{\mu}^{(t)}, \hat{\bm{\Sigma}} = \bm{\Sigma}^{(t)}$.
		\end{algorithmic}
	\end{algorithm}

\section{Numerical studies}
In the bivariate setting, the results under MCAR are summarized in Figure \ref{fig:1} and Figure \ref{fig:2}, whereas those under MAR are given in Figure \ref{fig:3} and Figure \ref{fig:4}, when the missingness rate concerns the number of rows with missing entries. The results from the scenario with cellwise missingness rates are given in Figure \ref{fig:1a} and Figure \ref{fig:2a} under MCAR, and Figure \ref{fig:3} and Figure \ref{fig:4} under MAR. Figure \ref{fig:pattern_p2} aids the comprehension of the missingness patterns in action when $\epsilon_{NA}=0.4$: it is worth noting the differences between casewise and cellwise rates. 
The results concerning the five dimensional setting are given in Figure \ref{fig:5} and Figure \ref{fig:6}, under MCAR, whereas in Figure \ref{fig:7} and Figure \ref{fig:8} under MAR, when the missingness rate is casewise. Results related to cellwise missingness are given in Figure \ref{fig:5a} and Figure \ref{fig:6a} under MCAR, and Figure \ref{fig:7a} and Figure \ref{fig:8a} under MAR. Figure \ref{fig:pattern_p5} shows the considered missingness in the illustrative situation with $\epsilon_{NA}=0.4$.

\begin{figure}[!h]
	\includegraphics[scale=0.325]{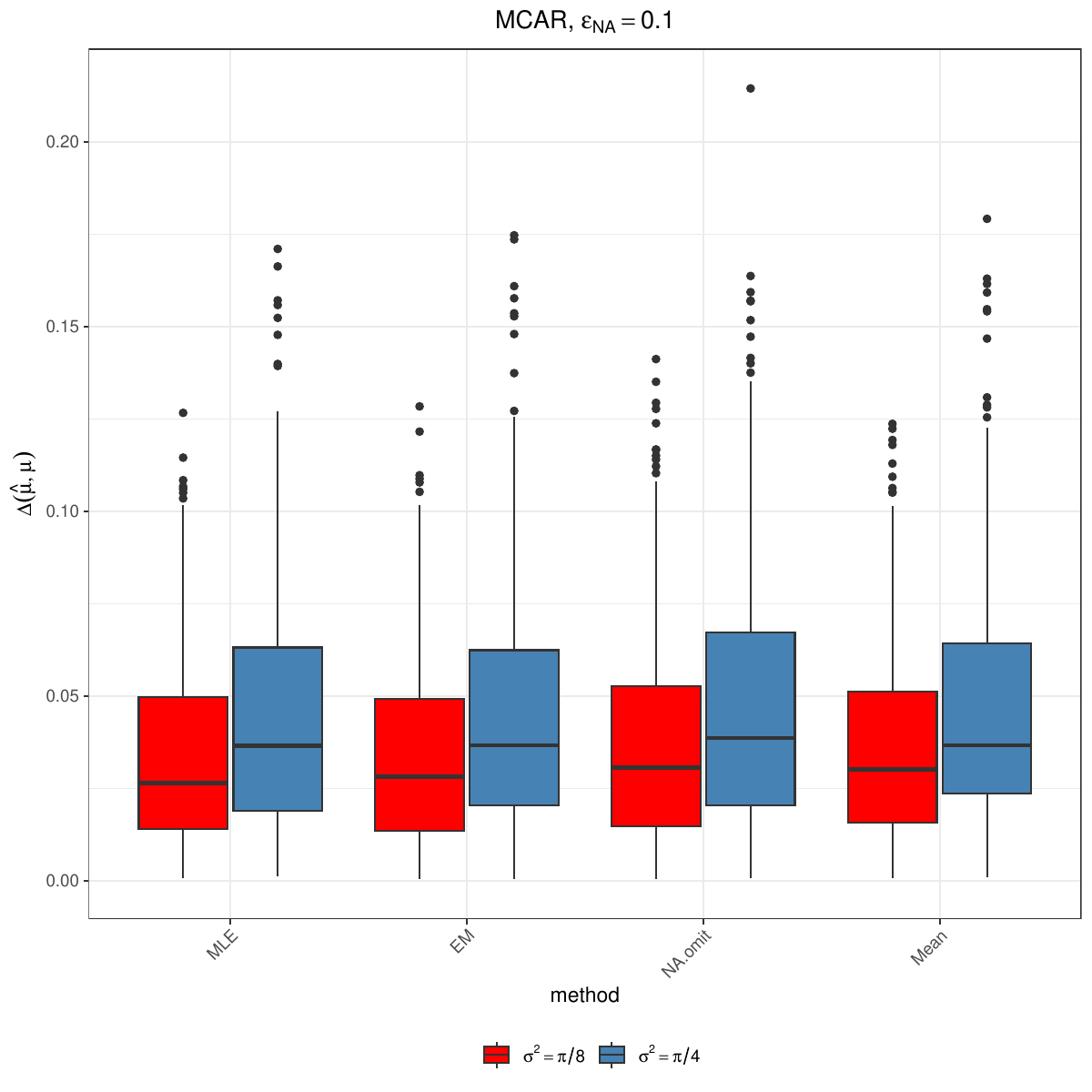}
	\includegraphics[scale=0.325]{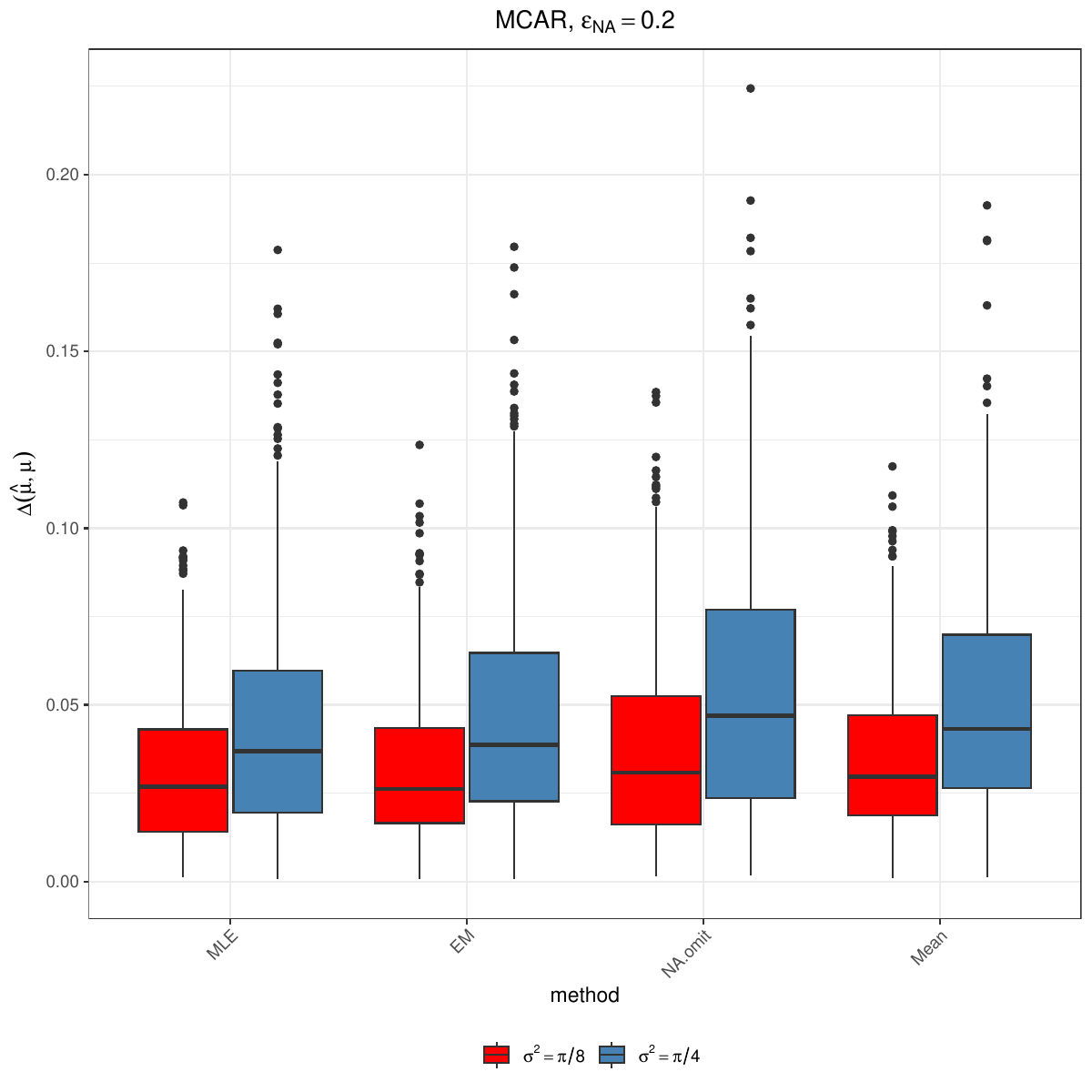}\\
	\includegraphics[scale=0.325]{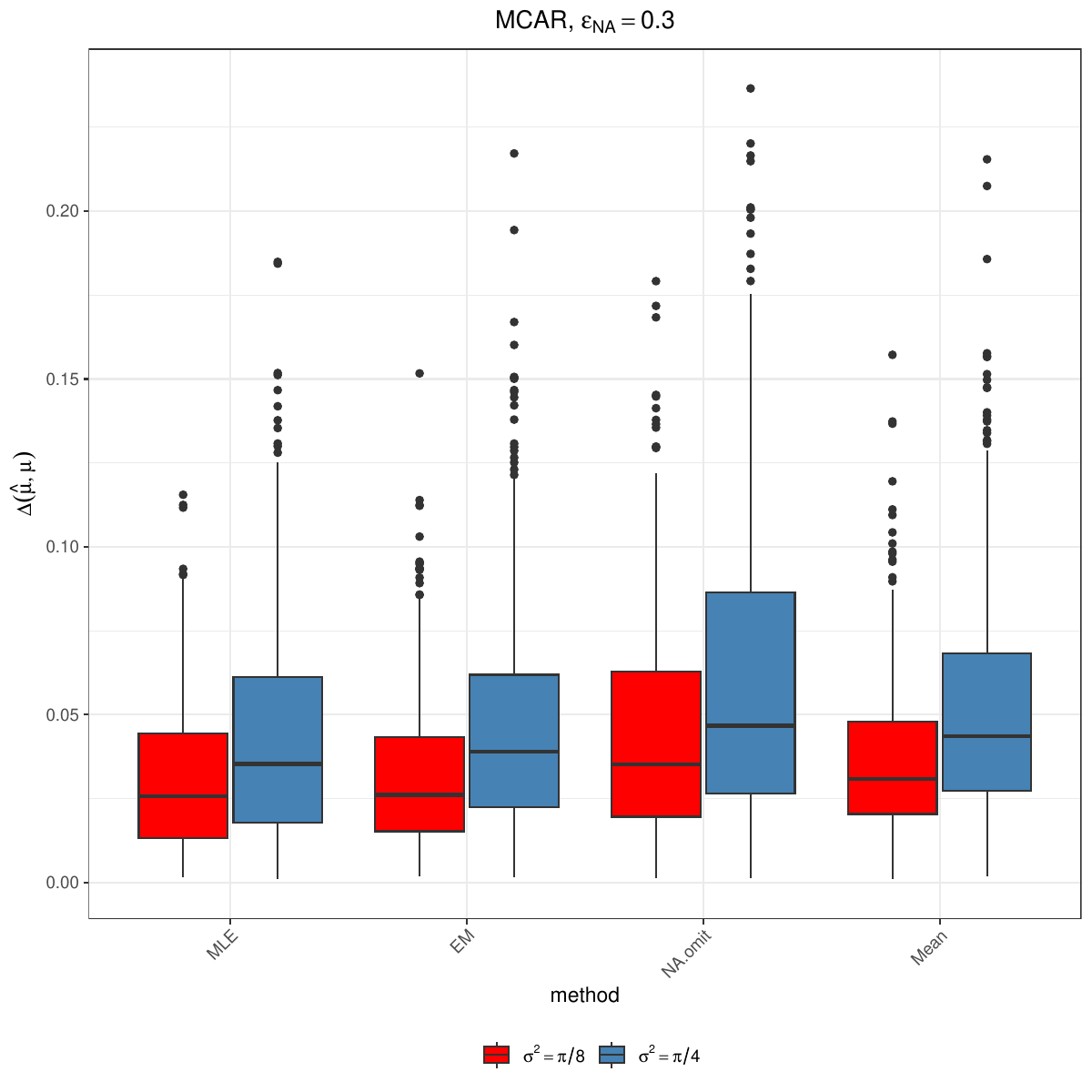}
	\includegraphics[scale=0.325]{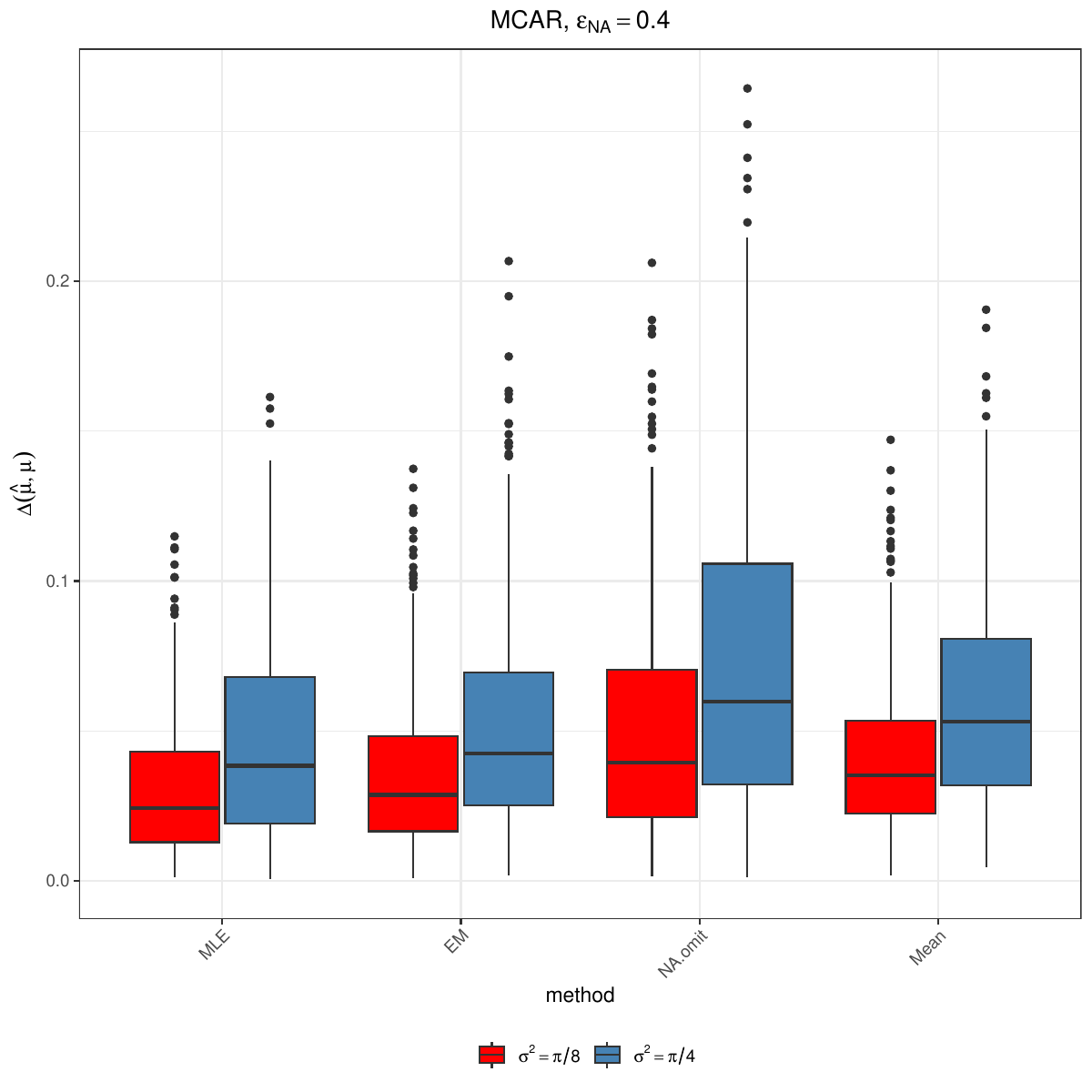}
	\caption{Numerical studies. Distribution of $\Delta(\hat\bmu)$ for different methods uder MCAR when $p=2$. Missingness rate is casewise.}
	\label{fig:1}
\end{figure}

\begin{figure}[!h]
	\includegraphics[scale=0.325]{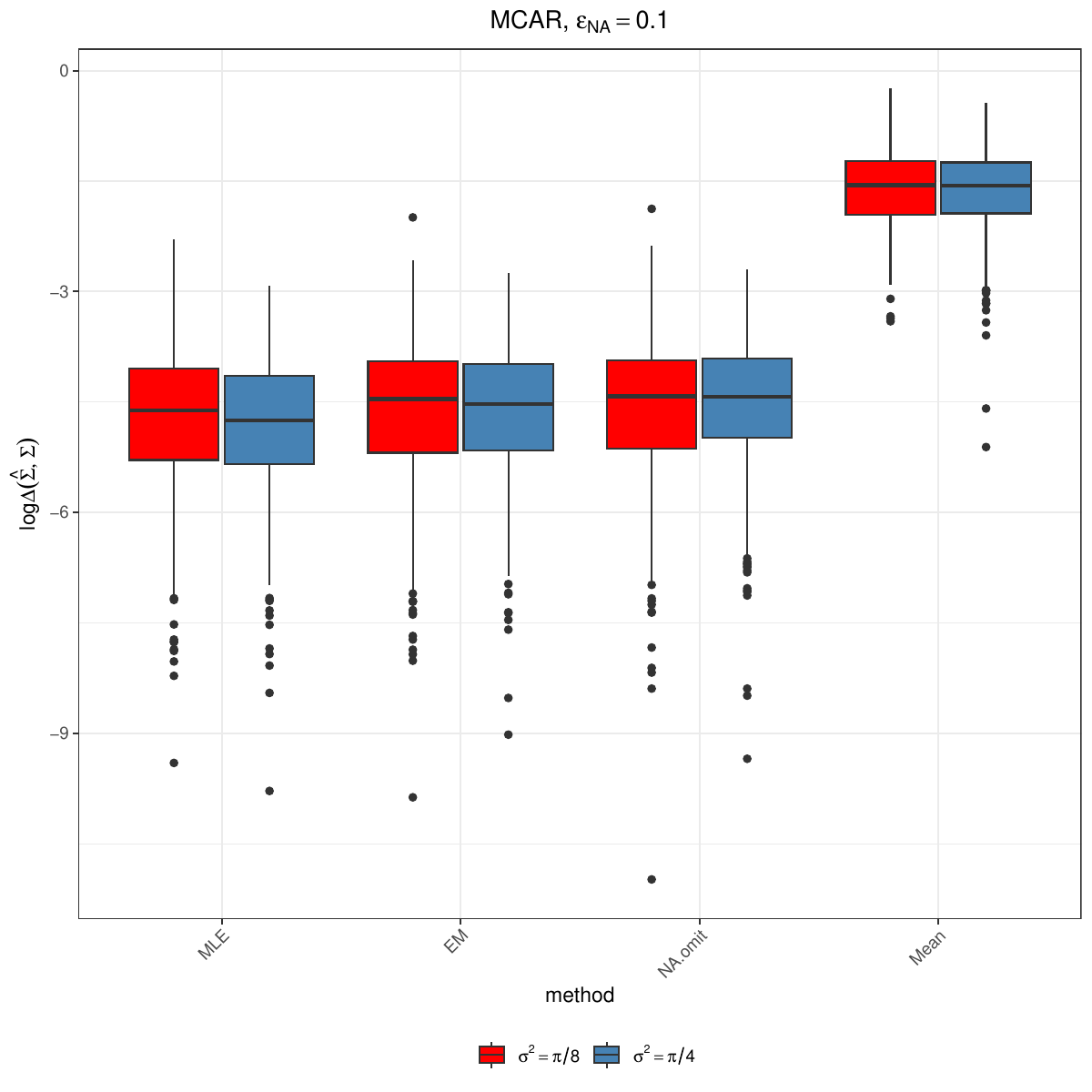}
	\includegraphics[scale=0.325]{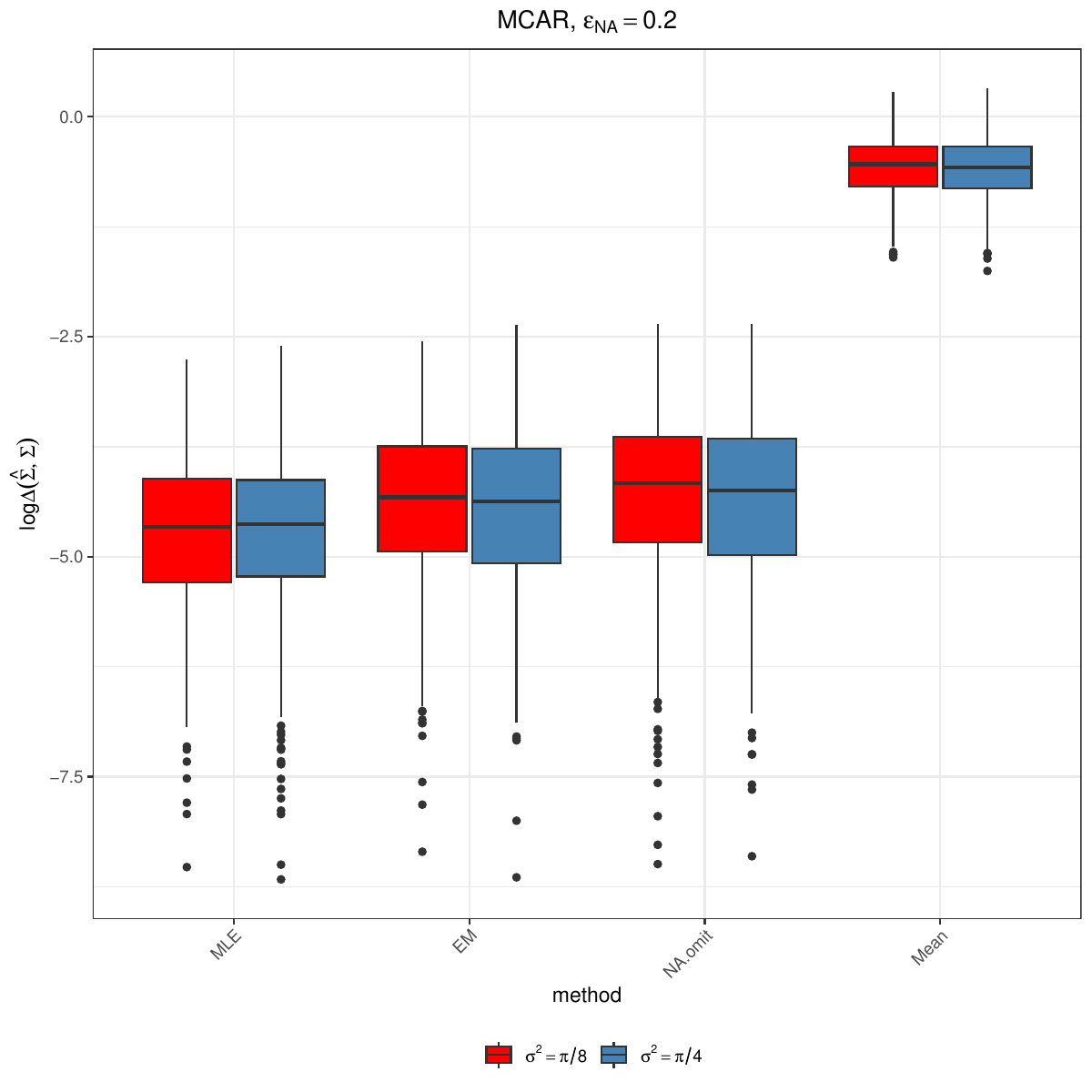}\\
	\includegraphics[scale=0.325]{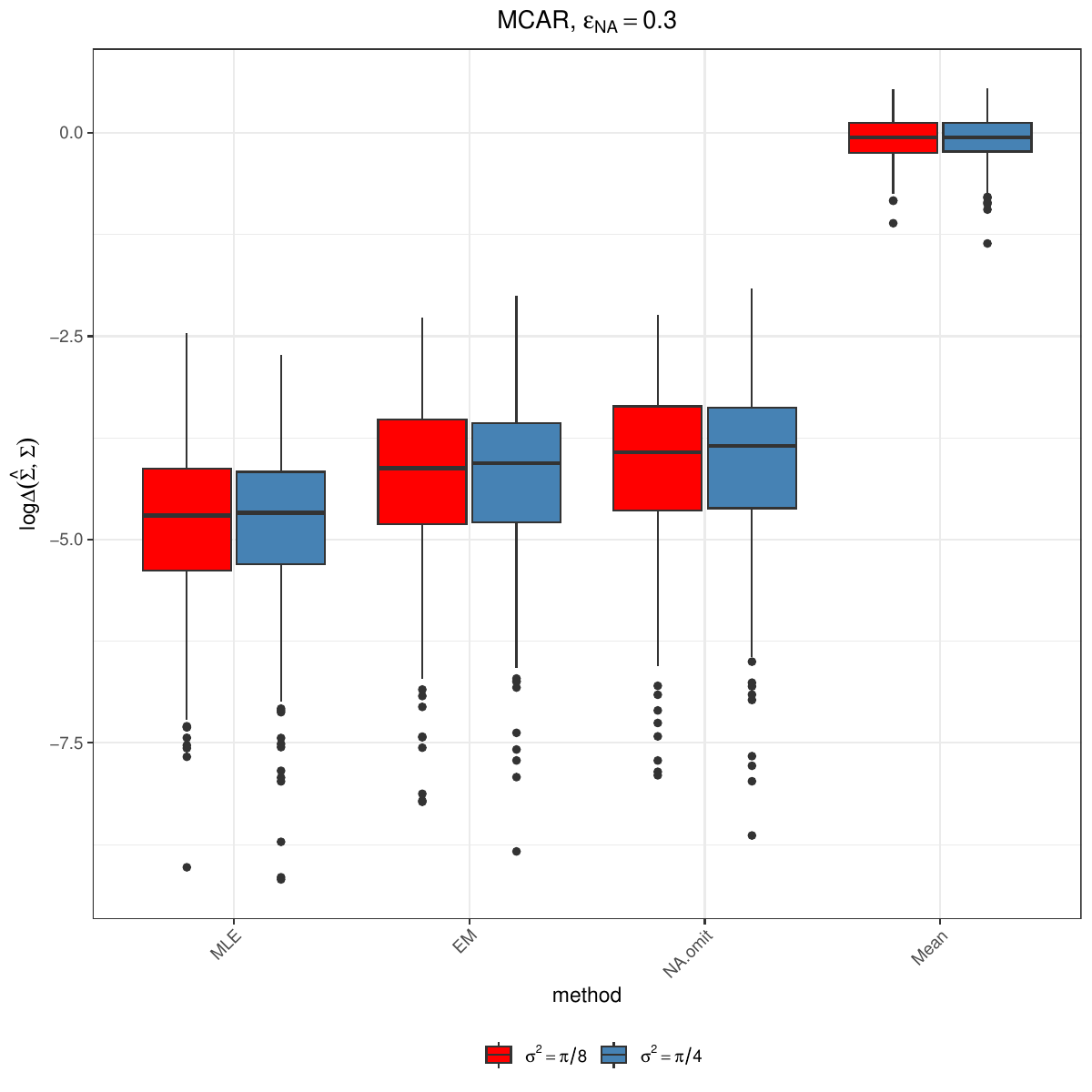}
	\includegraphics[scale=0.325]{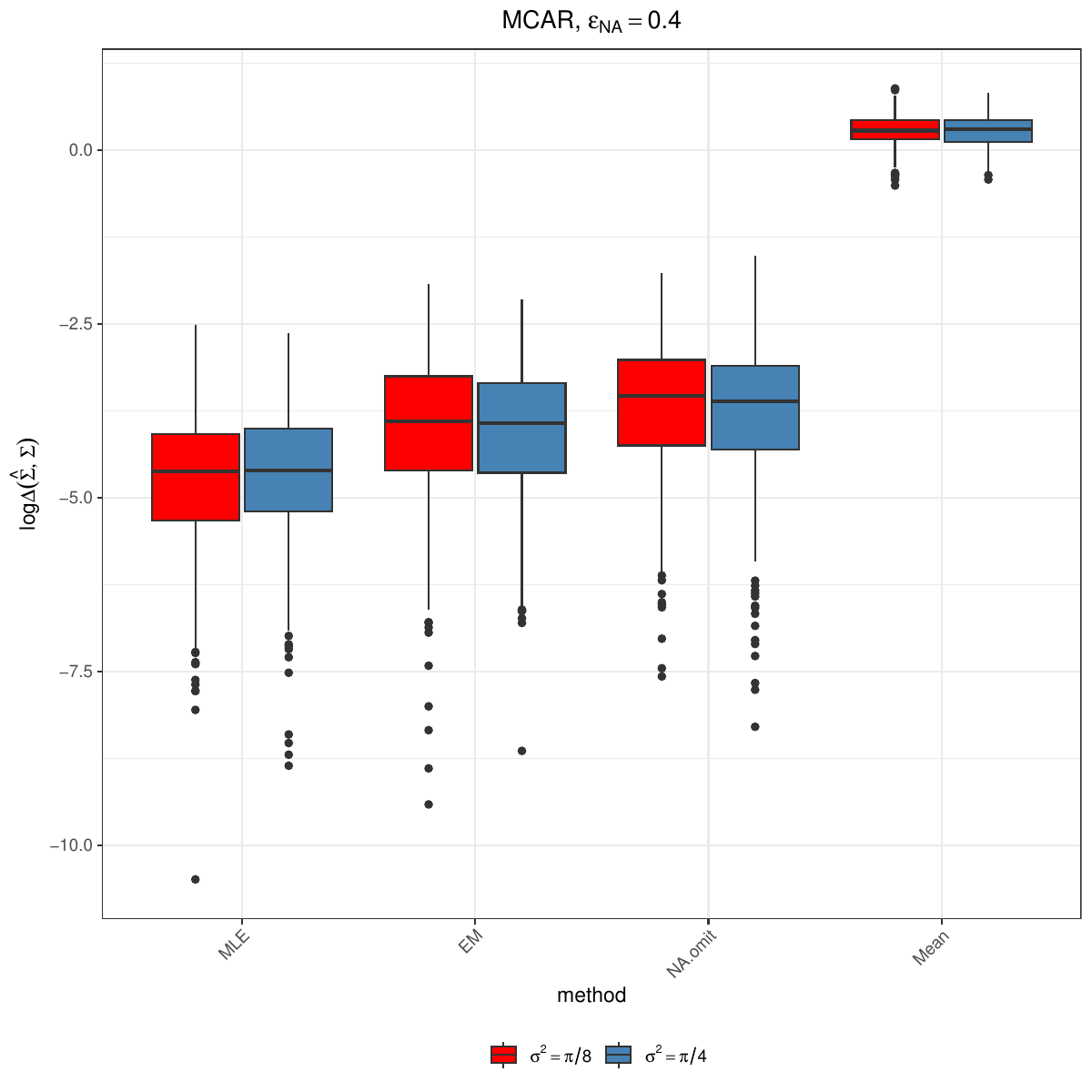}
	\caption{Numerical studies. Distribution of $\log\Delta(\hat\bSigma, \bSigma)$ for different methods uder MCAR when $p=2$. Missingness rate is casewise.}
	\label{fig:2}
\end{figure}

\begin{figure}[!h]
	\includegraphics[scale=0.325]{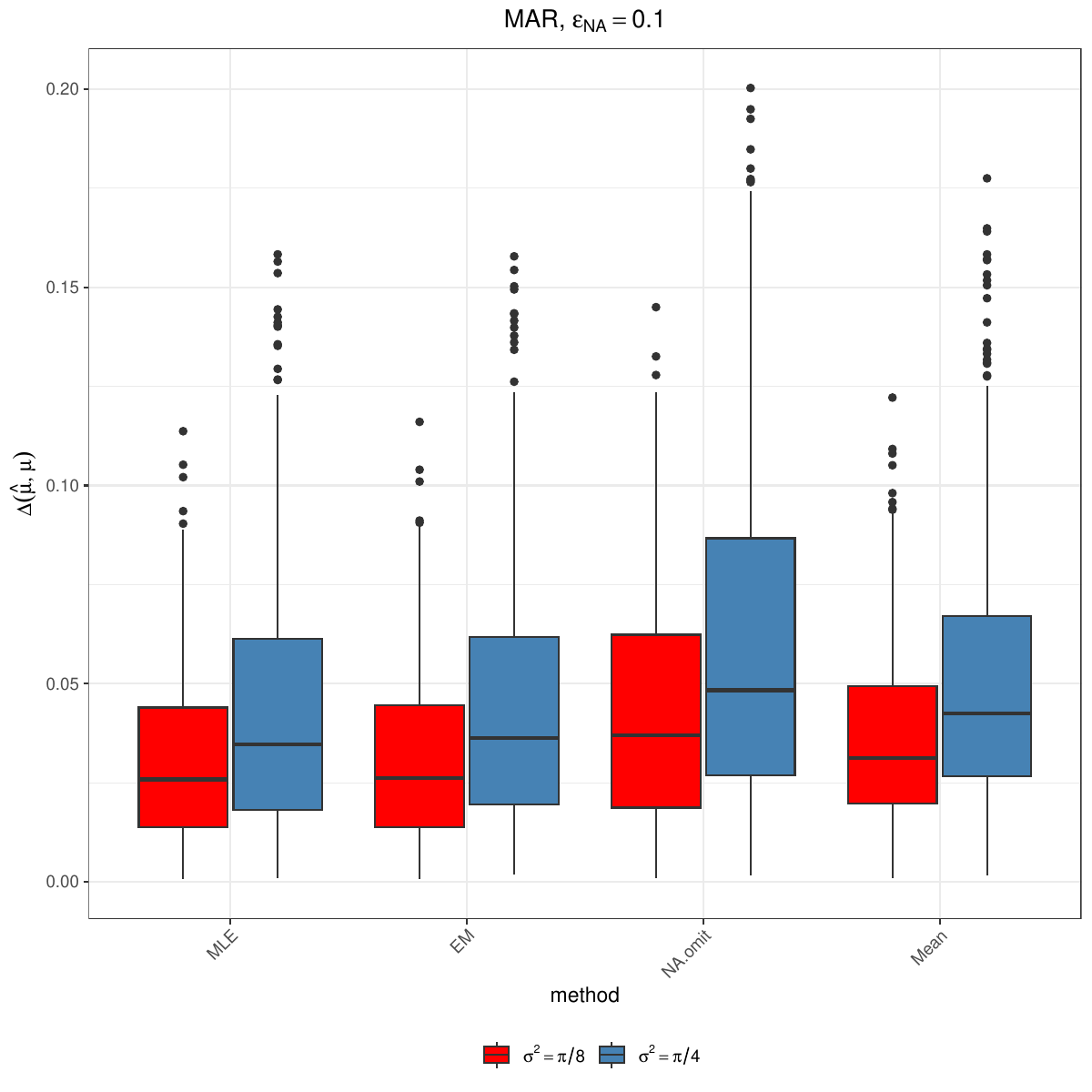}
	\includegraphics[scale=0.325]{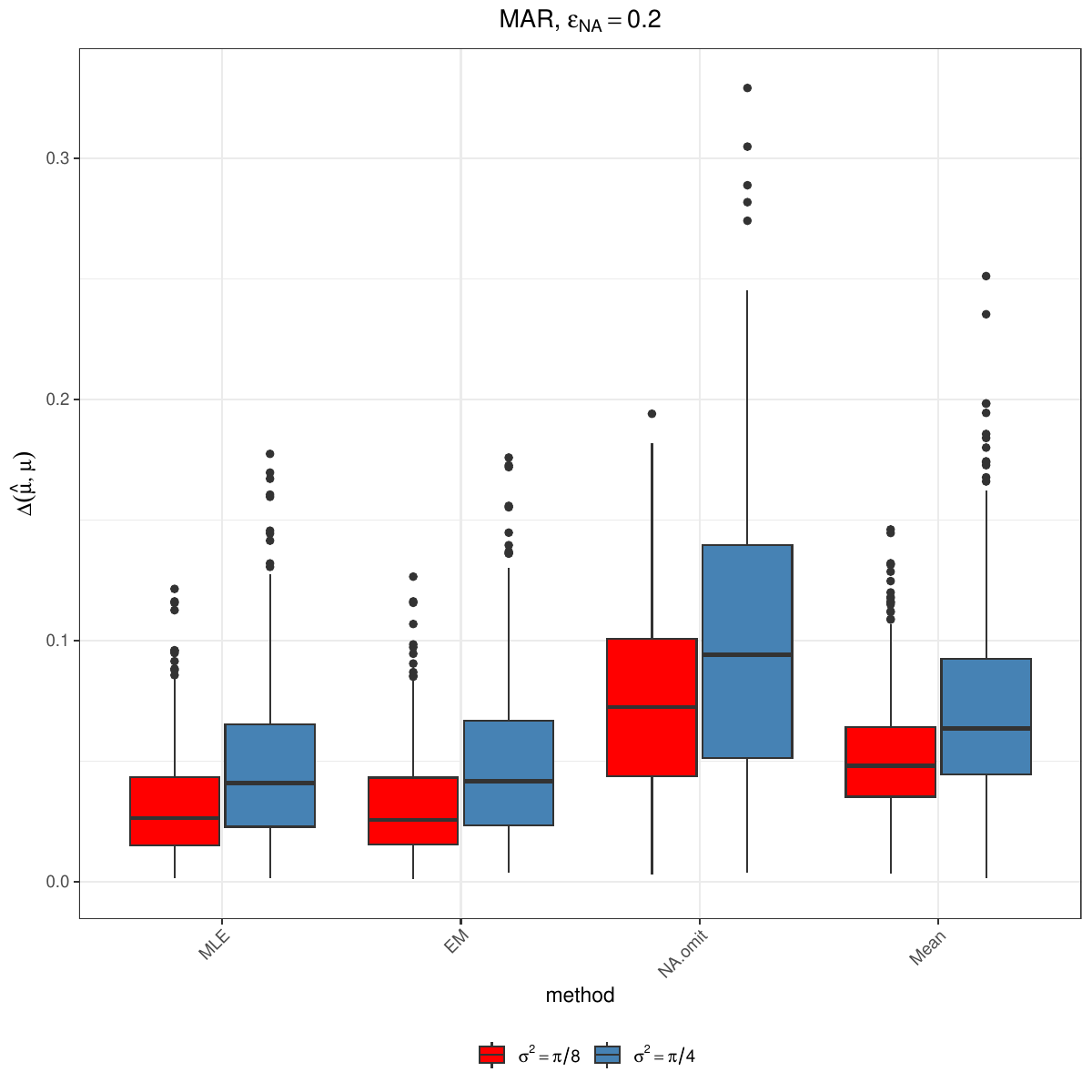}\\
	\includegraphics[scale=0.325]{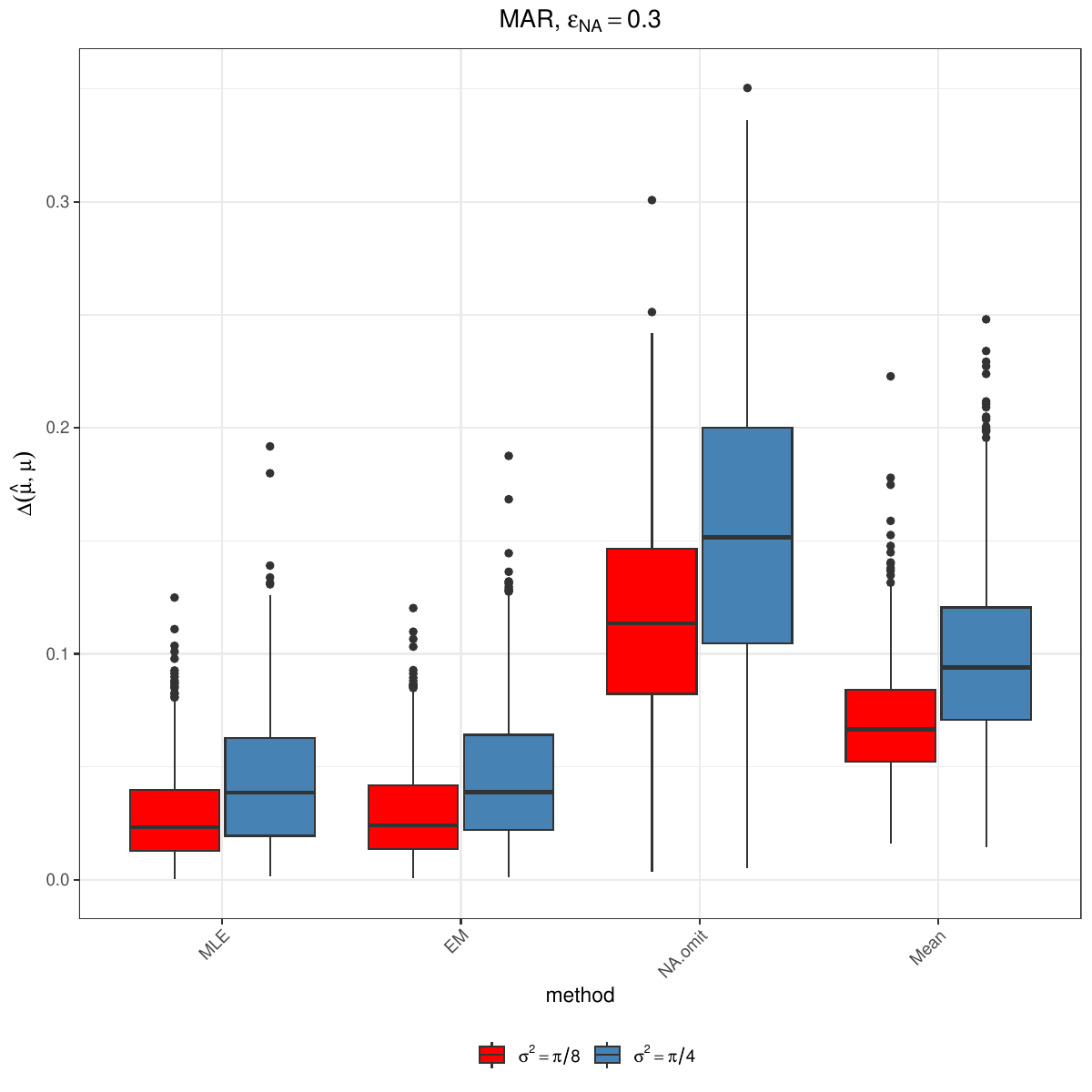}
	\includegraphics[scale=0.325]{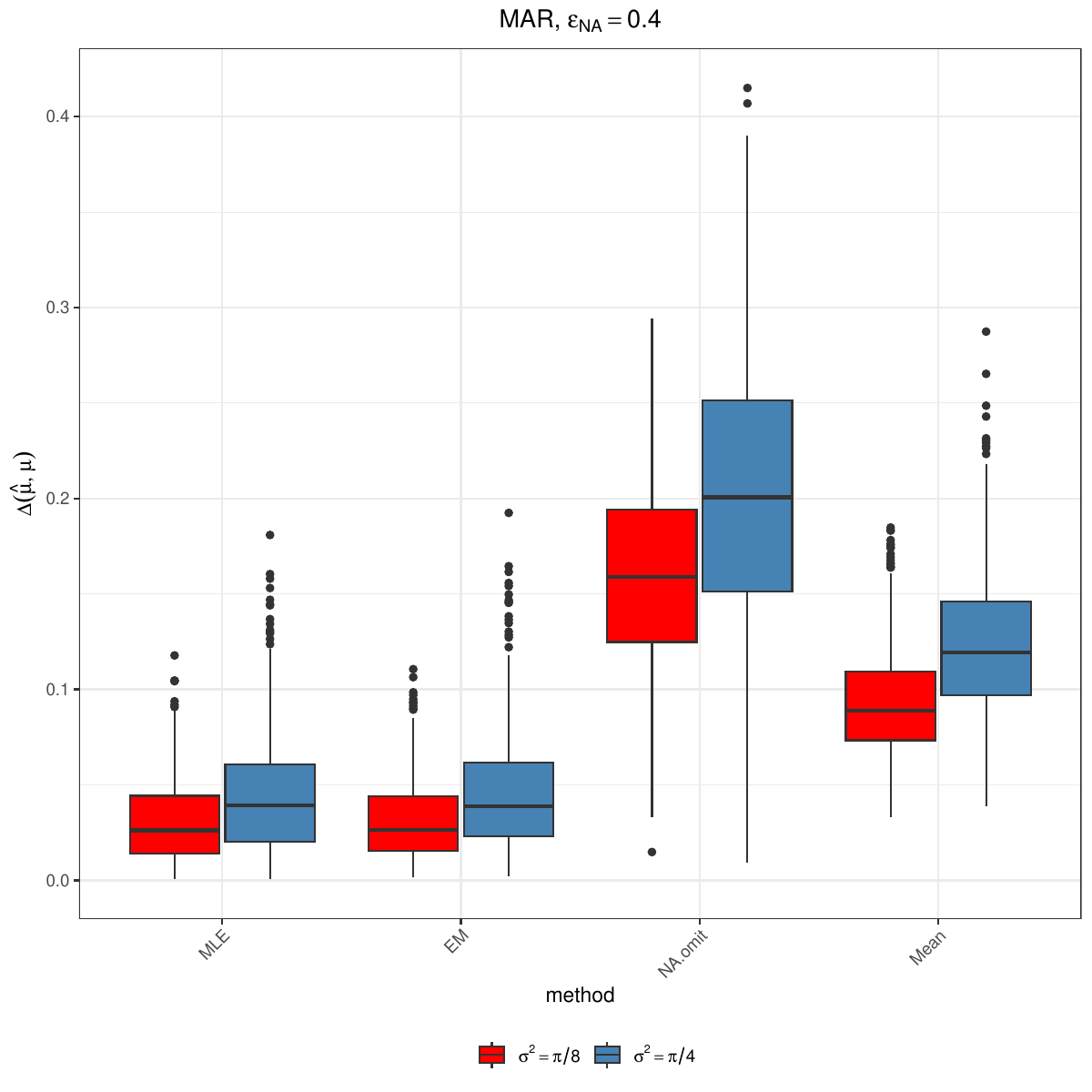}
	\caption{Numerical studies. Distribution of $\Delta(\hat\bmu)$ for different methods uder MAR when $p=2$. Missingness rate is casewise.}
	\label{fig:3}
\end{figure}

\begin{figure}[!h]
	\includegraphics[scale=0.325]{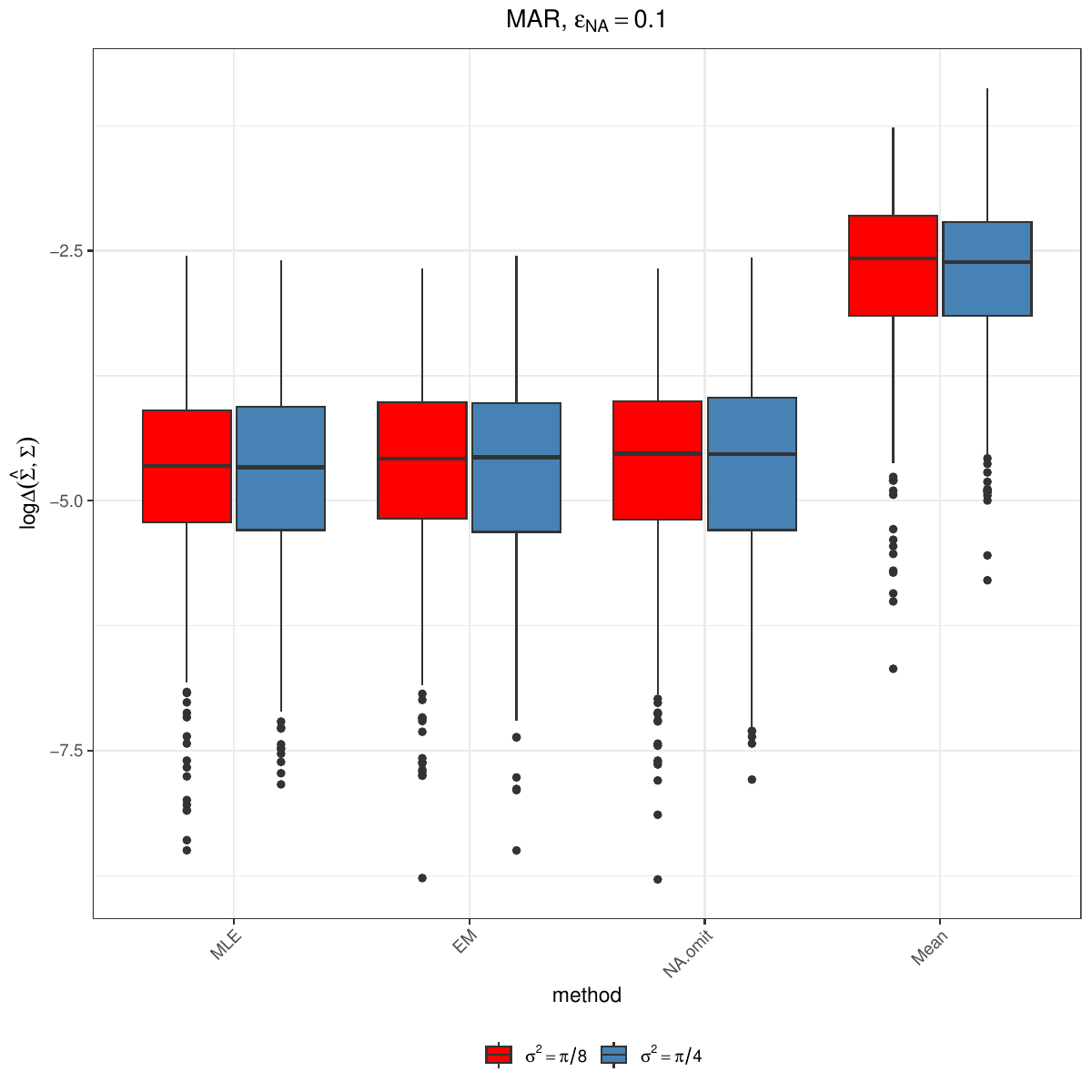}
	\includegraphics[scale=0.325]{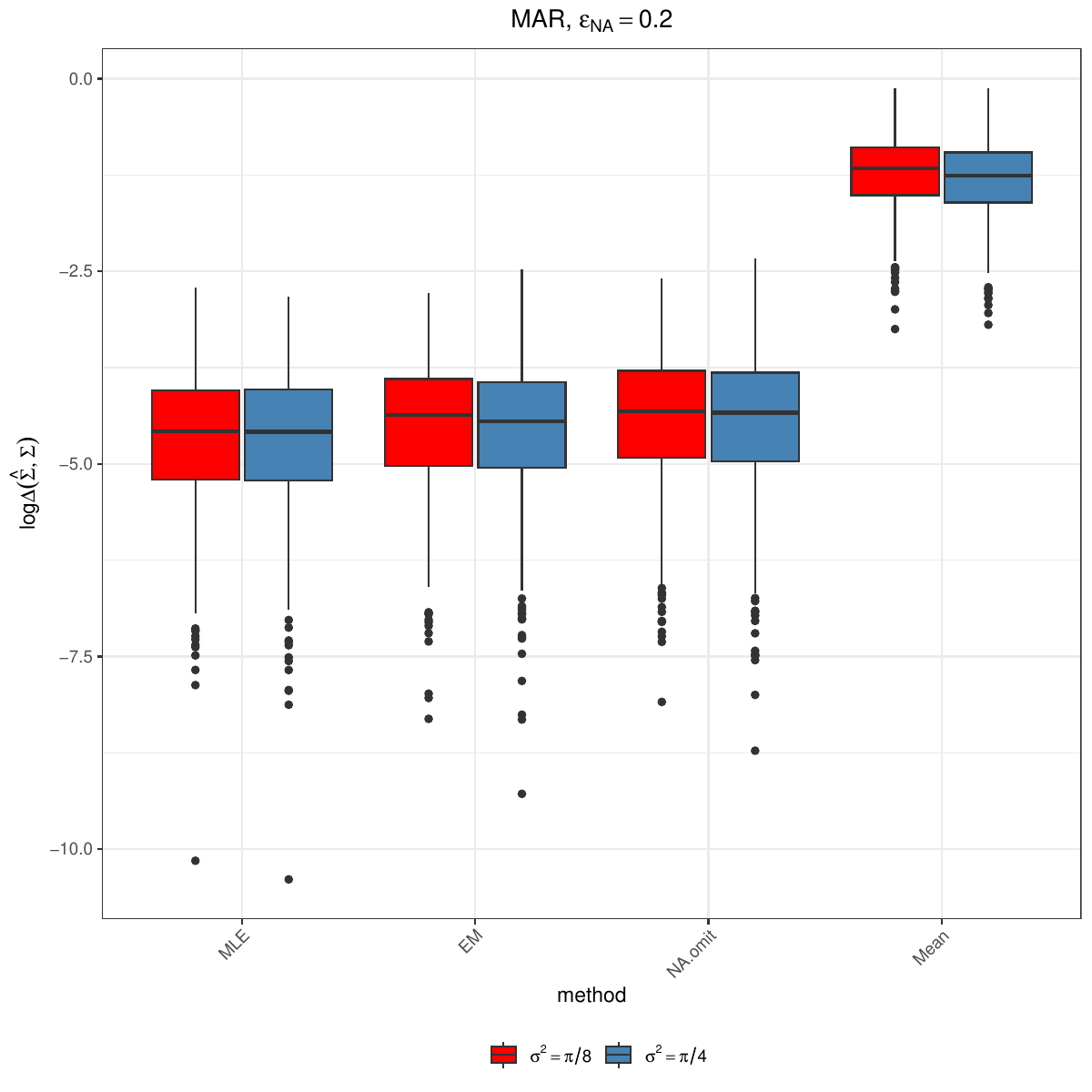}\\
	\includegraphics[scale=0.325]{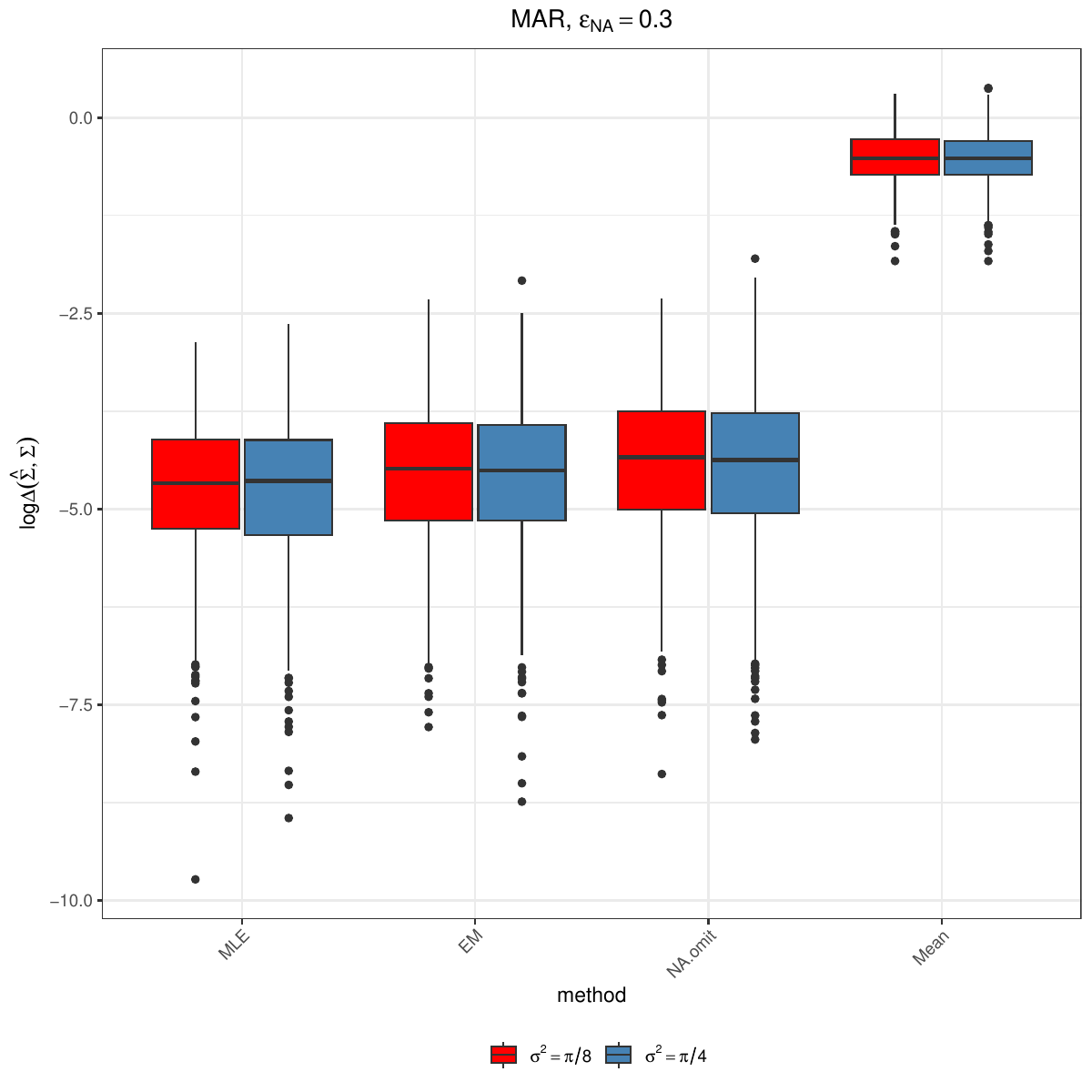}
	\includegraphics[scale=0.325]{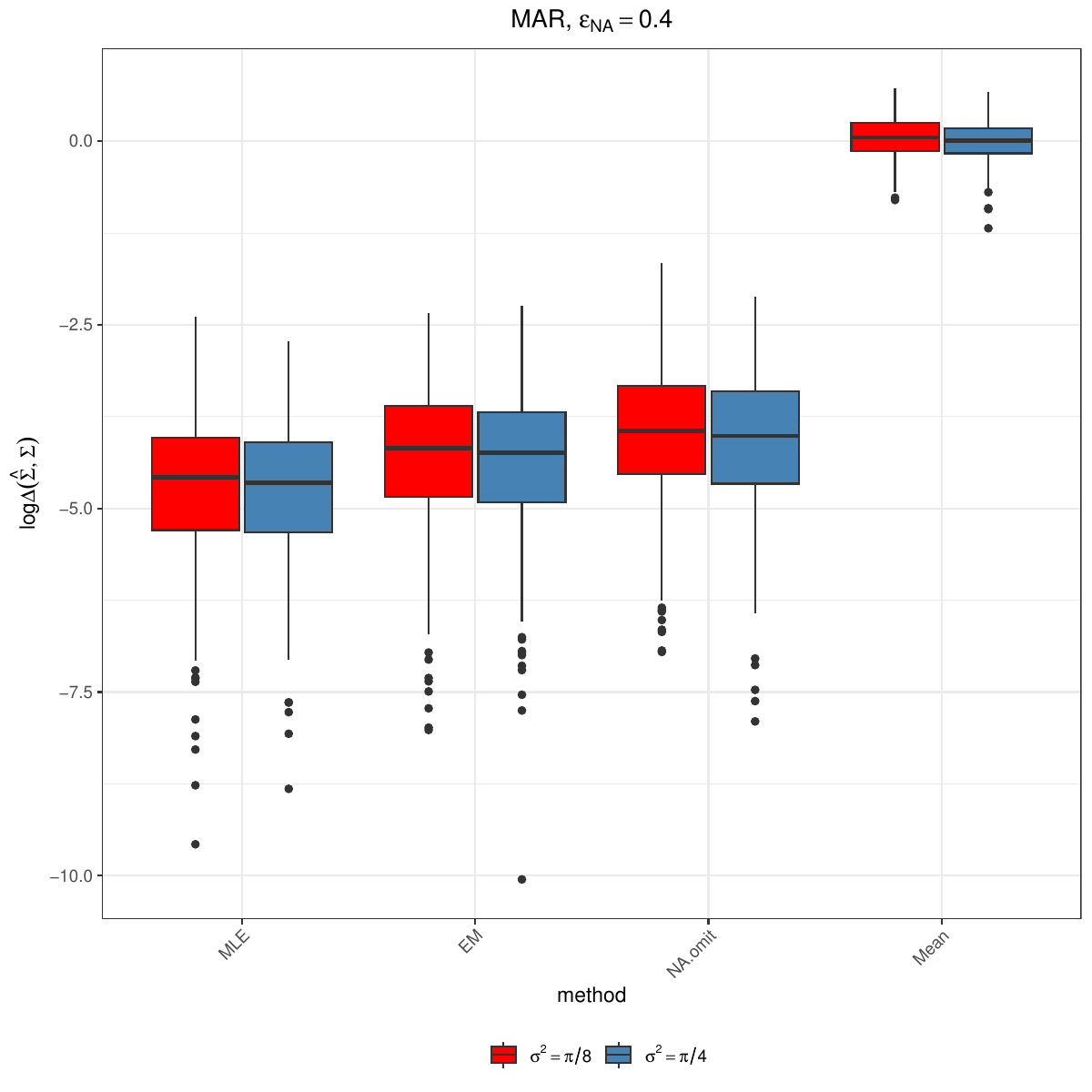}
	\caption{Numerical studies. Distribution of $\log\Delta(\hat\bSigma, \bSigma)$ for different methods uder MAR when $p=2$. Missingness rate is casewise.}
	\label{fig:4}
\end{figure}


\begin{figure}[!h]
	\includegraphics[scale=0.325]{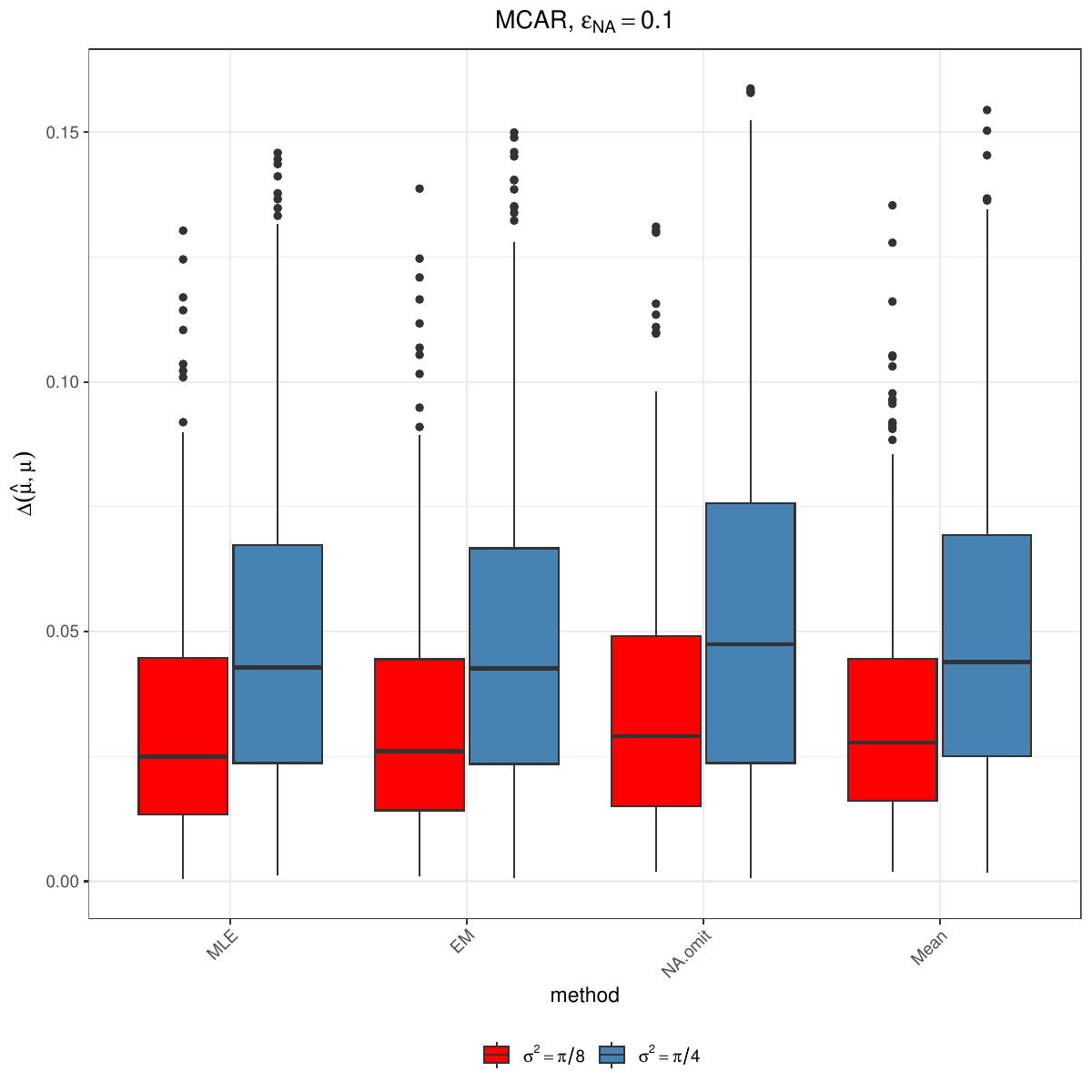}
	\includegraphics[scale=0.325]{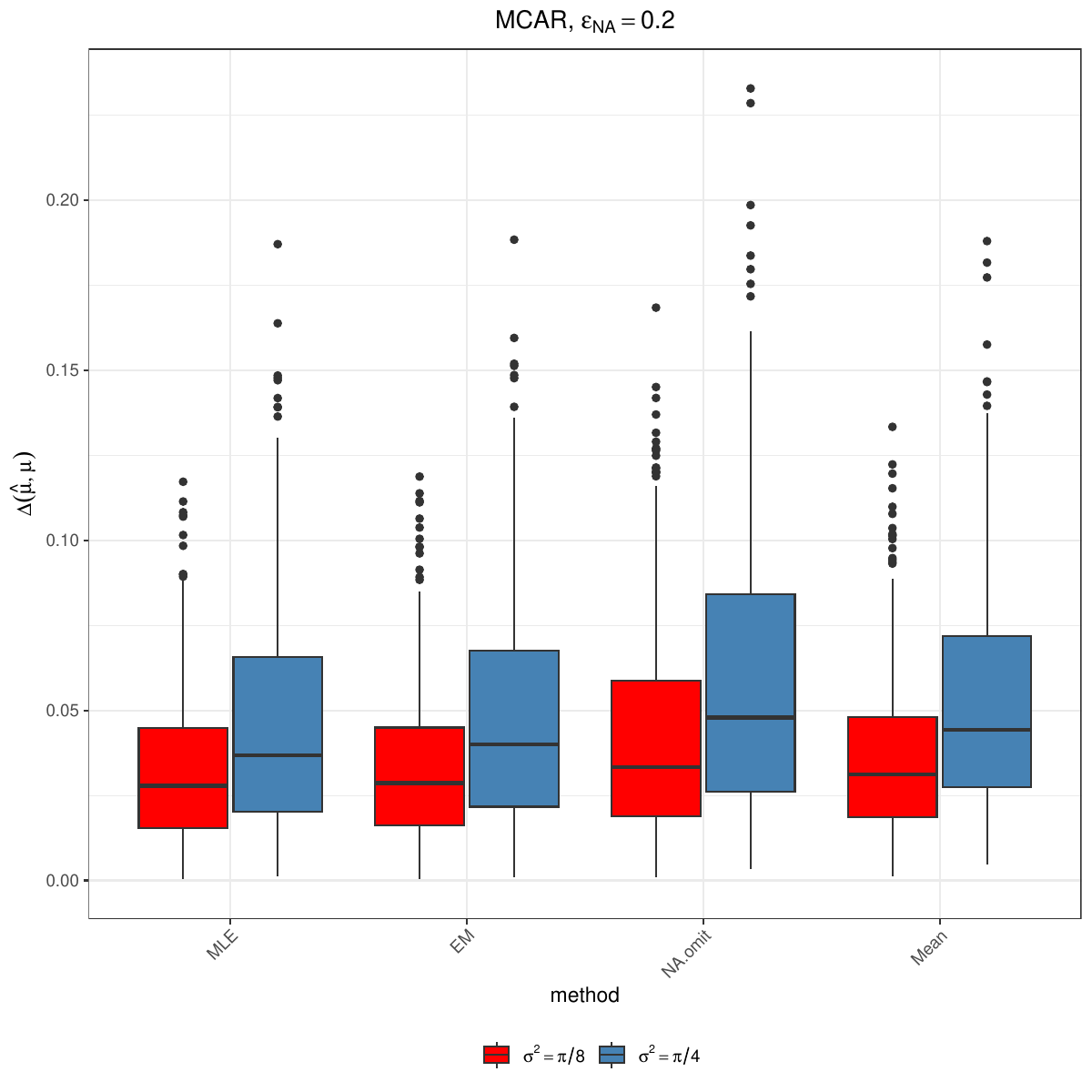}\\
	\includegraphics[scale=0.325]{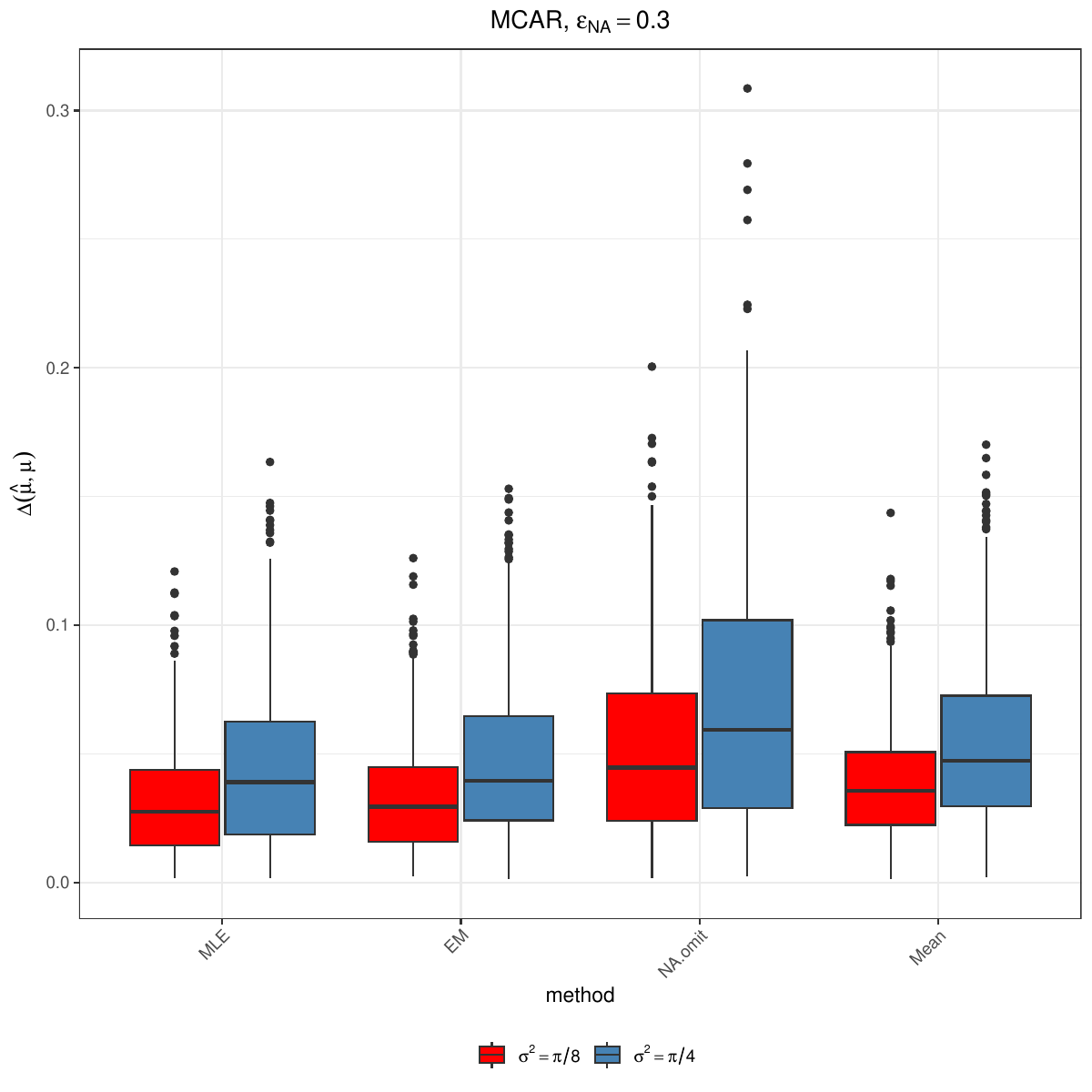}
	\includegraphics[scale=0.325]{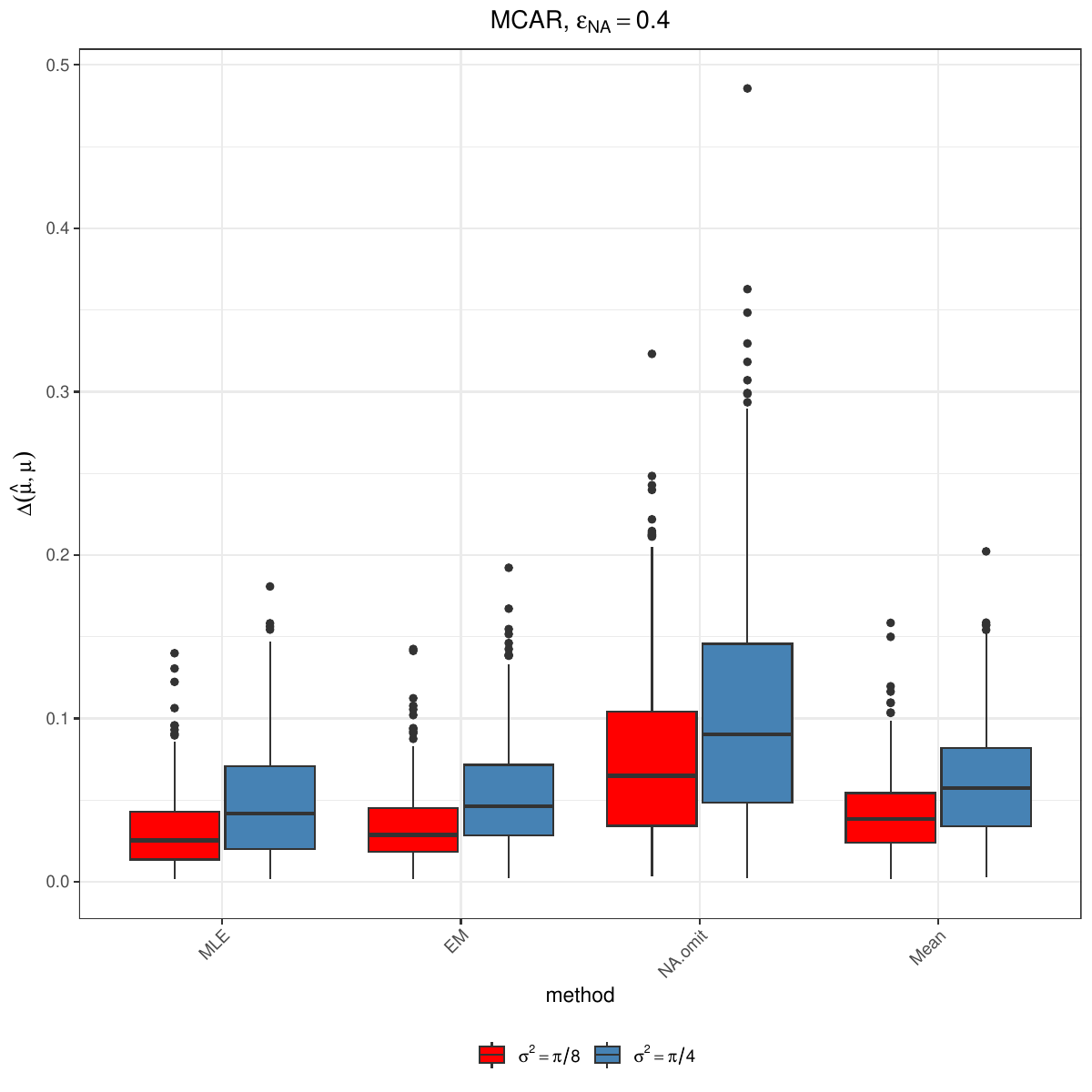}
	\caption{Numerical studies. Distribution of $\Delta(\hat\bmu)$ for different methods uder MCAR when $p=2$. Missingness rate is cellwise.}
	\label{fig:1a}
\end{figure}

\begin{figure}[!h]
	\includegraphics[scale=0.325]{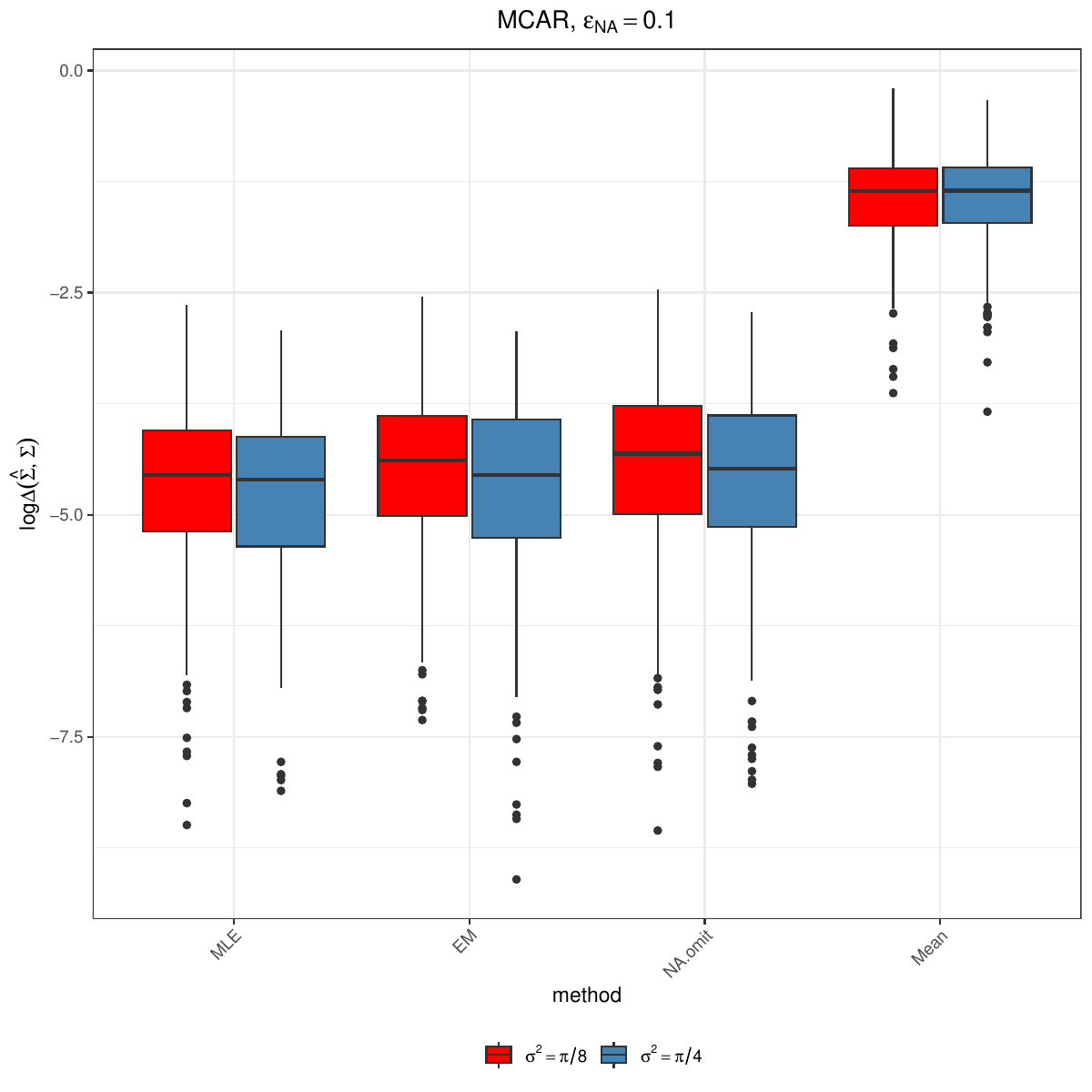}
	\includegraphics[scale=0.325]{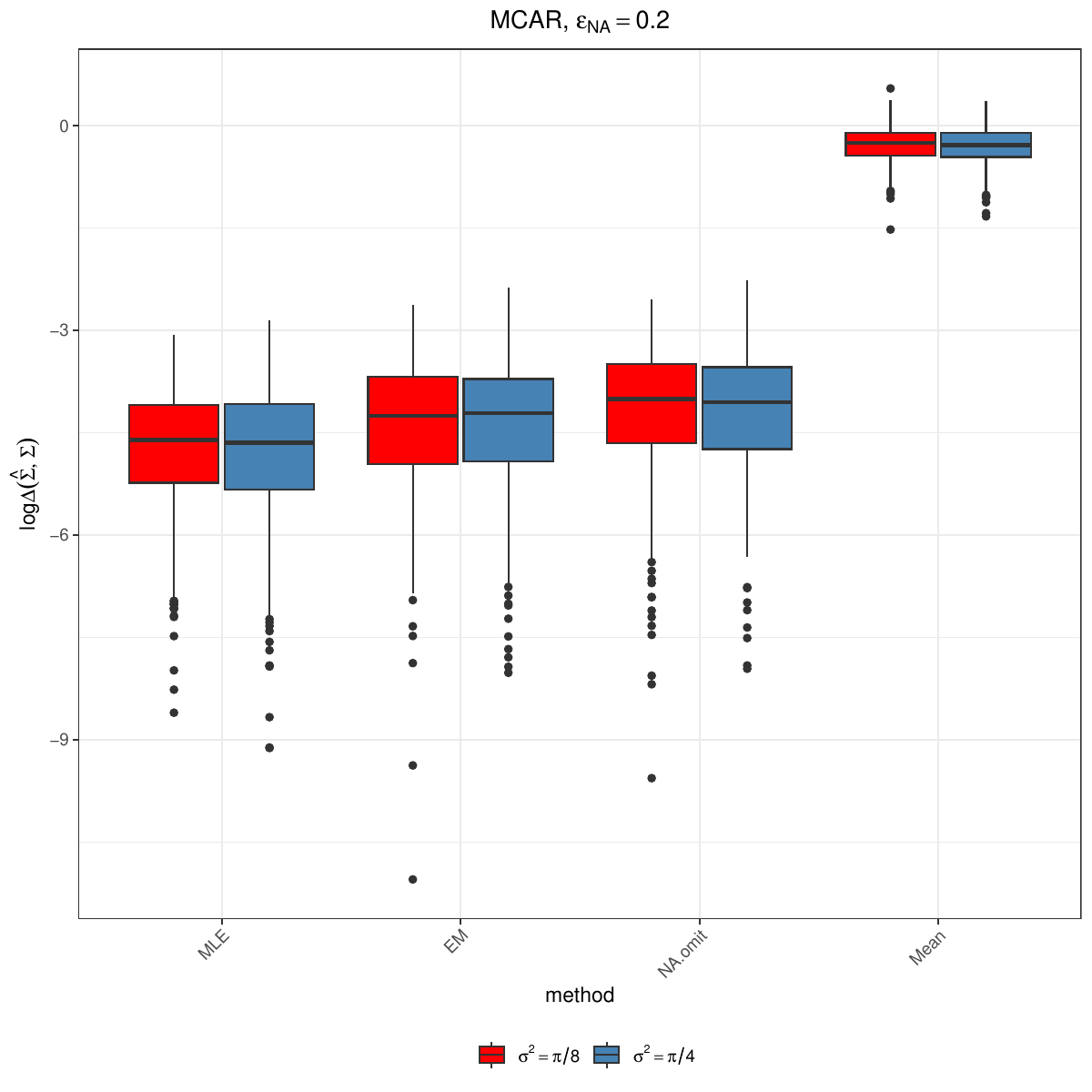}\\
	\includegraphics[scale=0.325]{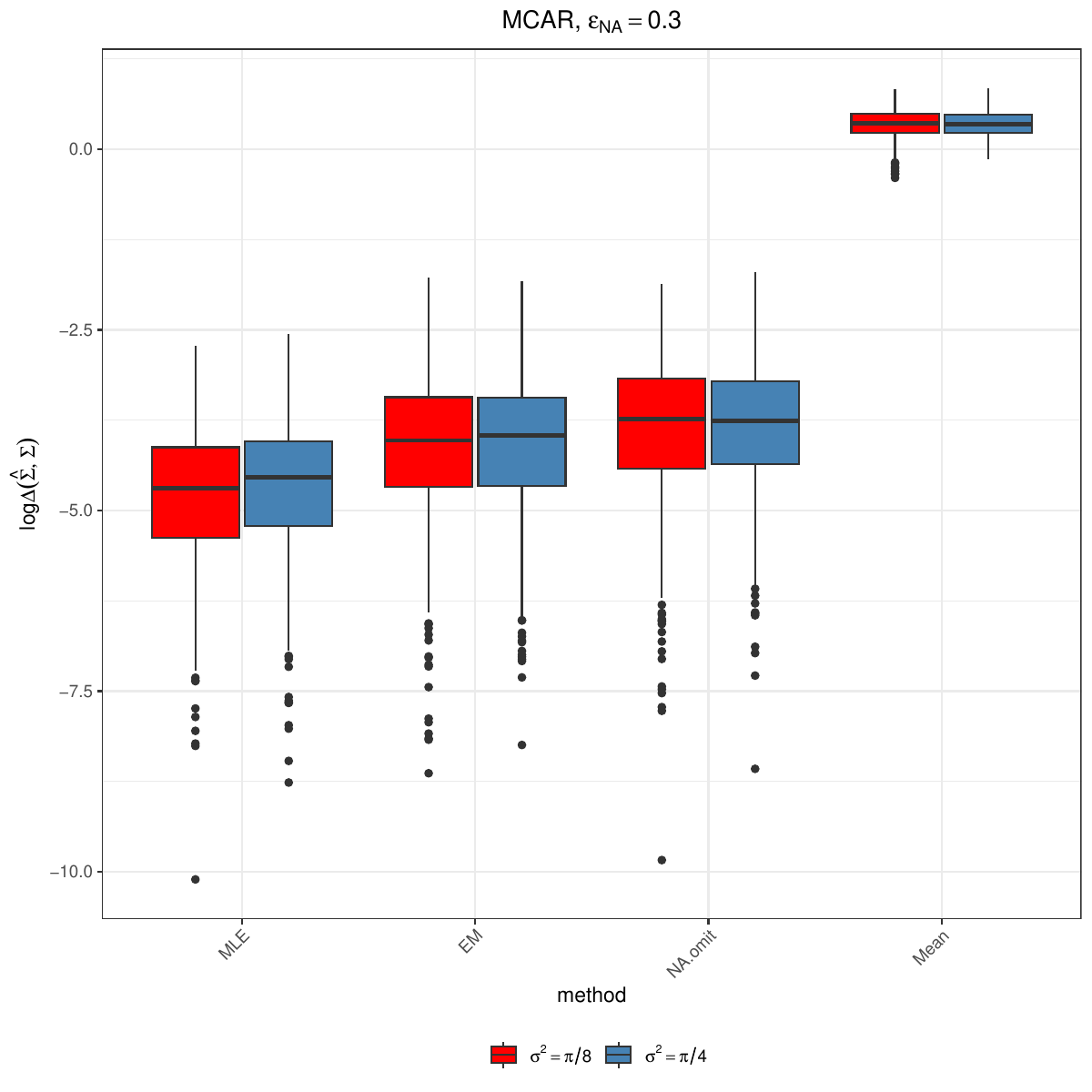}
	\includegraphics[scale=0.325]{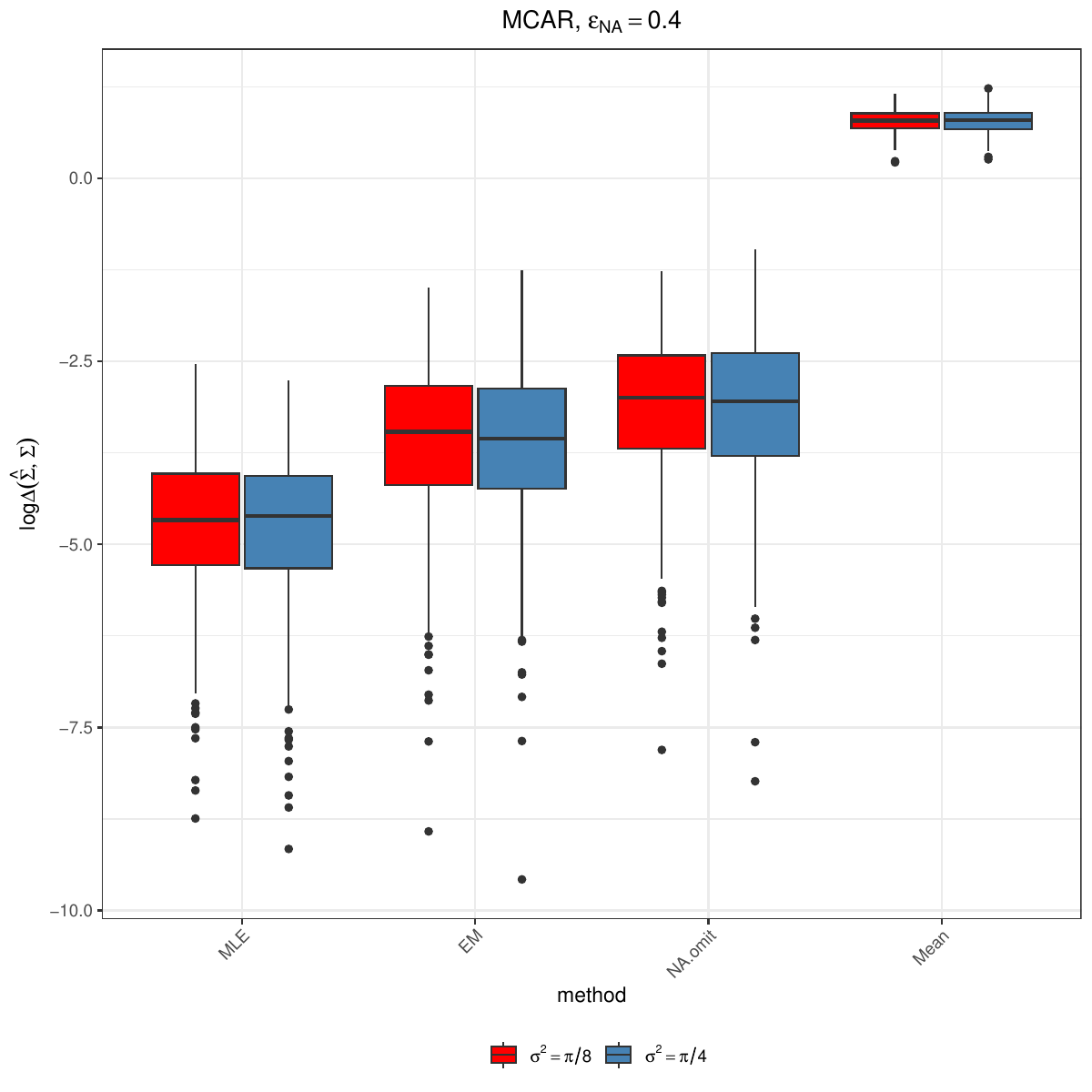}
	\caption{Numerical studies. Distribution of $\log\Delta(\hat\bSigma, \bSigma)$ for different methods uder MCAR when $p=2$. Missingness rate is cellwise.}
	\label{fig:2a}
\end{figure}

\begin{figure}[!h]
	\includegraphics[scale=0.325]{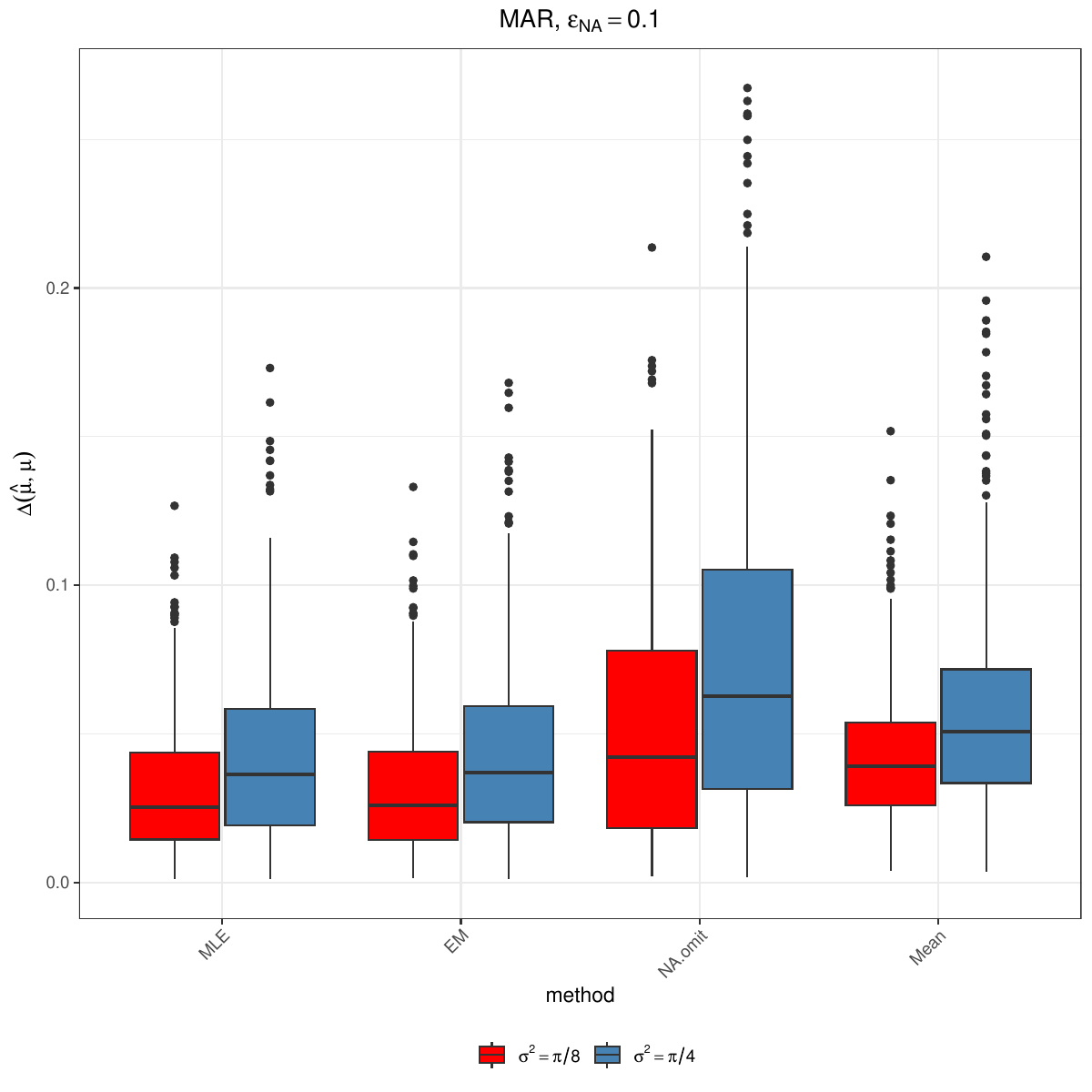}
	\includegraphics[scale=0.325]{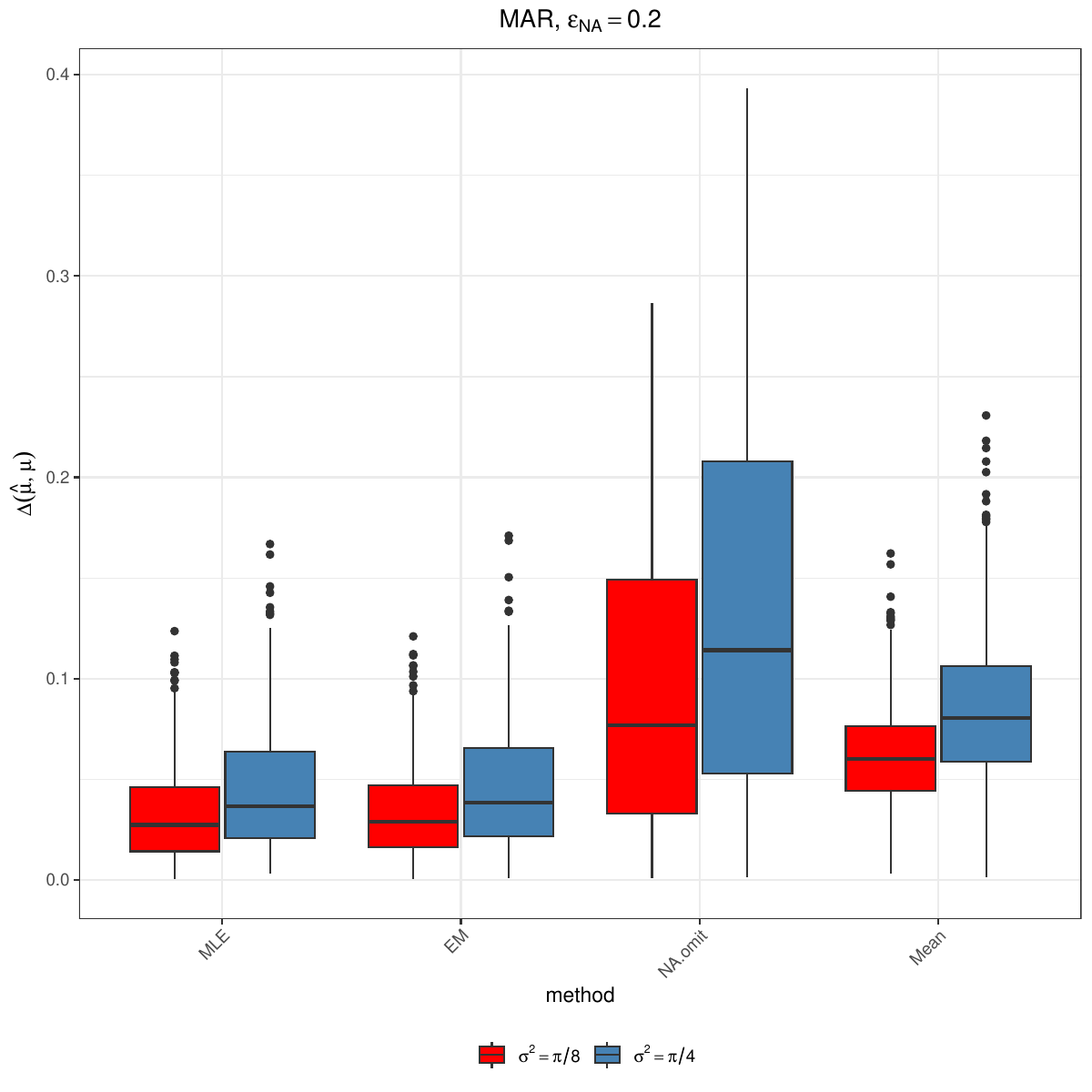}\\
	\includegraphics[scale=0.325]{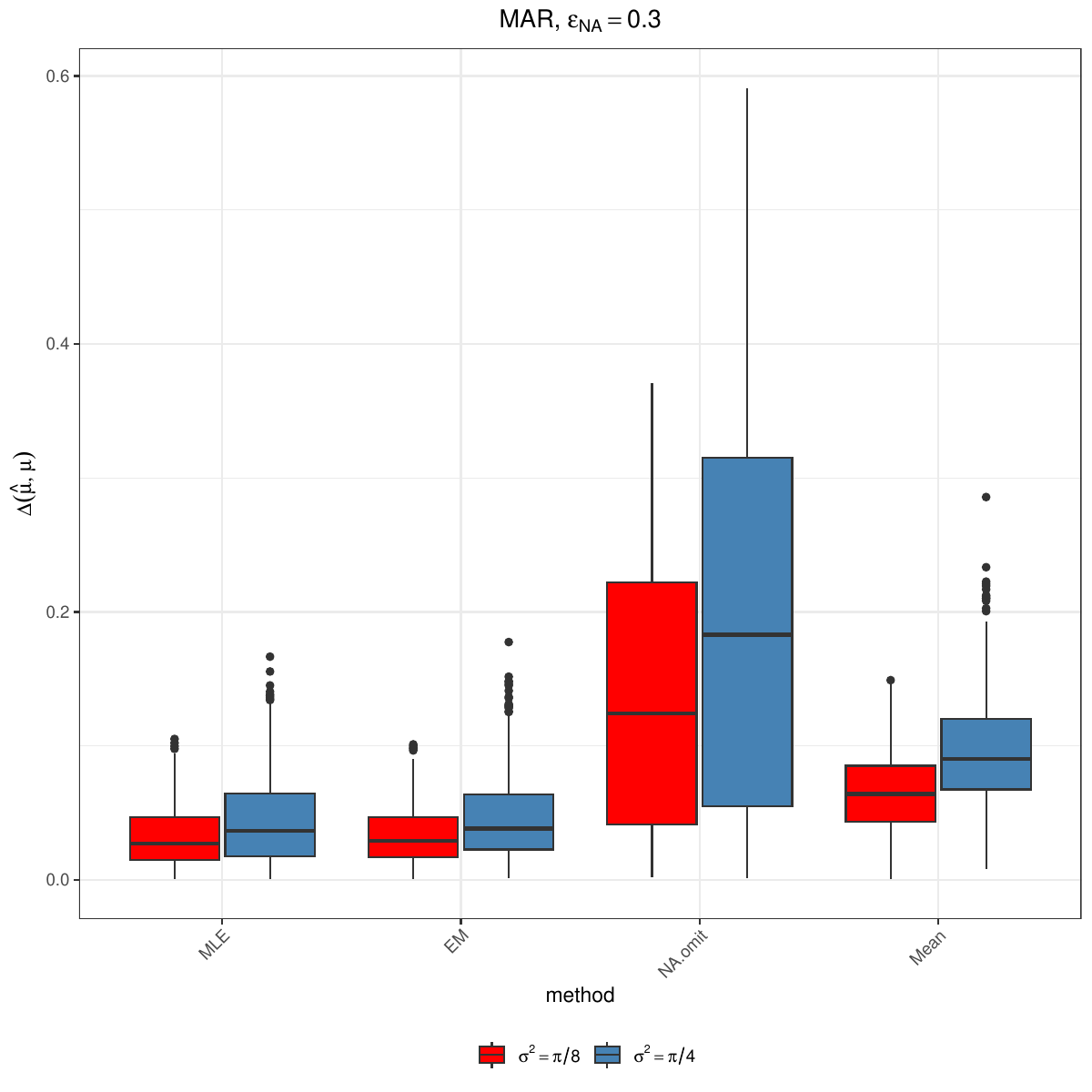}
	\includegraphics[scale=0.325]{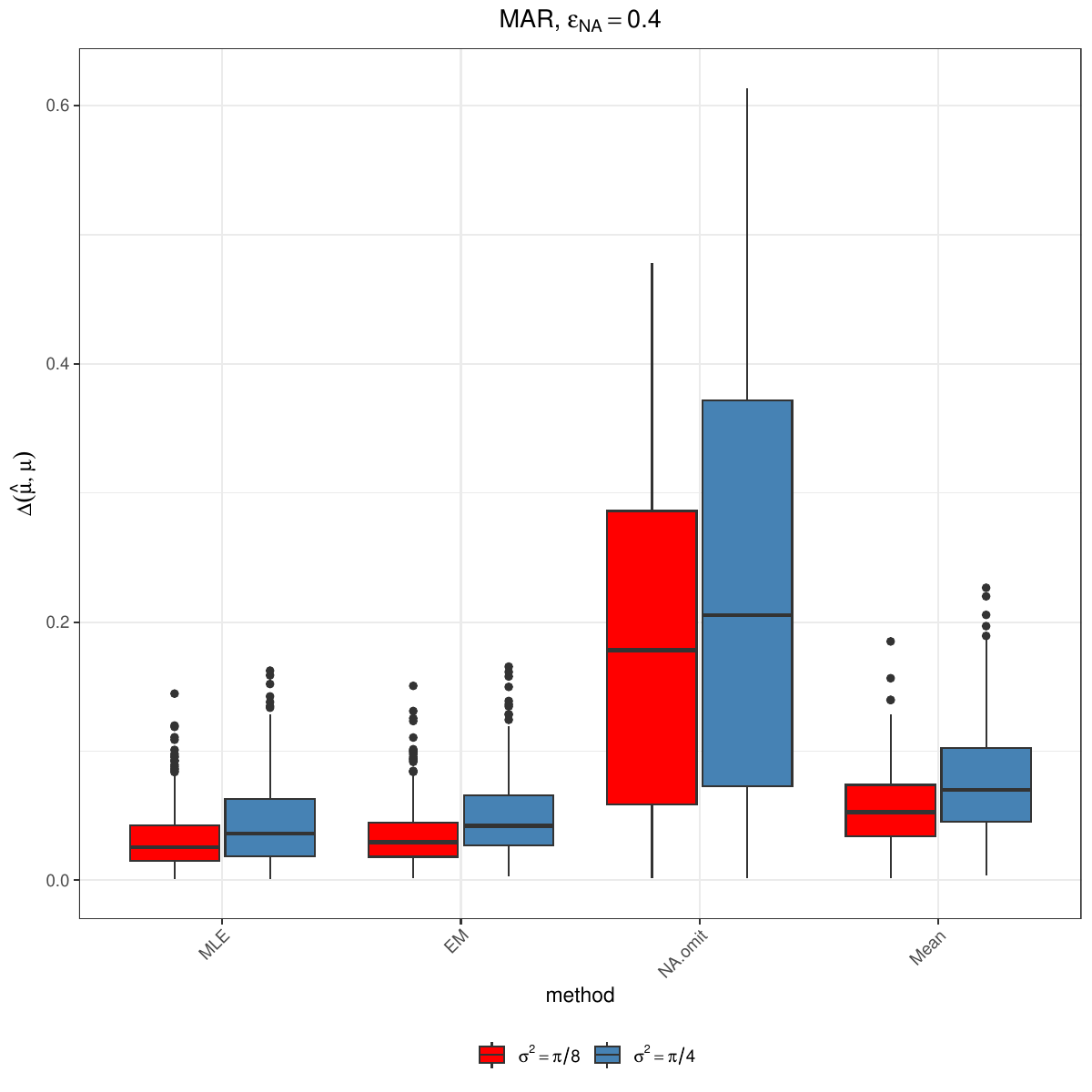}
	\caption{Numerical studies. Distribution of $\Delta(\hat\bmu)$ for different methods uder MAR when $p=2$. Missingness rate is cellwise.}
	\label{fig:3a}
\end{figure}

\begin{figure}[!h]
	\includegraphics[scale=0.325]{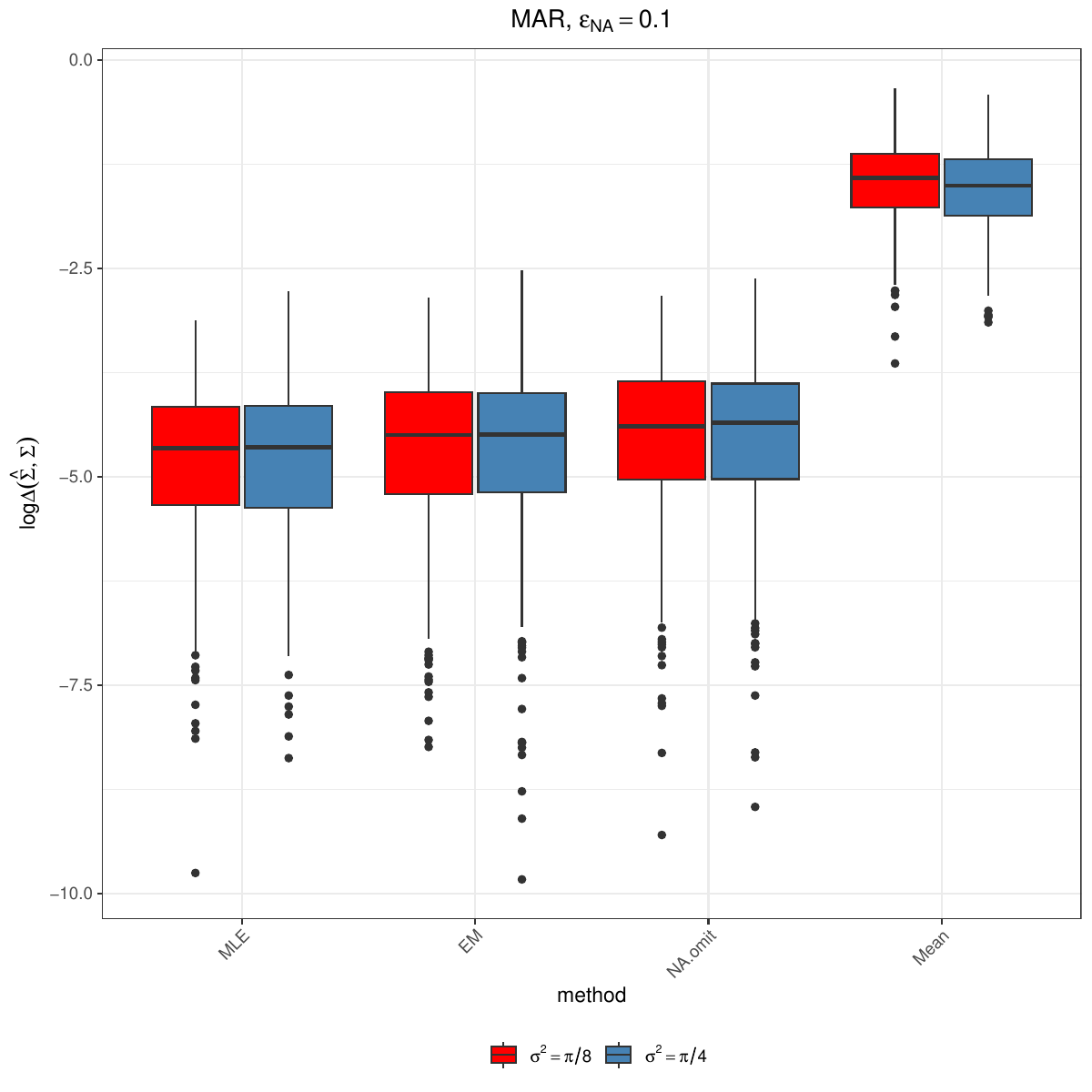}
	\includegraphics[scale=0.325]{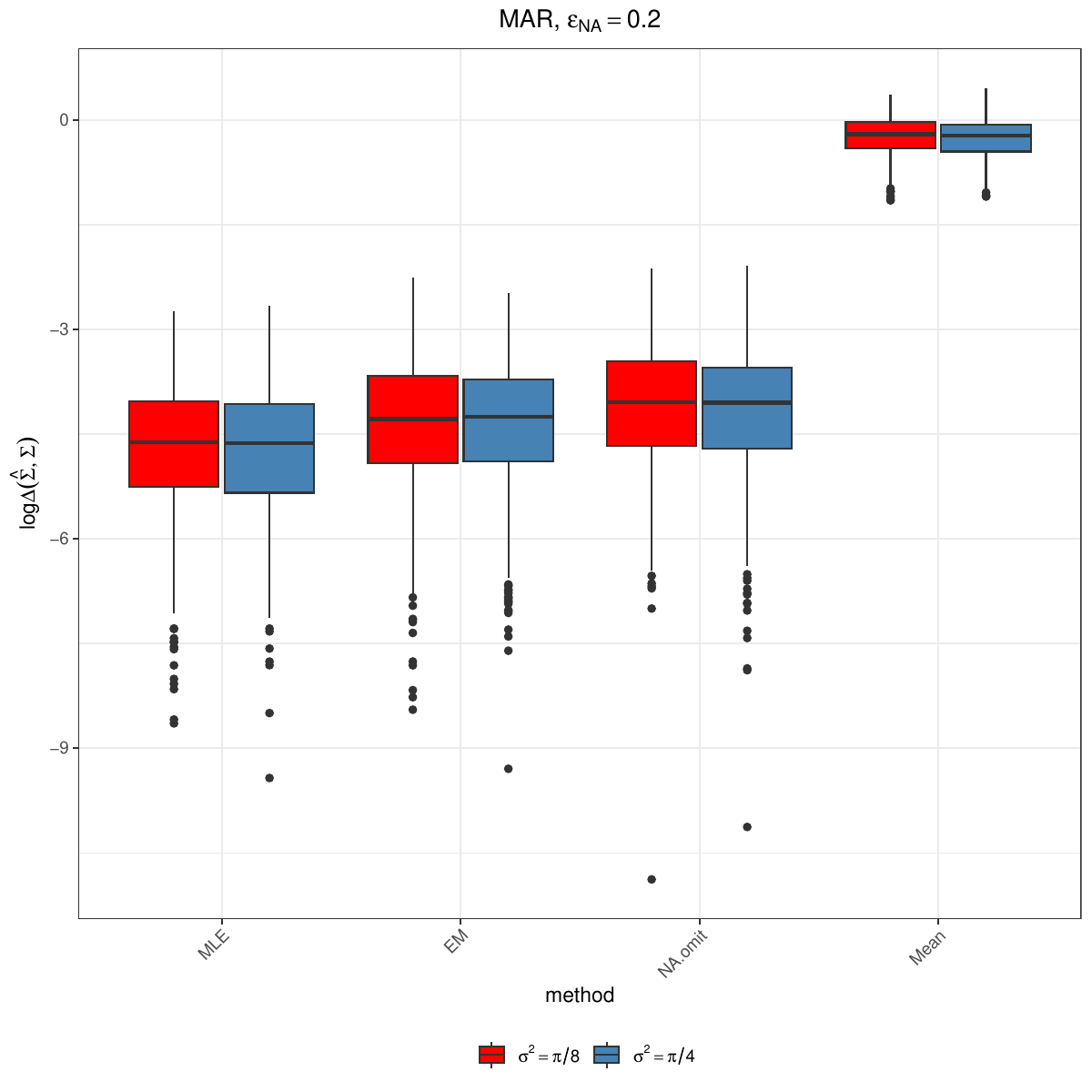}\\
	\includegraphics[scale=0.325]{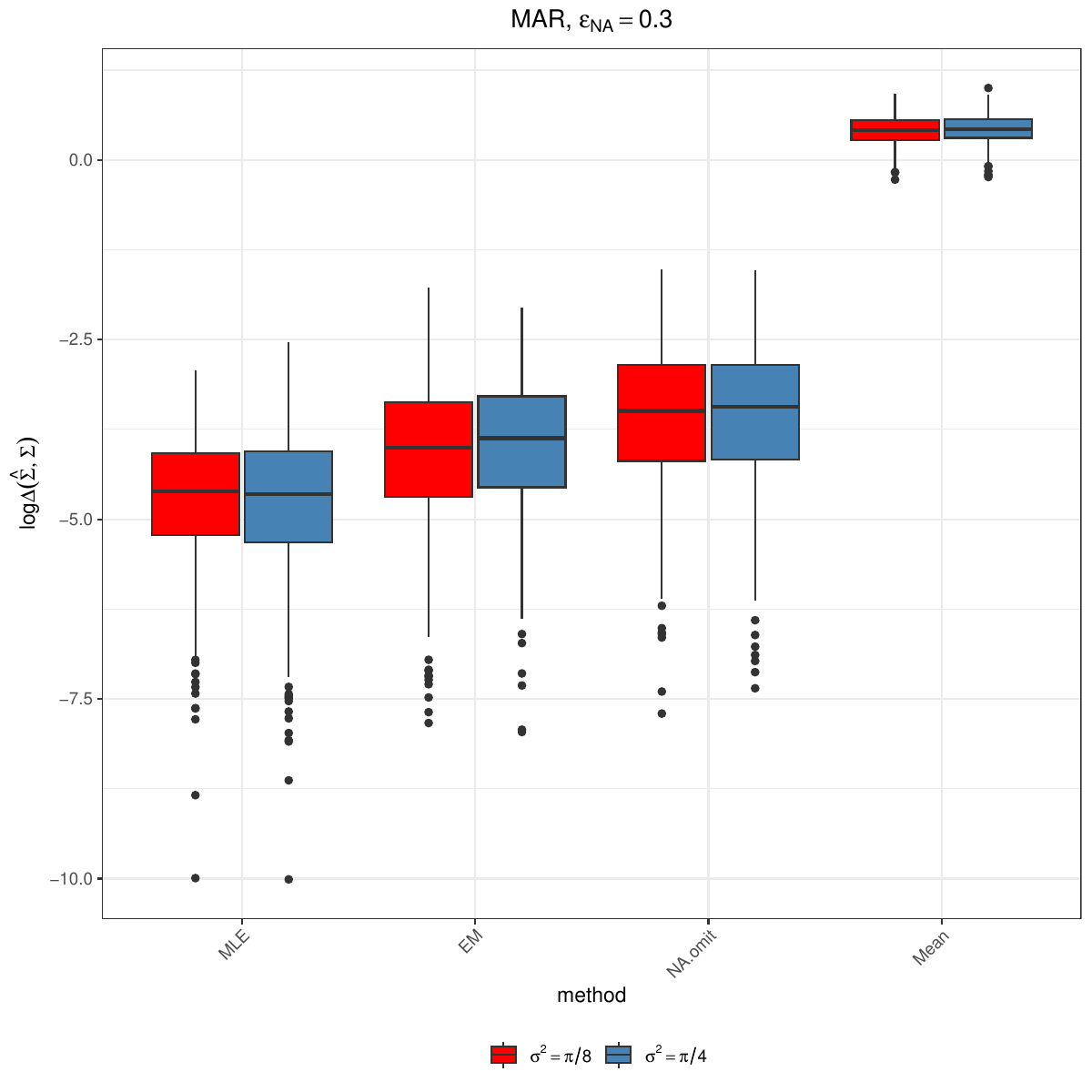}
	\includegraphics[scale=0.325]{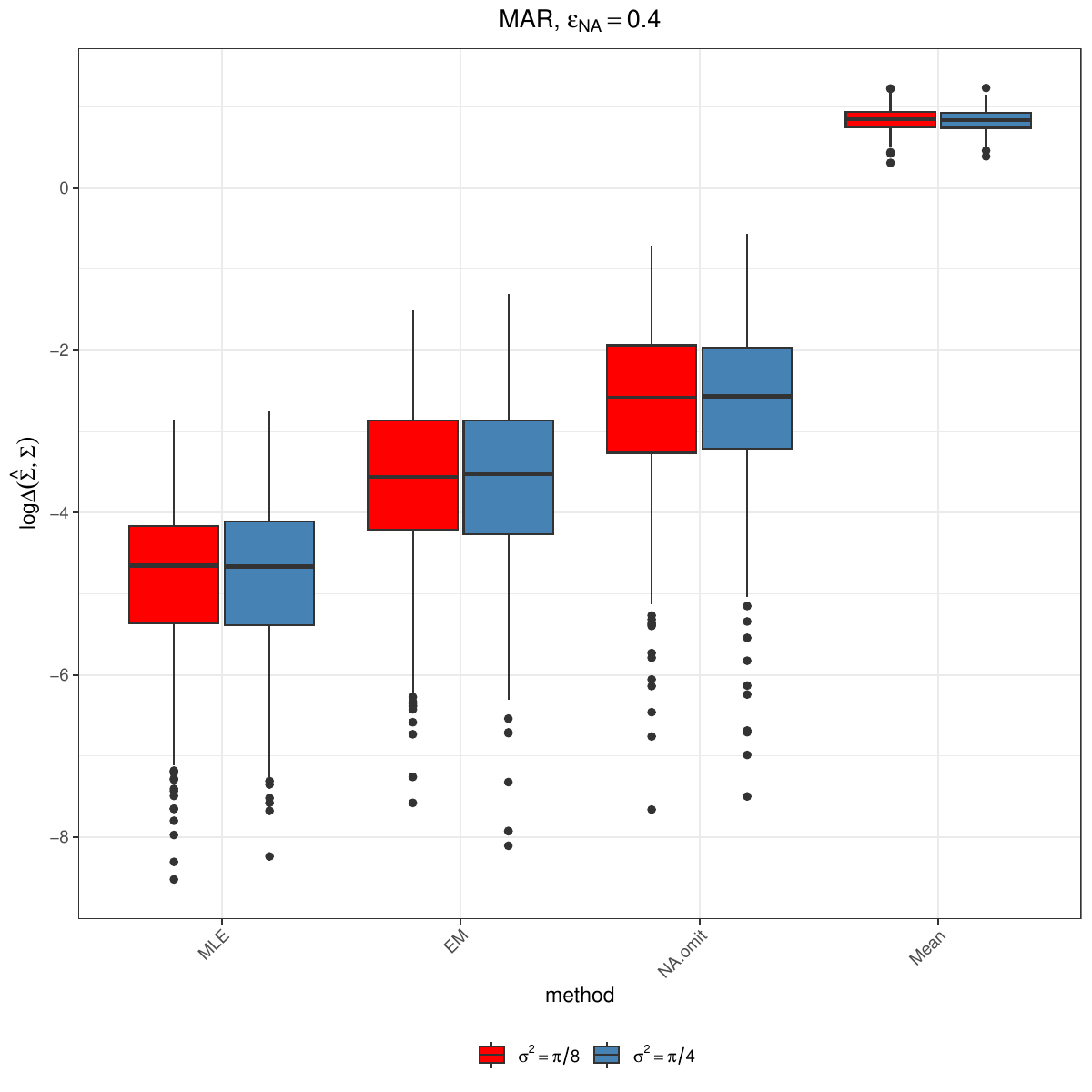}
	\caption{Numerical studies. Distribution of $\log\Delta(\hat\bSigma, \bSigma)$ for different methods uder MAR when $p=2$. Missingness rate is cellwise.}
	\label{fig:4a}
\end{figure}

\begin{figure}[!h]
	\includegraphics[scale=0.325]{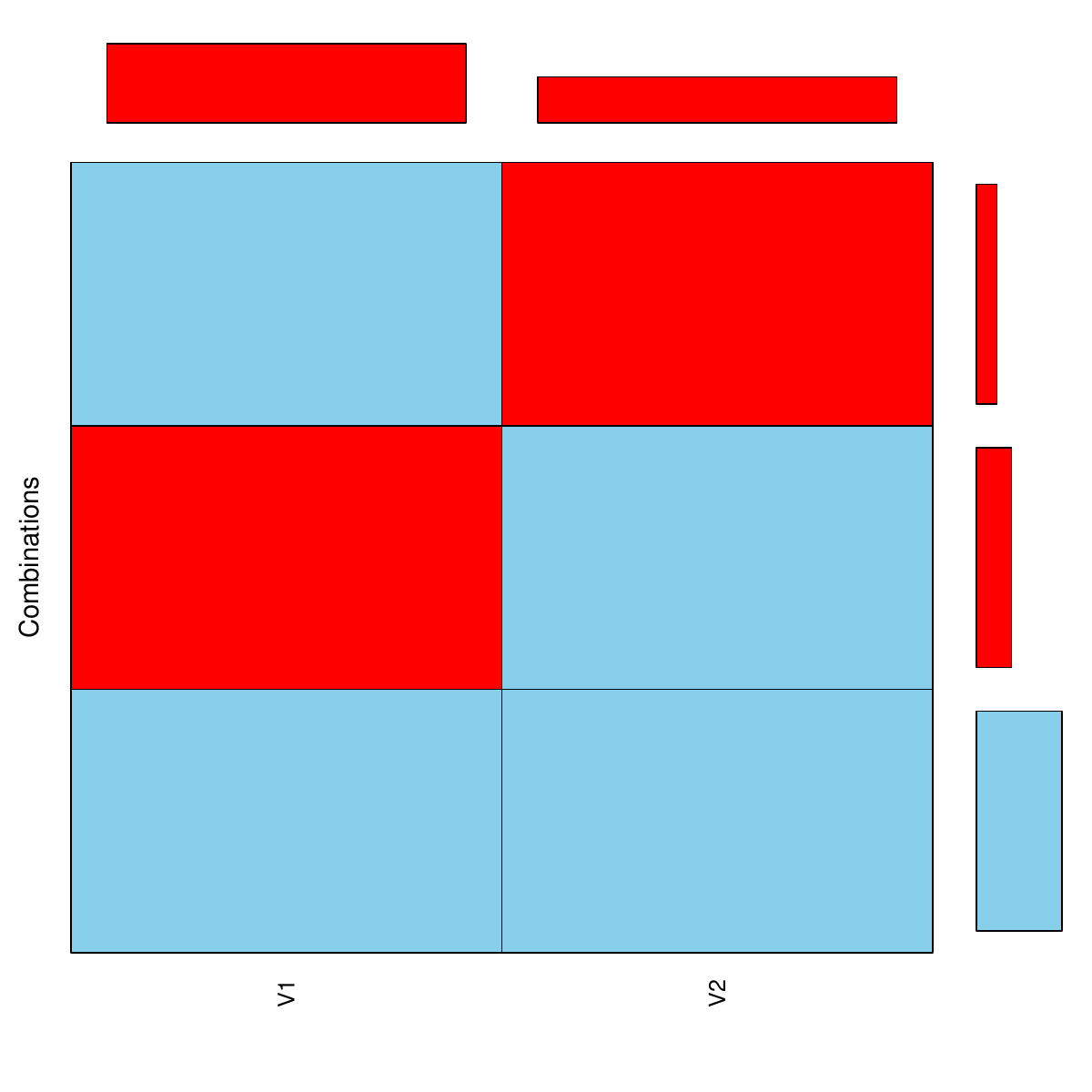}
	\includegraphics[scale=0.325]{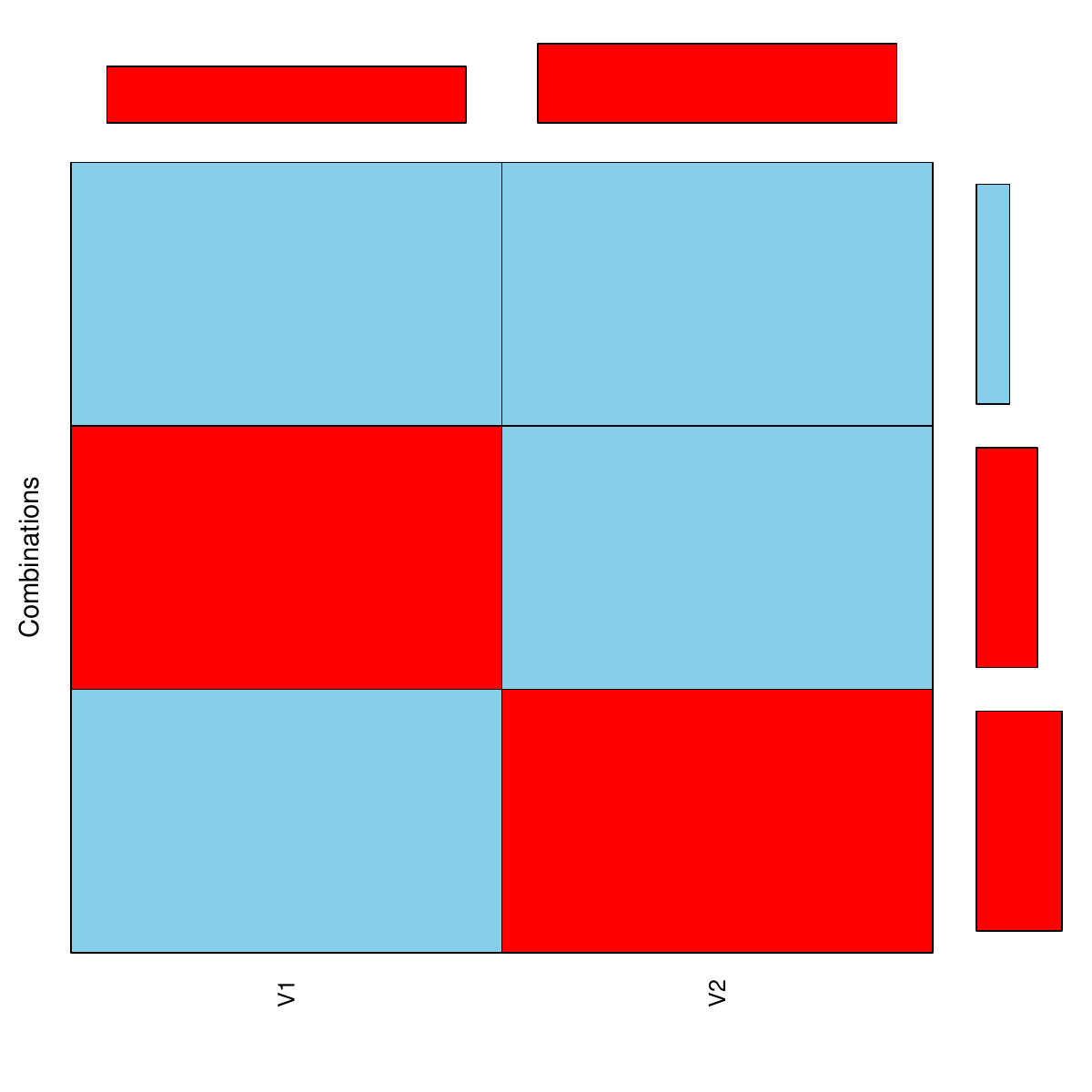}\\
	\includegraphics[scale=0.325]{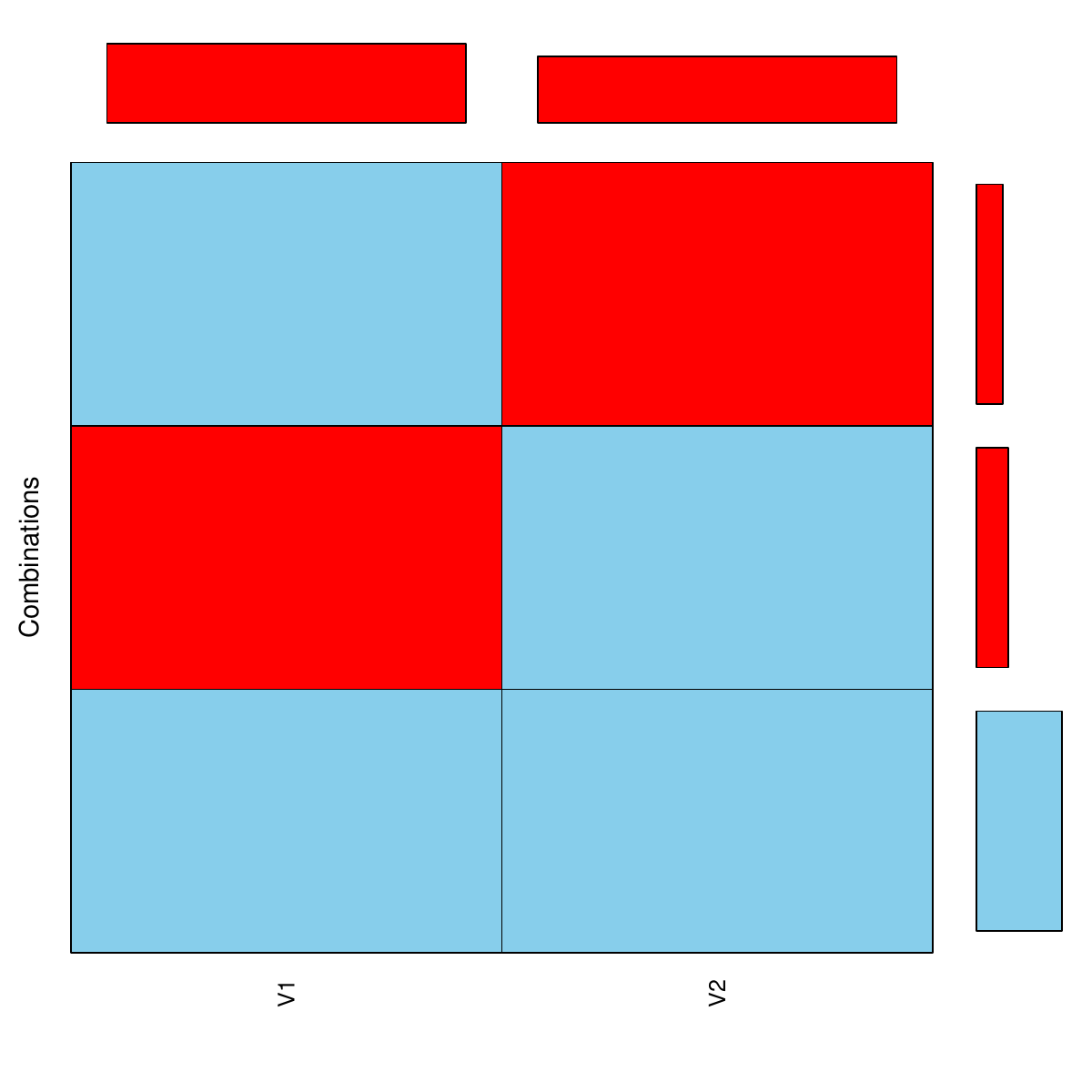}
	\includegraphics[scale=0.325]{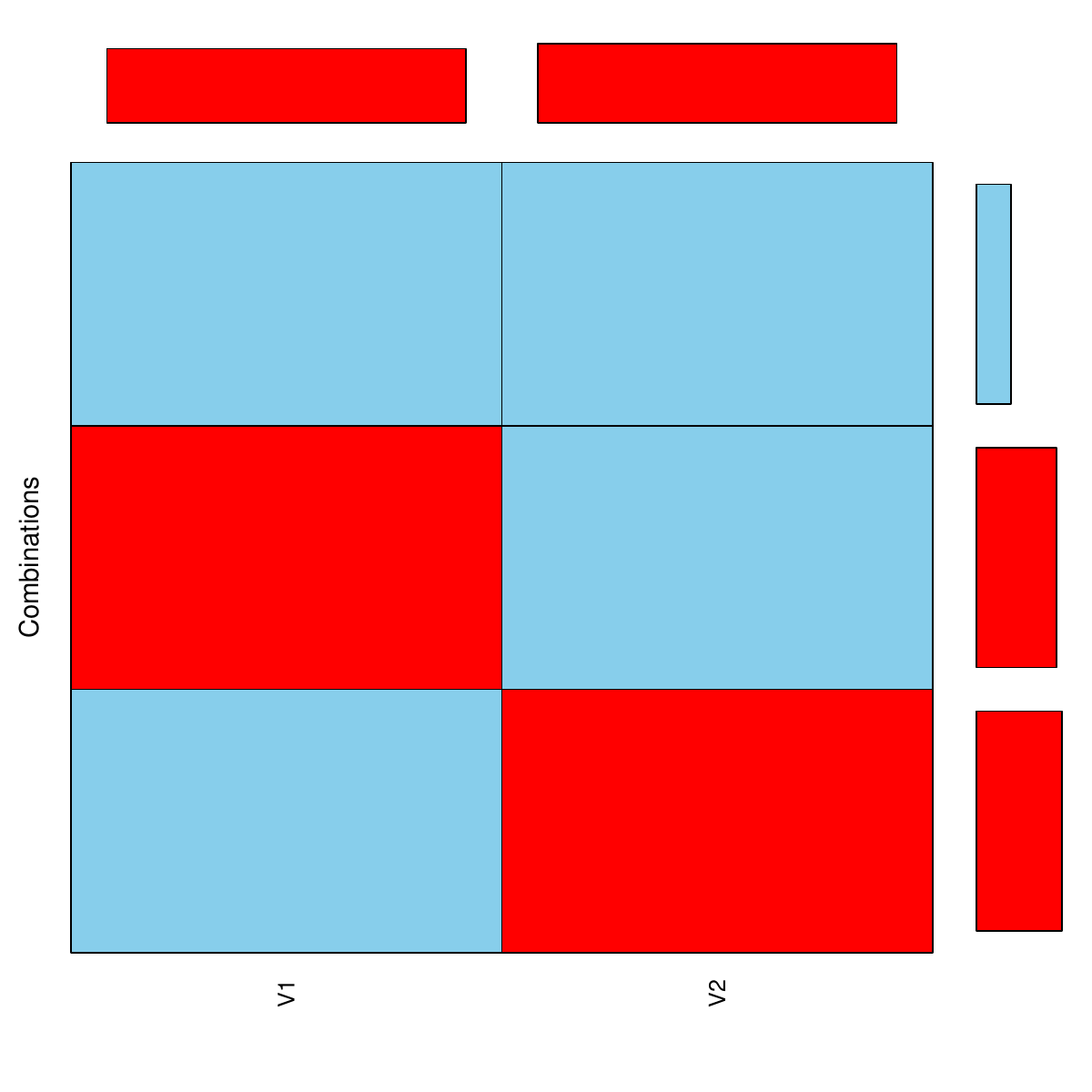}
	\caption{Numerical studies. Missingness patterns when $p=2$, $\epsilon_{NA}=0.4$ under MCAR (top)  and MAR (bottom), casewise (left) and cellwise (right) missingness rate. The top bars give the frequency of missingness by variables; the side bars give the frequency of missingness by pattern. Missing values in red.}
	\label{fig:pattern_p2}
\end{figure}


\begin{figure}[!h]
	\includegraphics[scale=0.325]{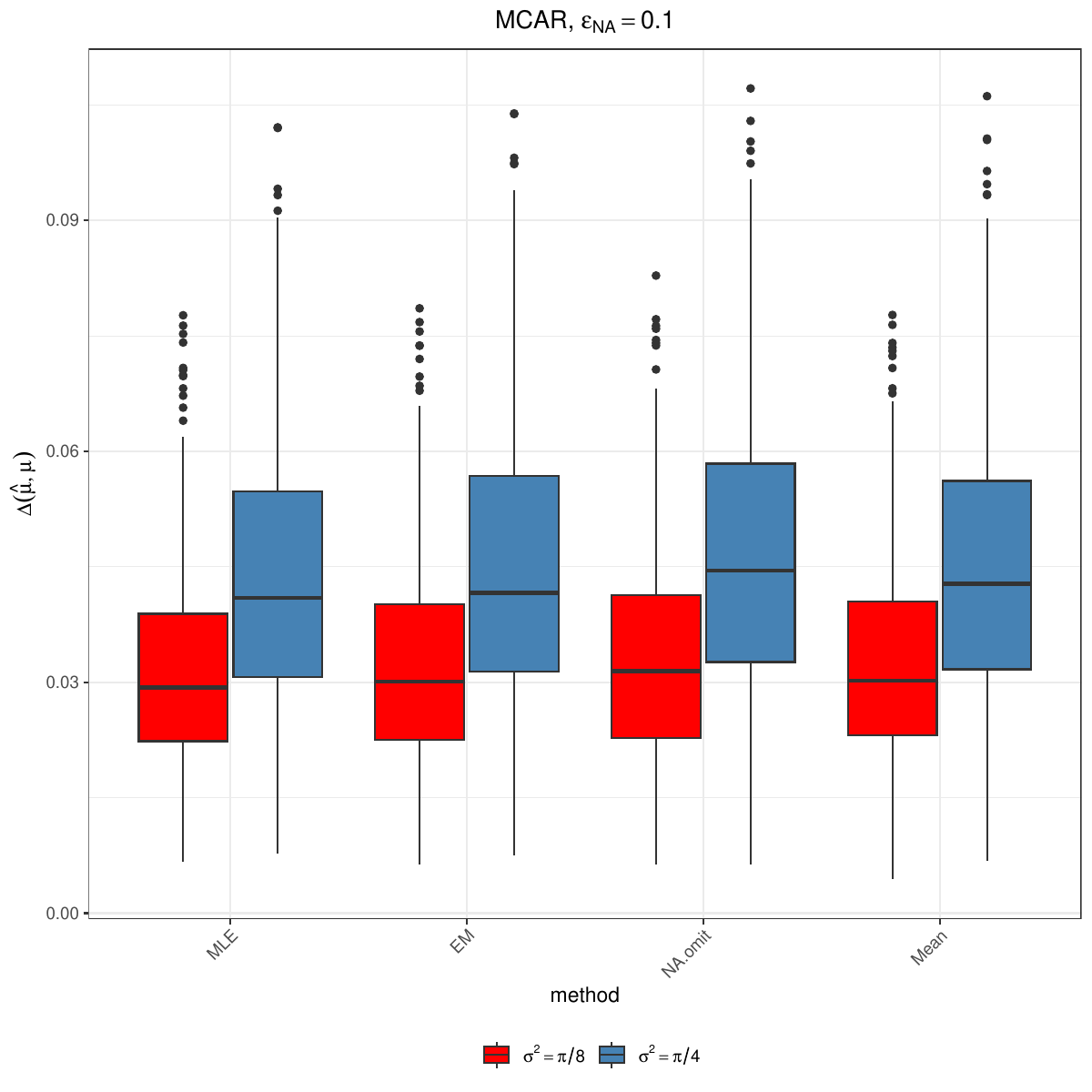}
	\includegraphics[scale=0.325]{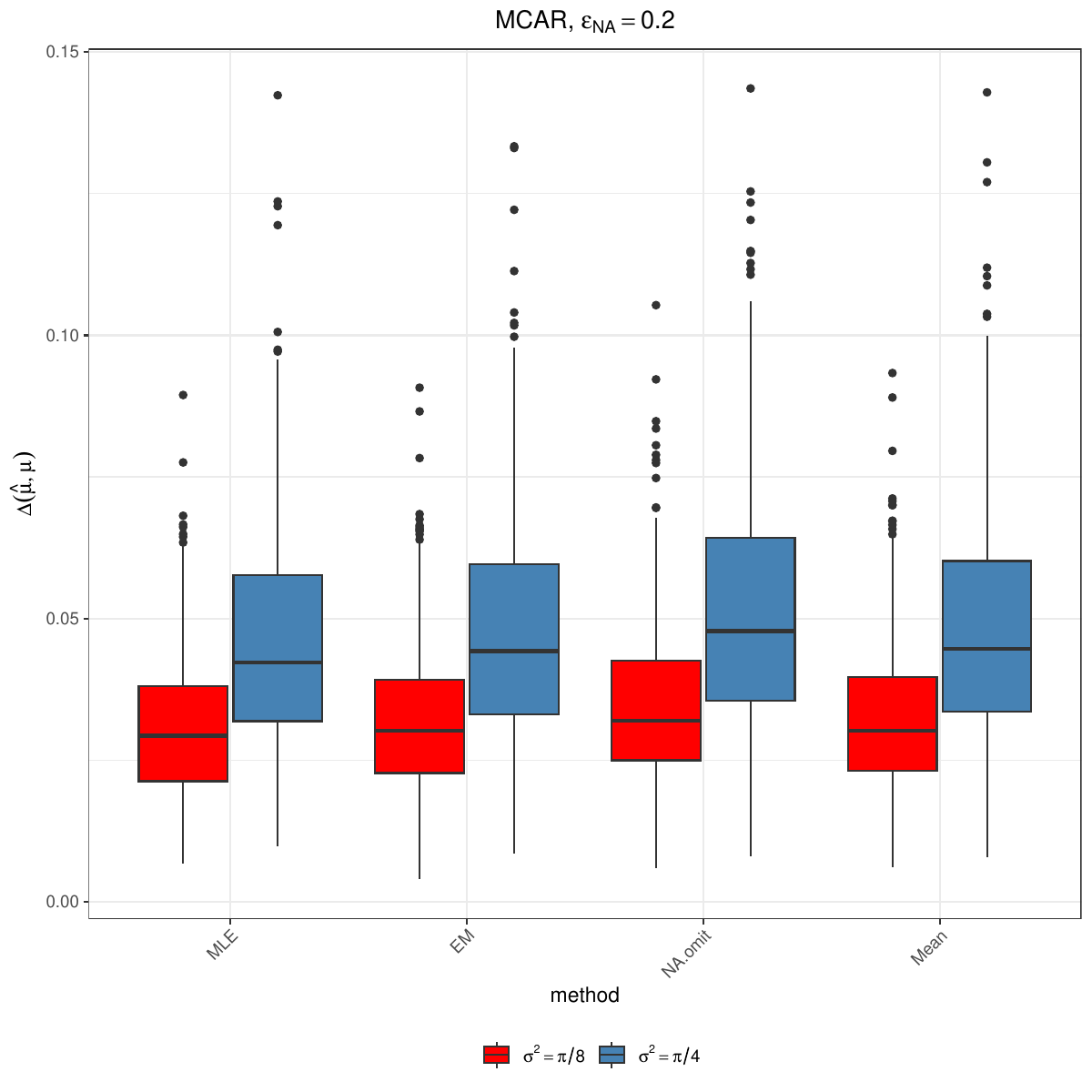}\\
	\includegraphics[scale=0.325]{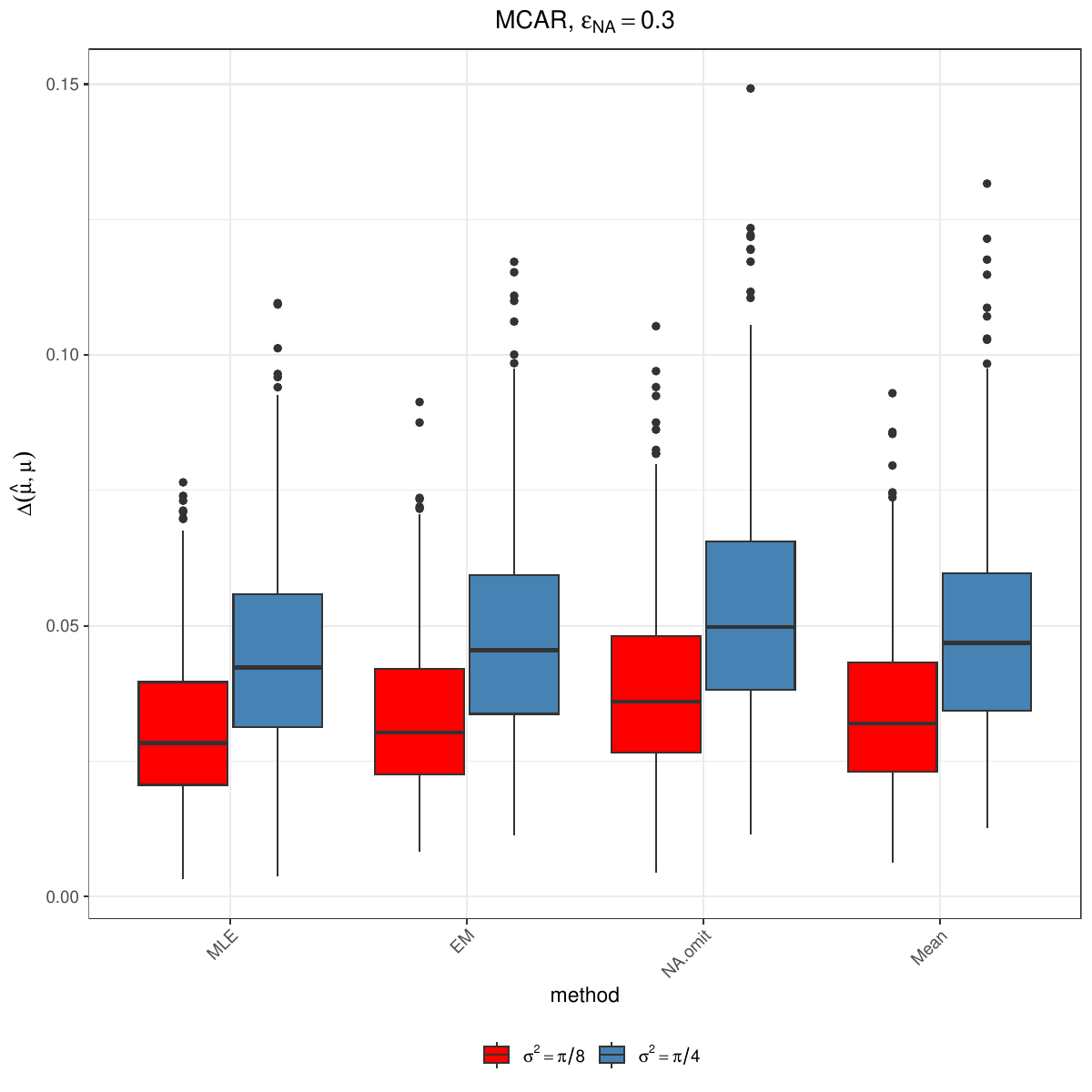}
	\includegraphics[scale=0.325]{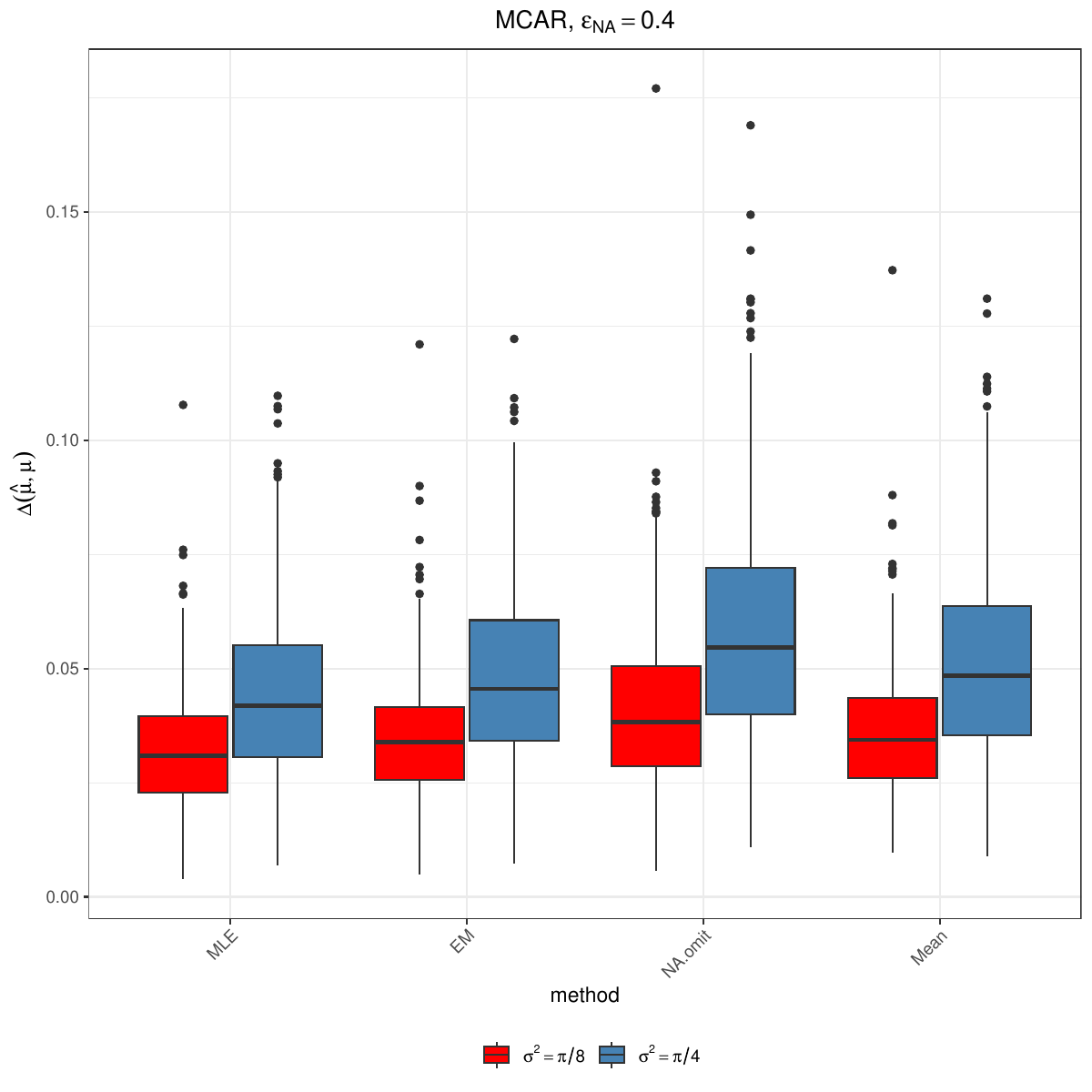}
	\caption{Numerical studies. Distribution of $\Delta(\hat\bmu)$ for different methods uder MCAR when $p=5$. Missingness rate is casewise.}
	\label{fig:5}
\end{figure}

\begin{figure}[!h]
	\includegraphics[scale=0.325]{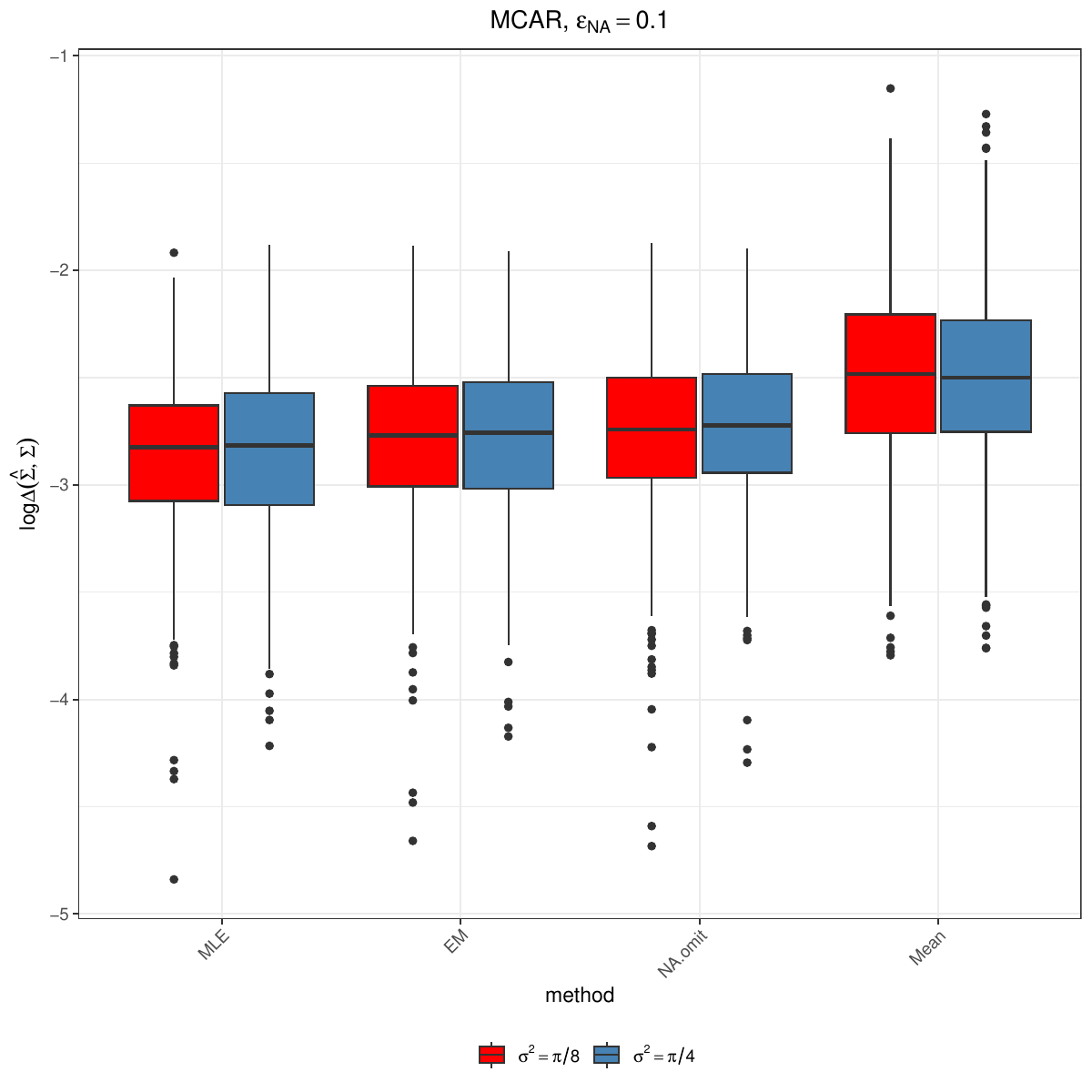}
	\includegraphics[scale=0.325]{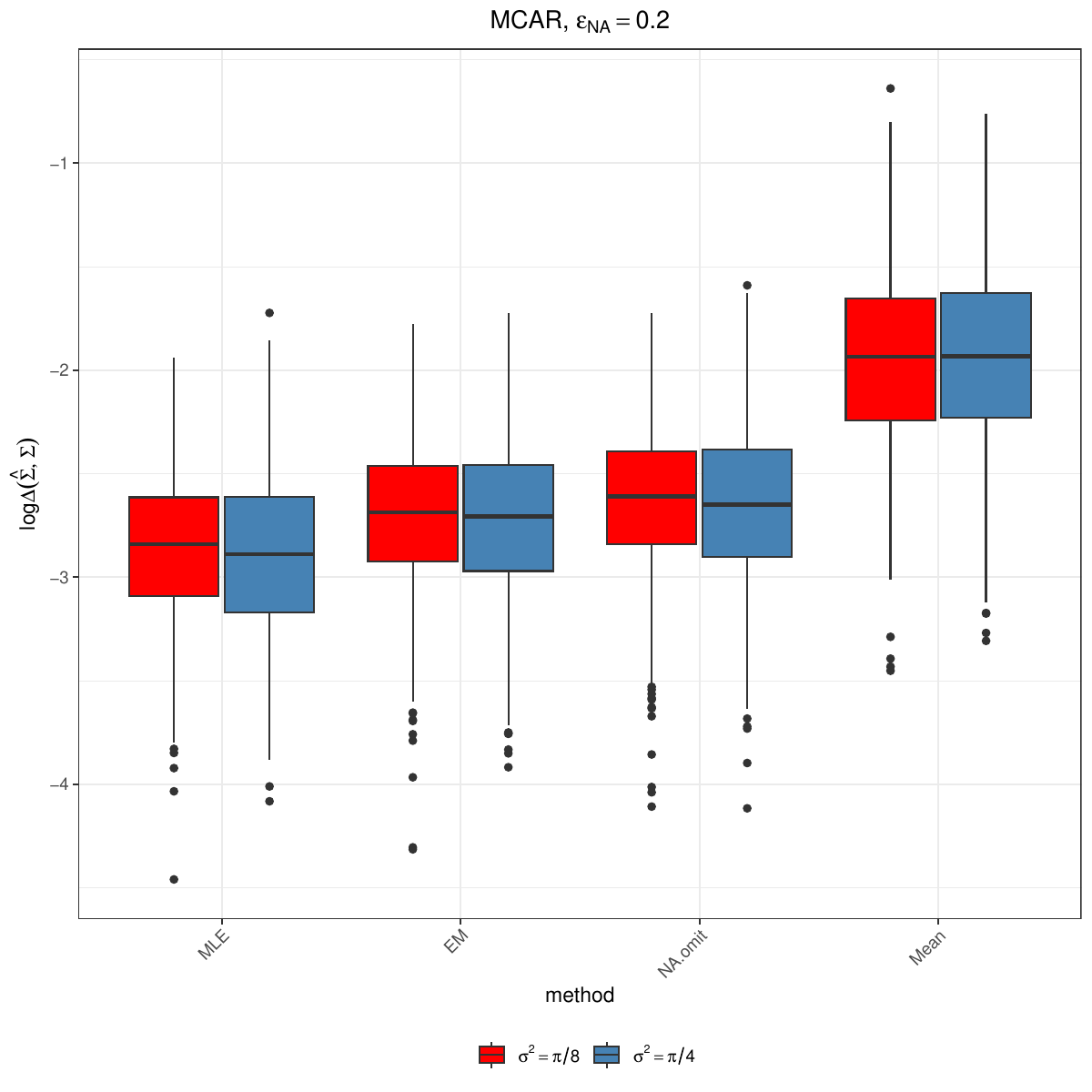}\\
	\includegraphics[scale=0.325]{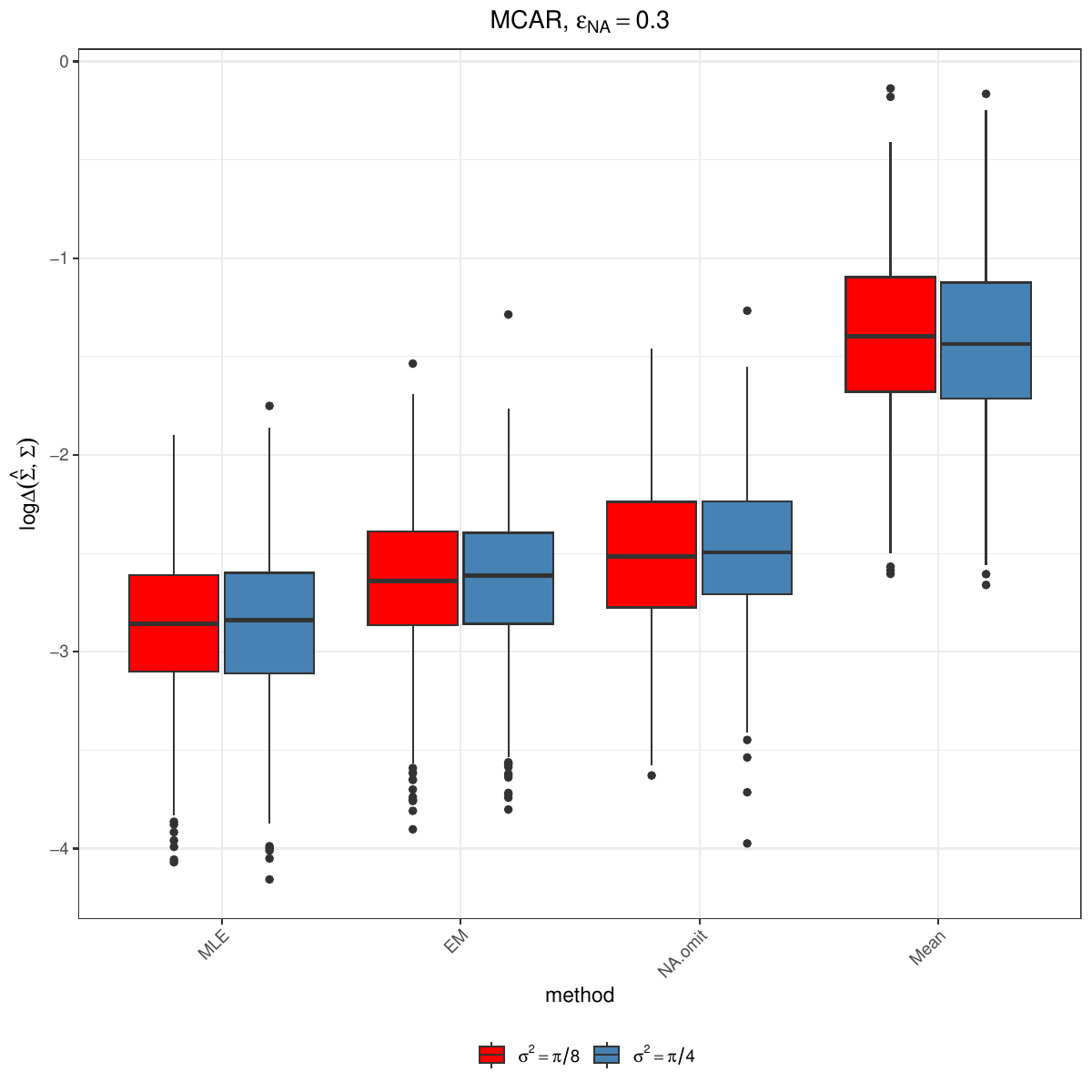}
	\includegraphics[scale=0.325]{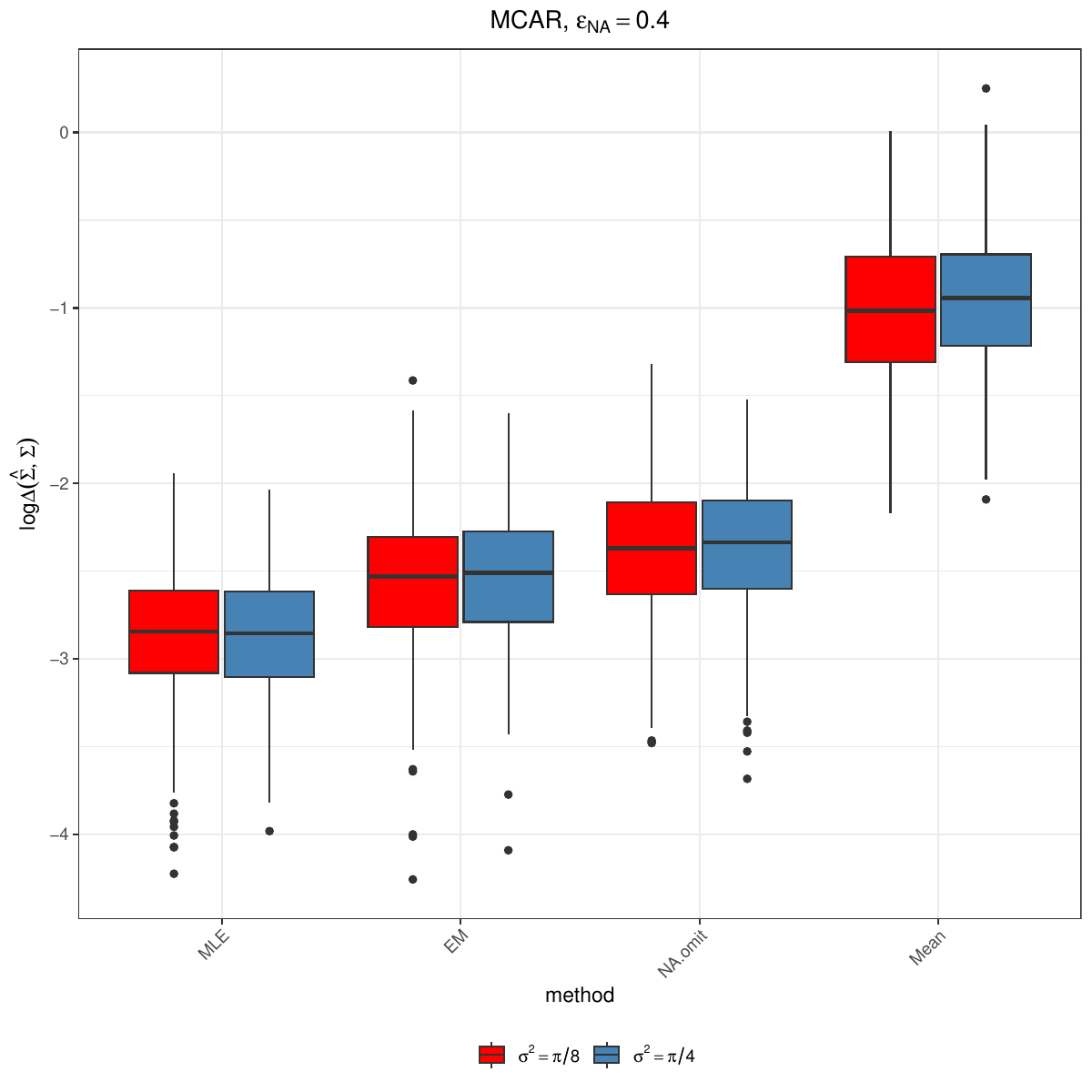}
	\caption{Numerical studies. Distribution of $\log\Delta(\hat\bSigma, \bSigma)$ for different methods uder MCAR when $p=5$. Missingness rate is casewise.}
	\label{fig:6}
\end{figure}

\begin{figure}[!h]
	\includegraphics[scale=0.325]{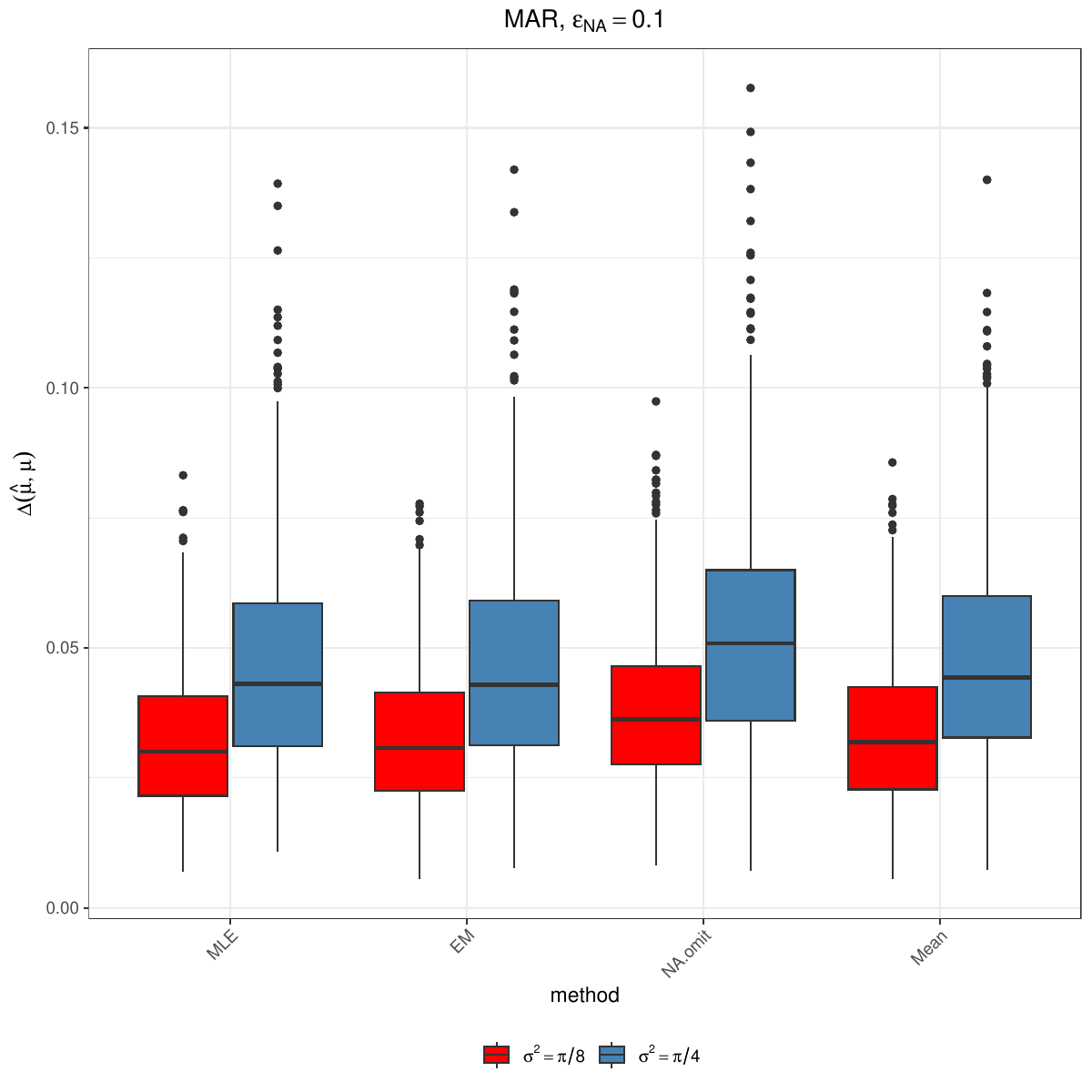}
	\includegraphics[scale=0.325]{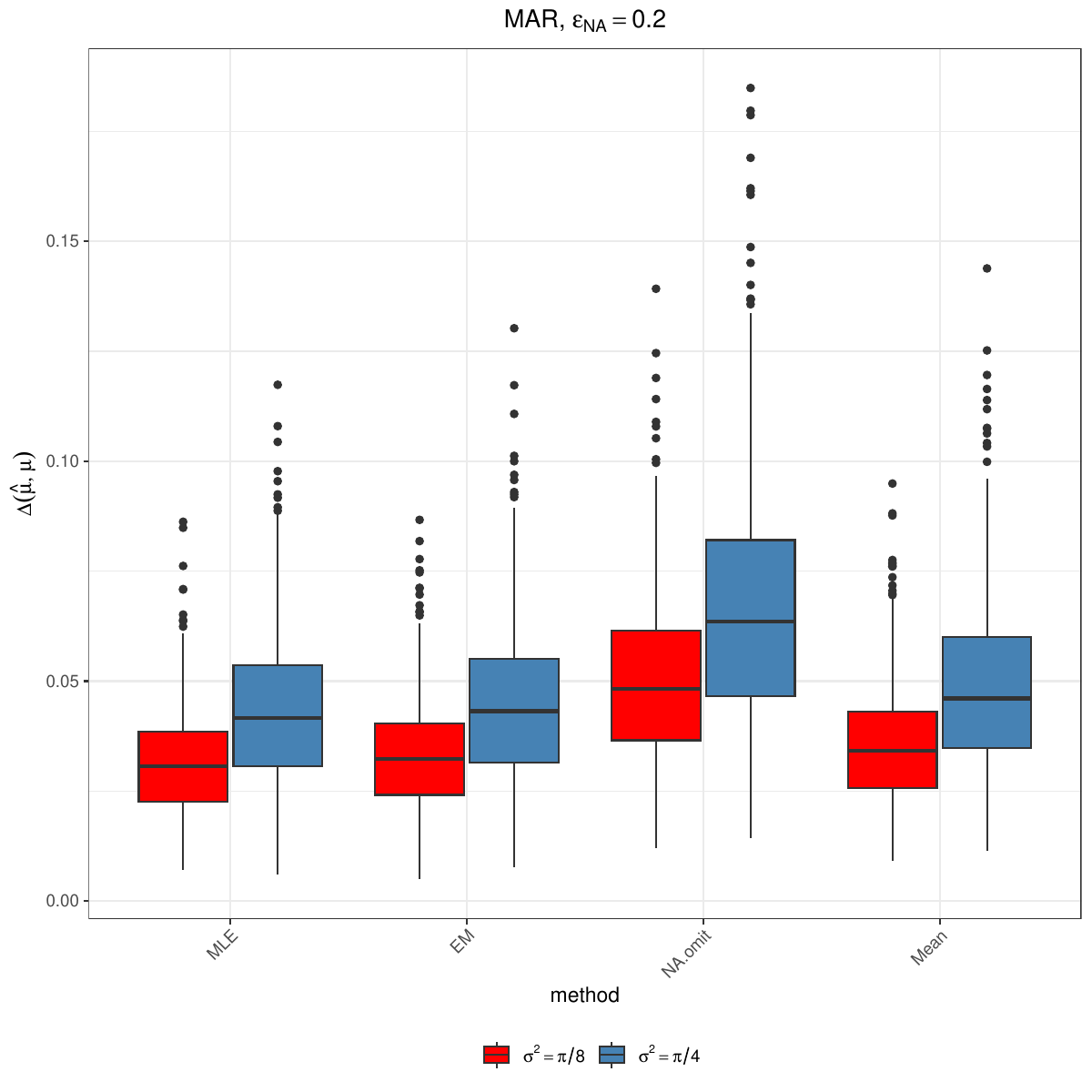}\\
	\includegraphics[scale=0.325]{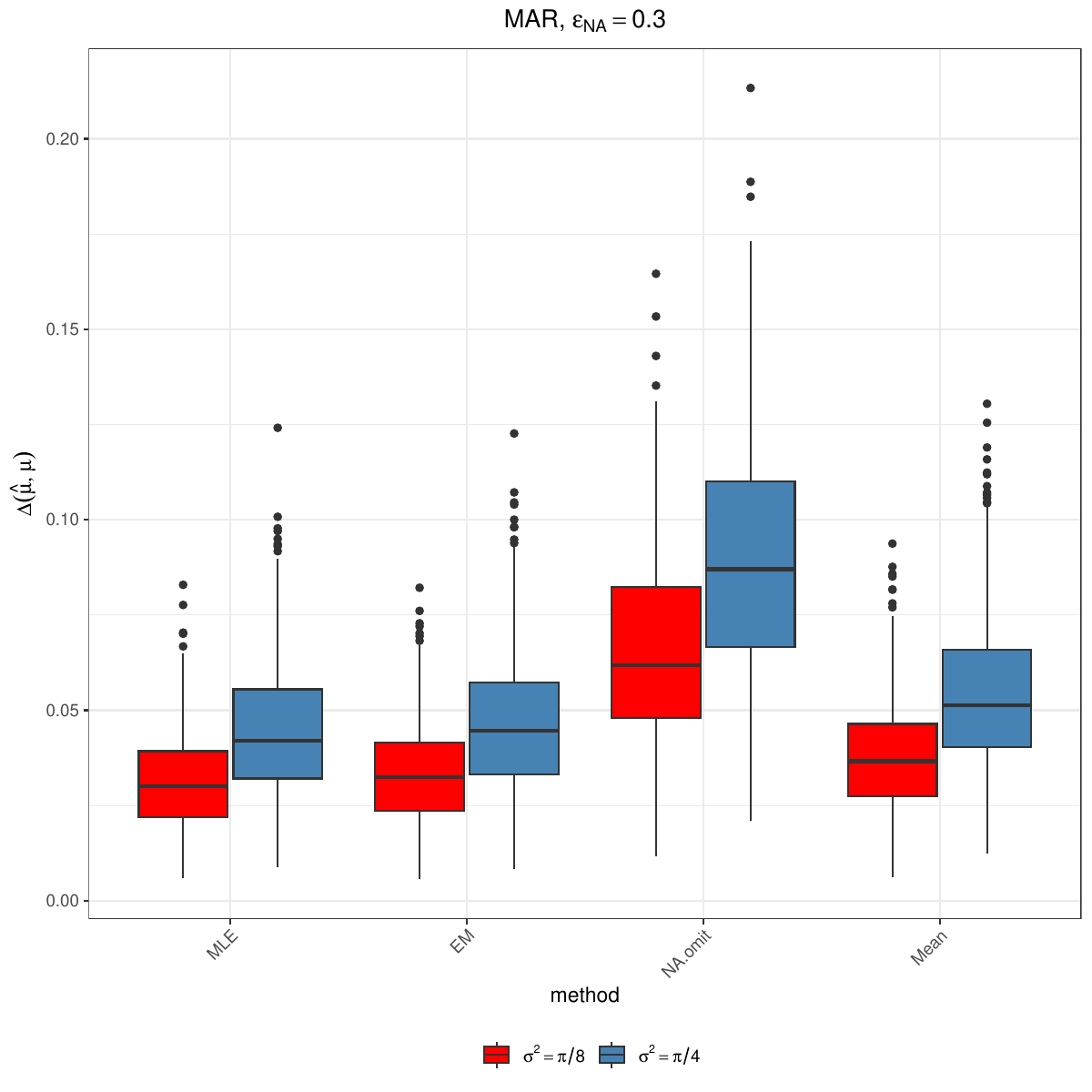}
	\includegraphics[scale=0.325]{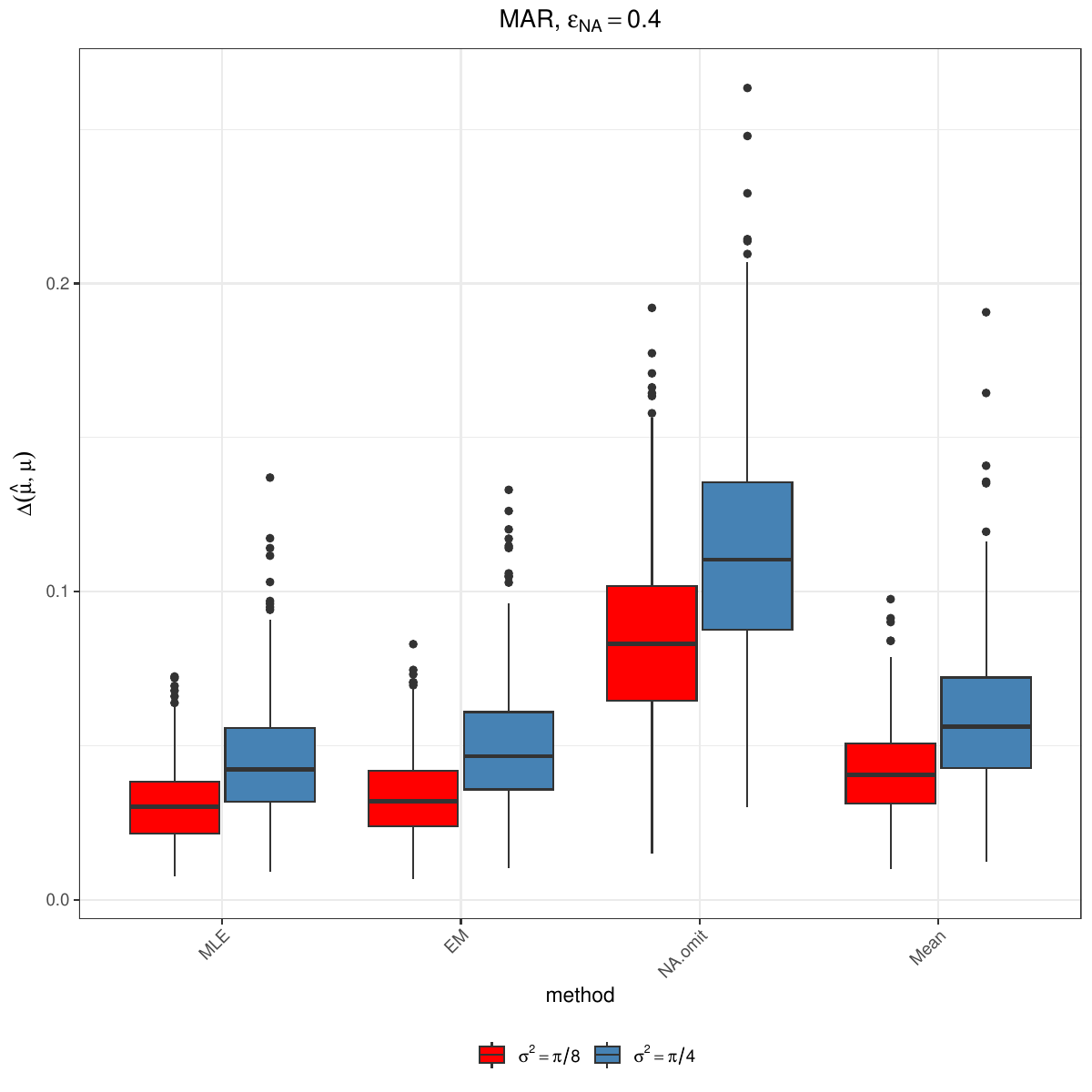}
	\caption{Numerical studies. Distribution of $\Delta(\hat\bmu)$ for different methods uder MAR when $p=5$. Missingness rate is casewise.}
	\label{fig:7}
\end{figure}

\begin{figure}[!h]
	\includegraphics[scale=0.325]{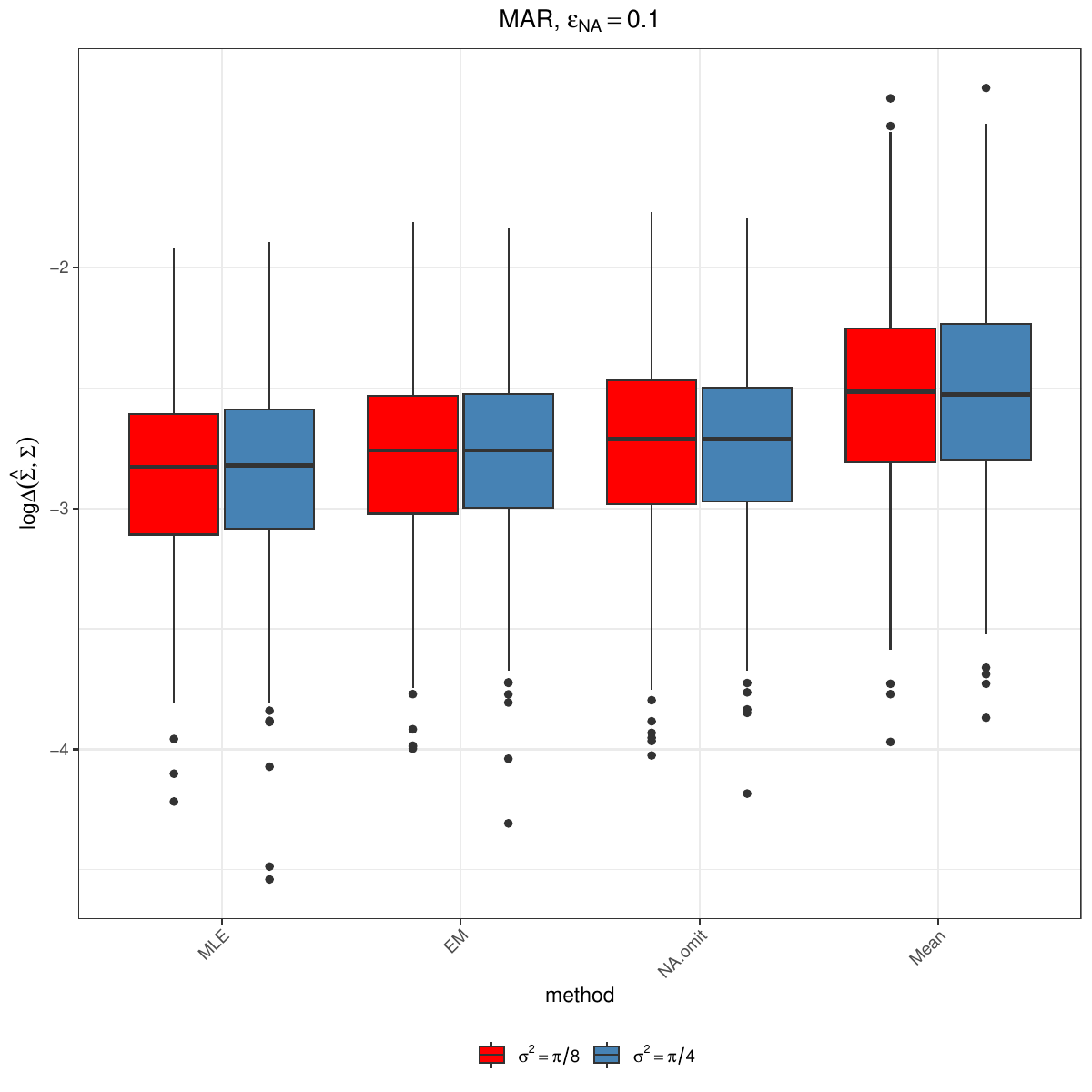}
	\includegraphics[scale=0.325]{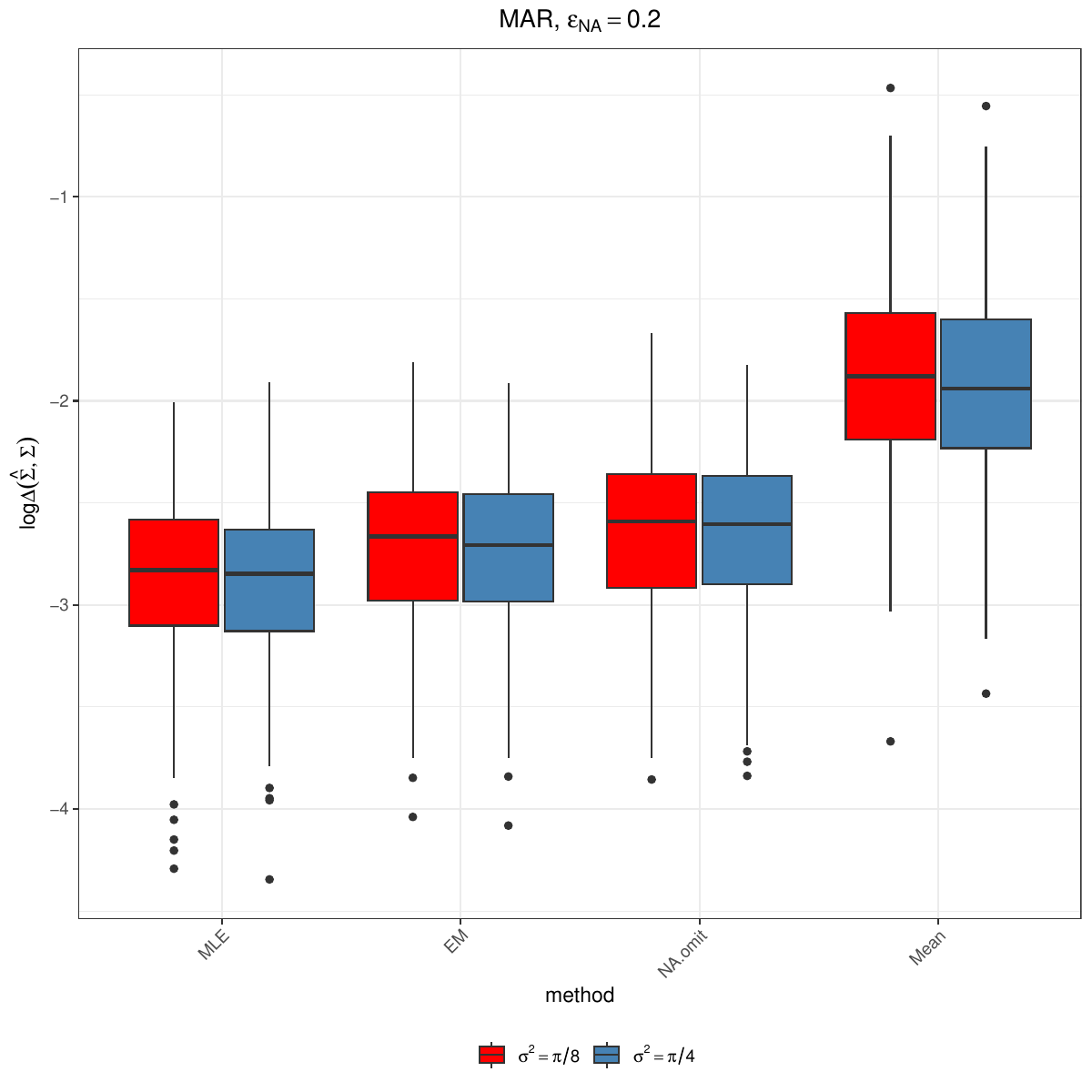}\\
	\includegraphics[scale=0.325]{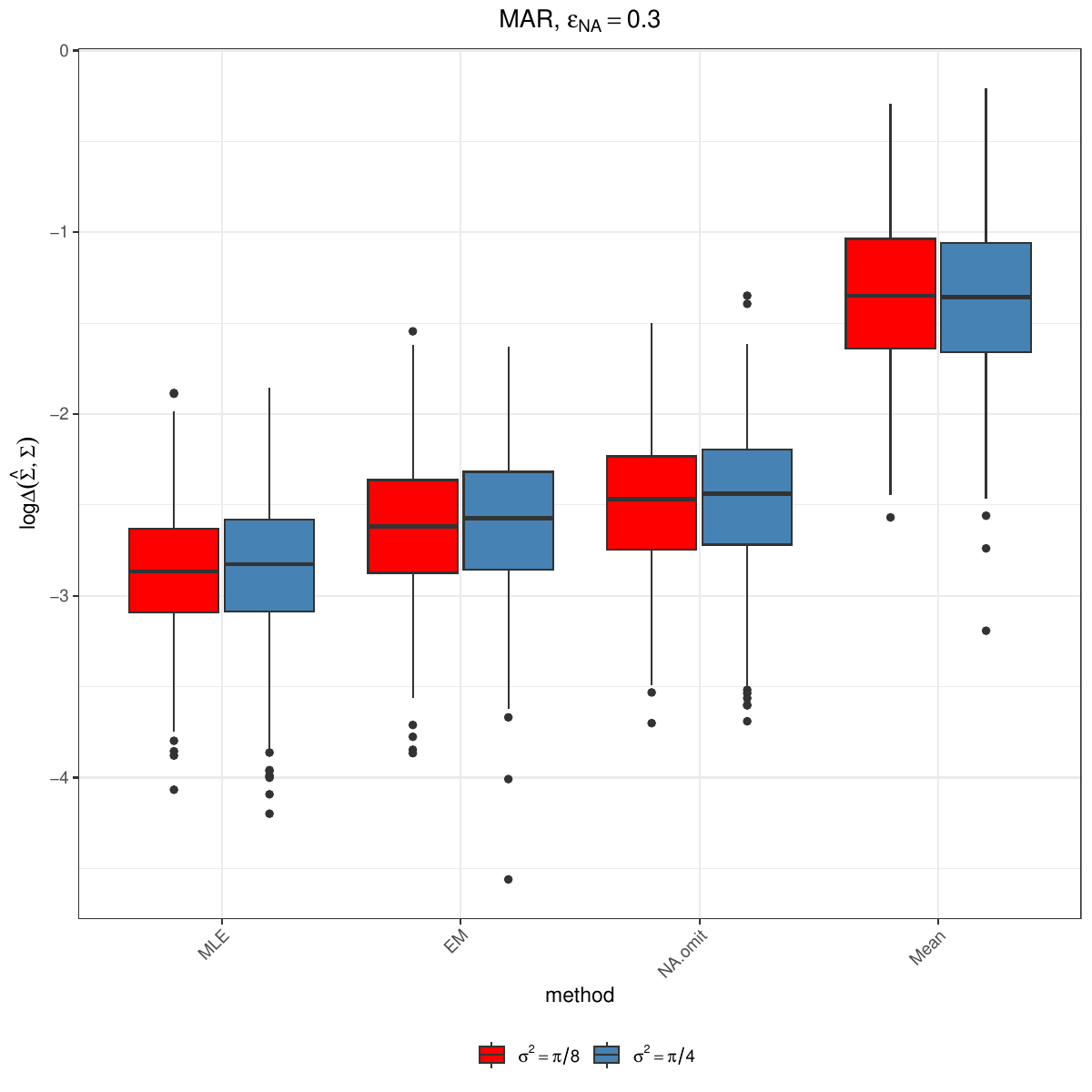}
	\includegraphics[scale=0.325]{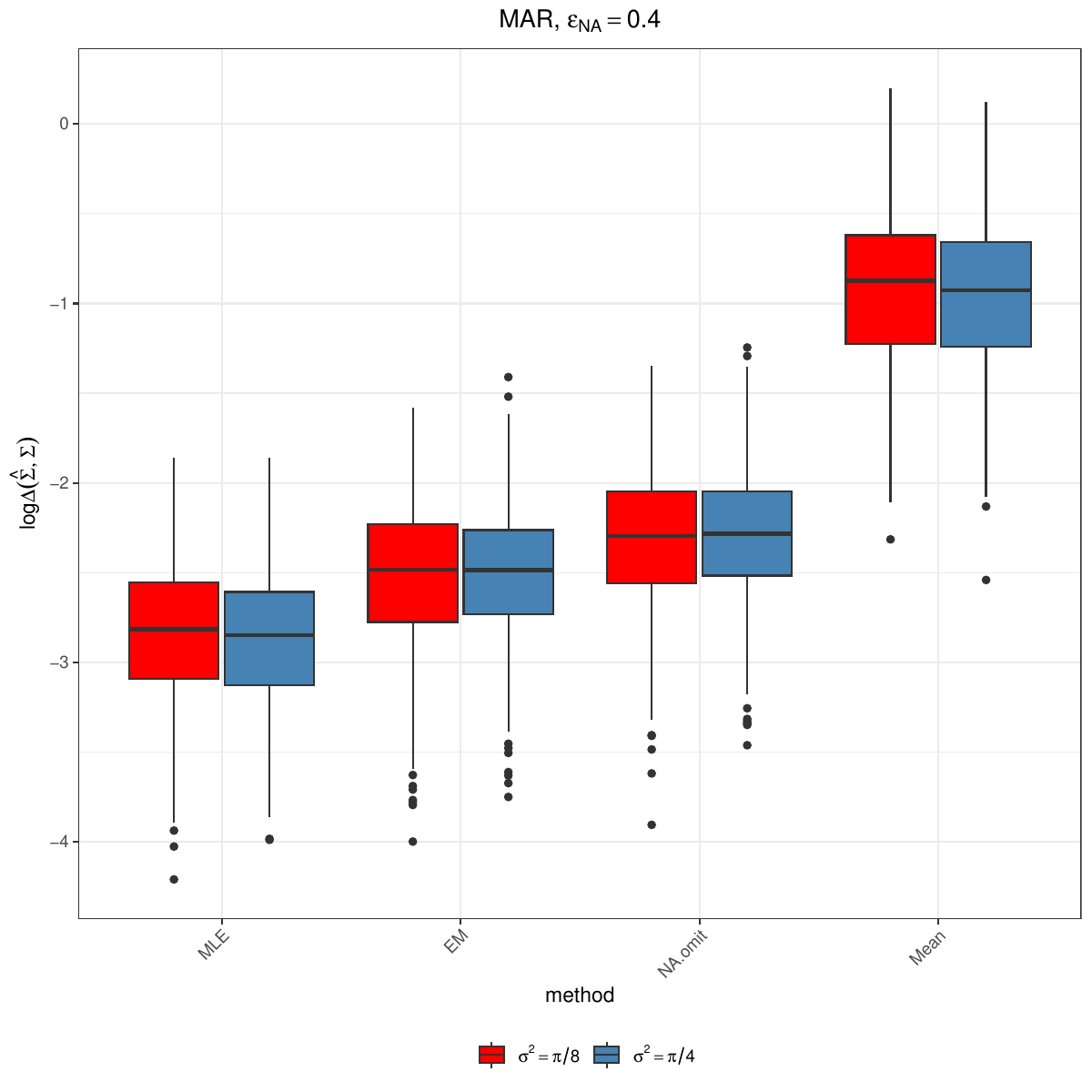}
	\caption{Numerical studies. Distribution of $\log\Delta(\hat\bSigma, \bSigma)$ for different methods uder MAR when $p=5$. Missingness rate is casewise.}
	\label{fig:8}
\end{figure}


\begin{figure}[!h]
	\includegraphics[scale=0.325]{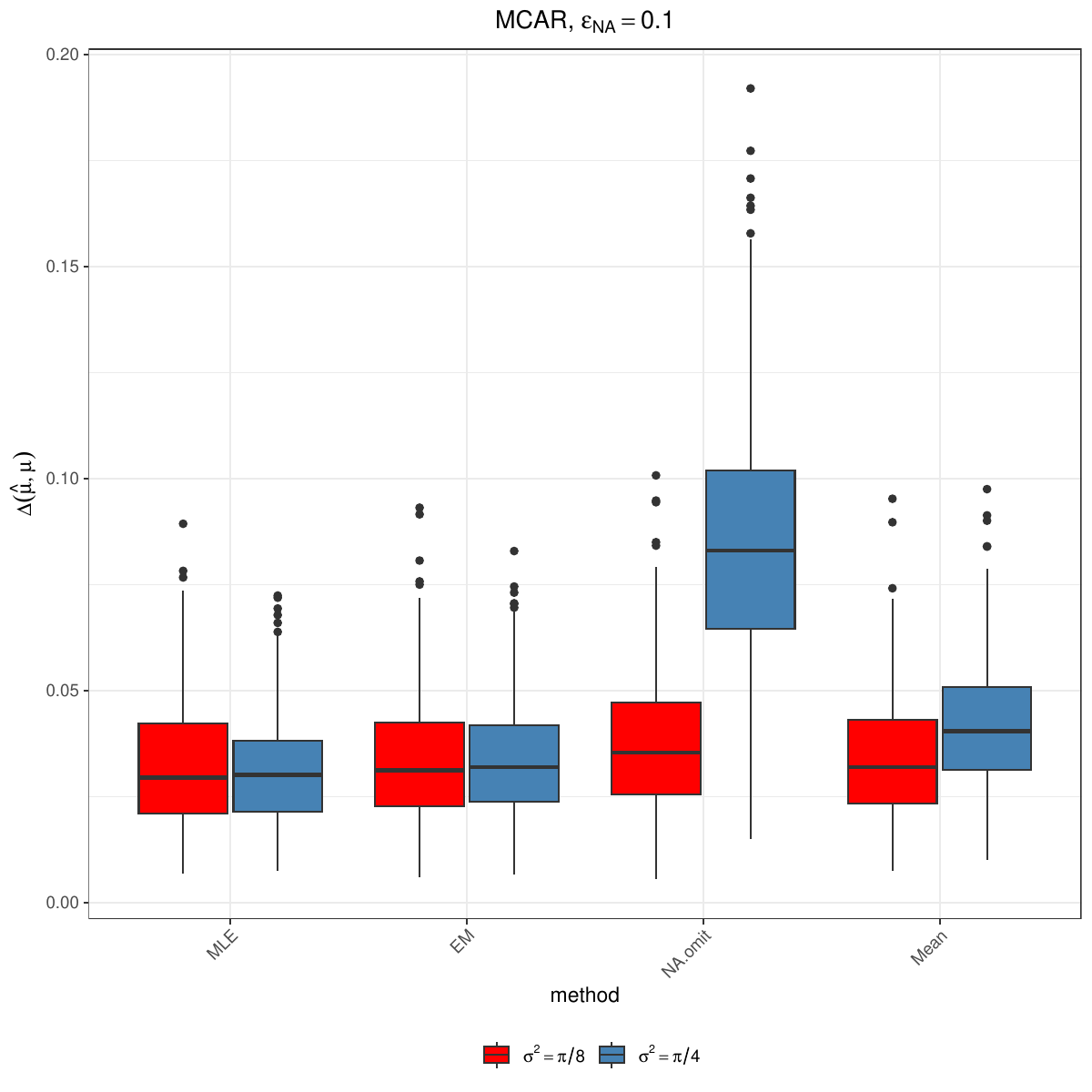}
	\includegraphics[scale=0.325]{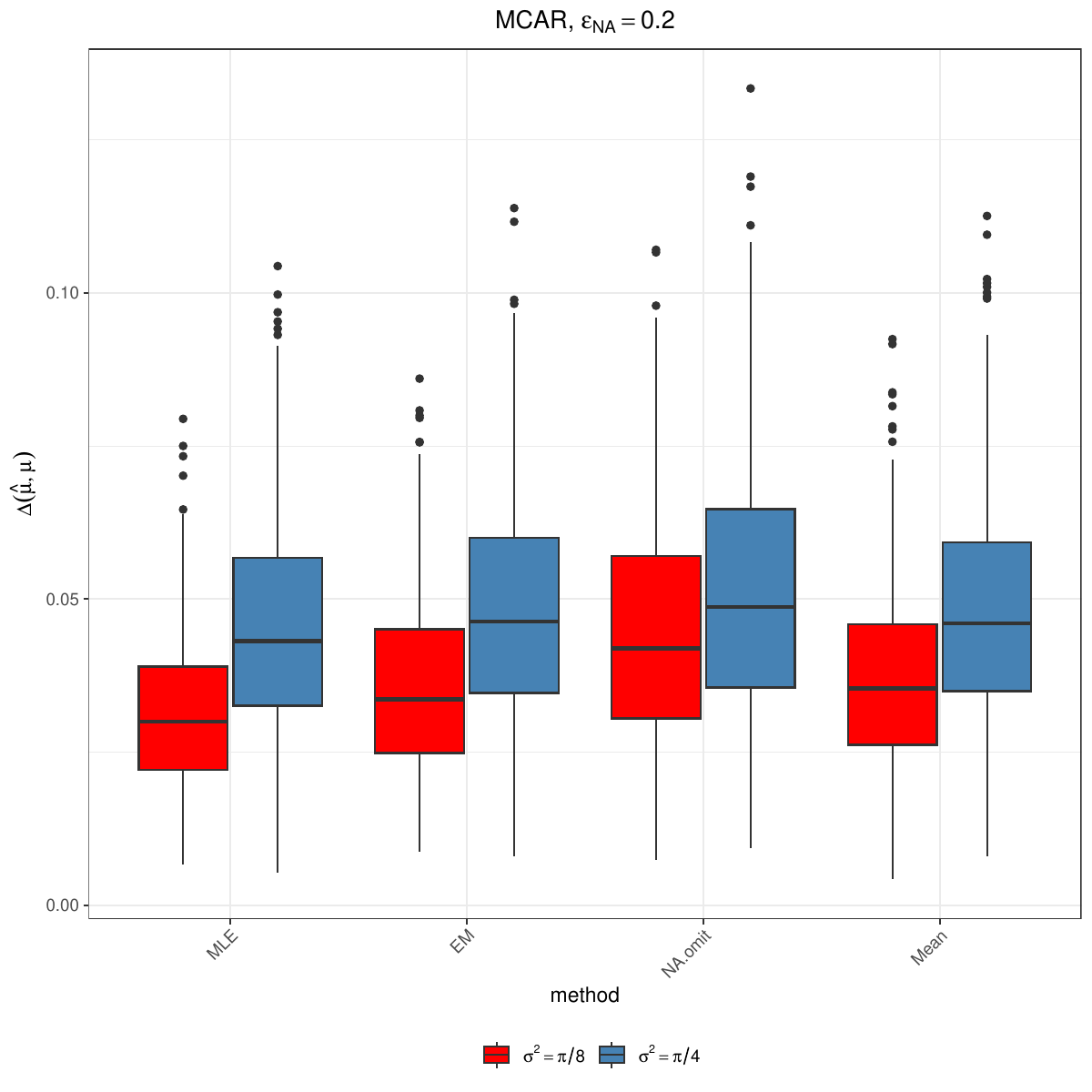}
	\includegraphics[scale=0.325]{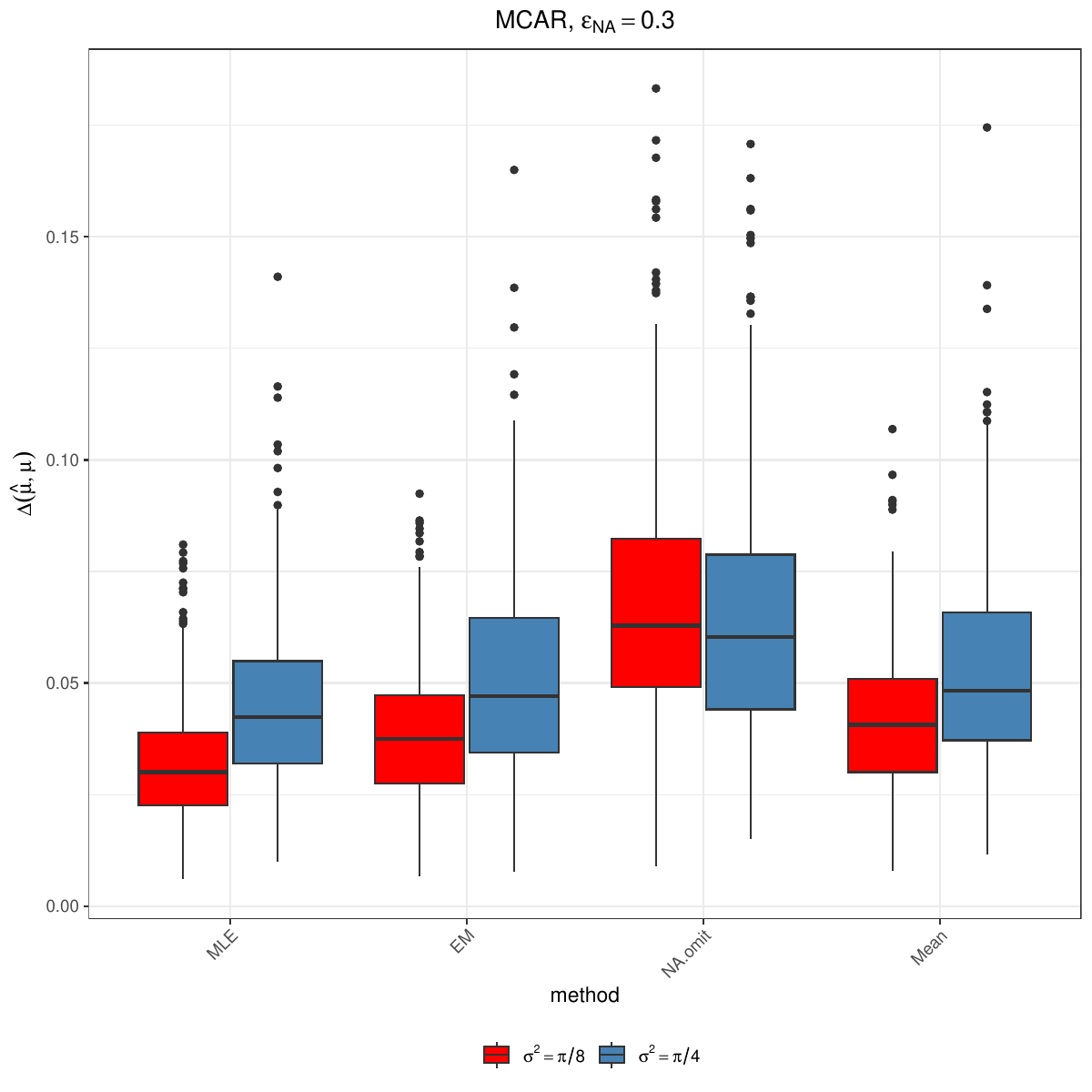}
	\caption{Numerical studies. Distribution of $\Delta(\hat\bmu)$ for different methods uder MCAR when $p=5$. Missingness rate is cellwise.}
	\label{fig:5a}
\end{figure}

\begin{figure}[!h]
	\includegraphics[scale=0.325]{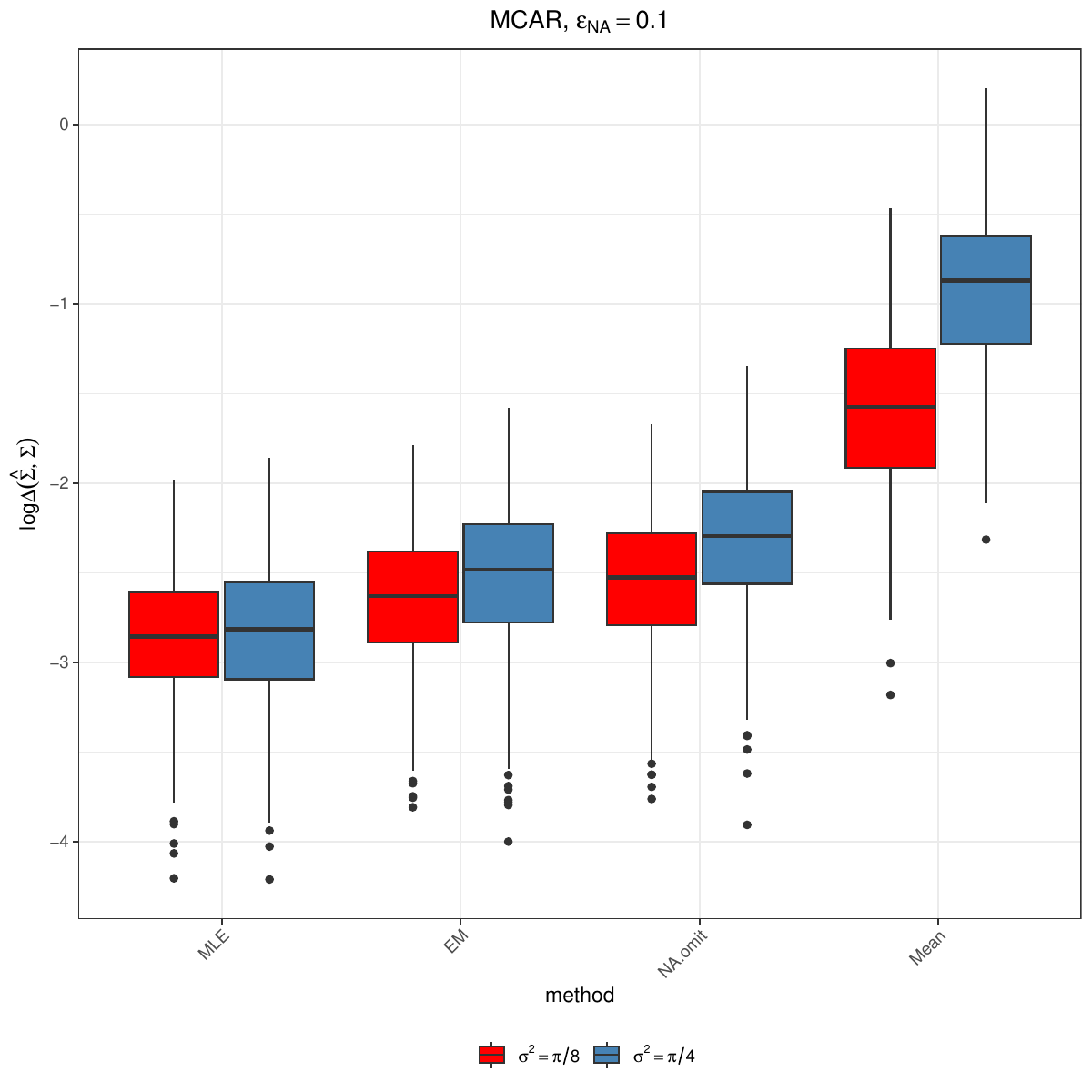}
	\includegraphics[scale=0.325]{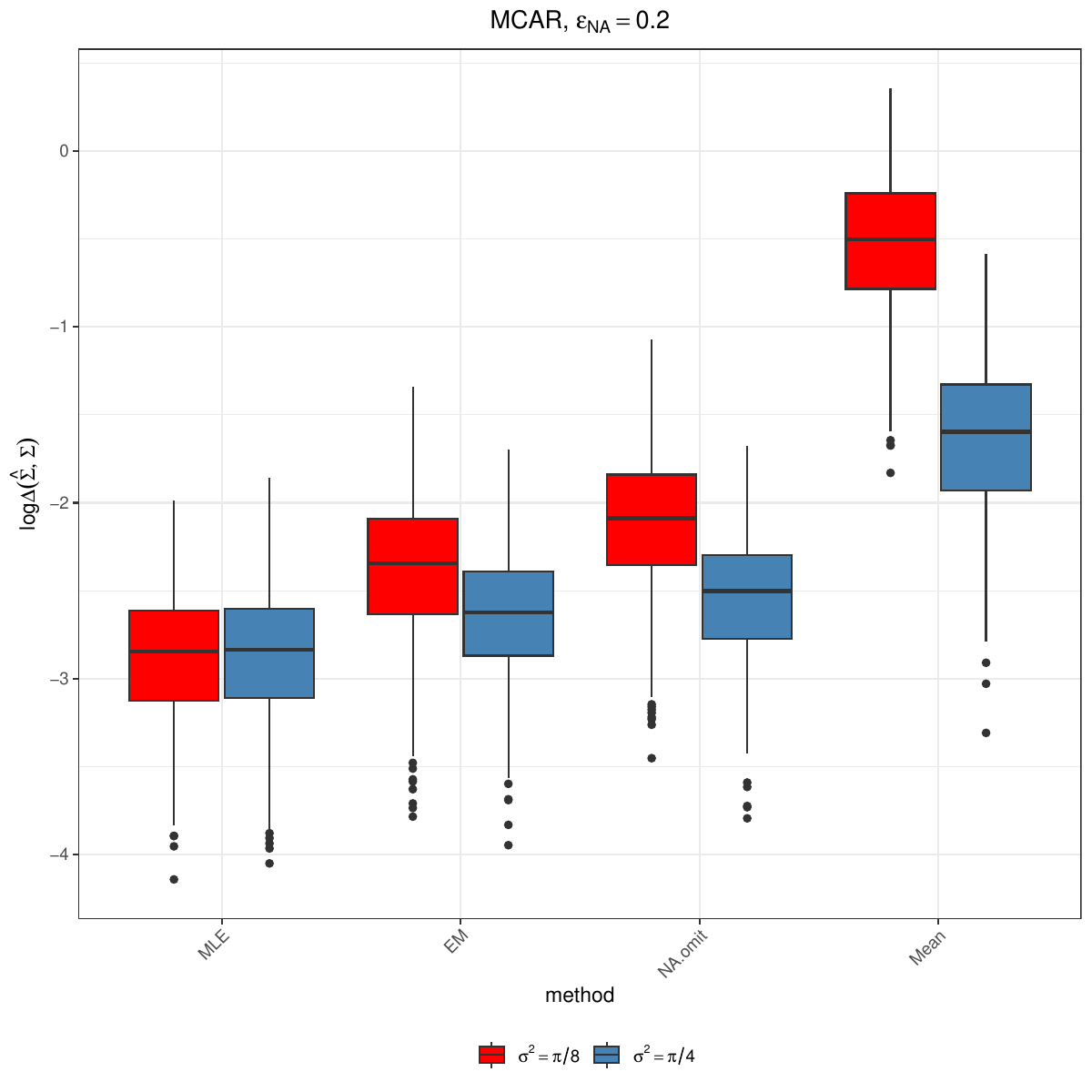}
	\includegraphics[scale=0.325]{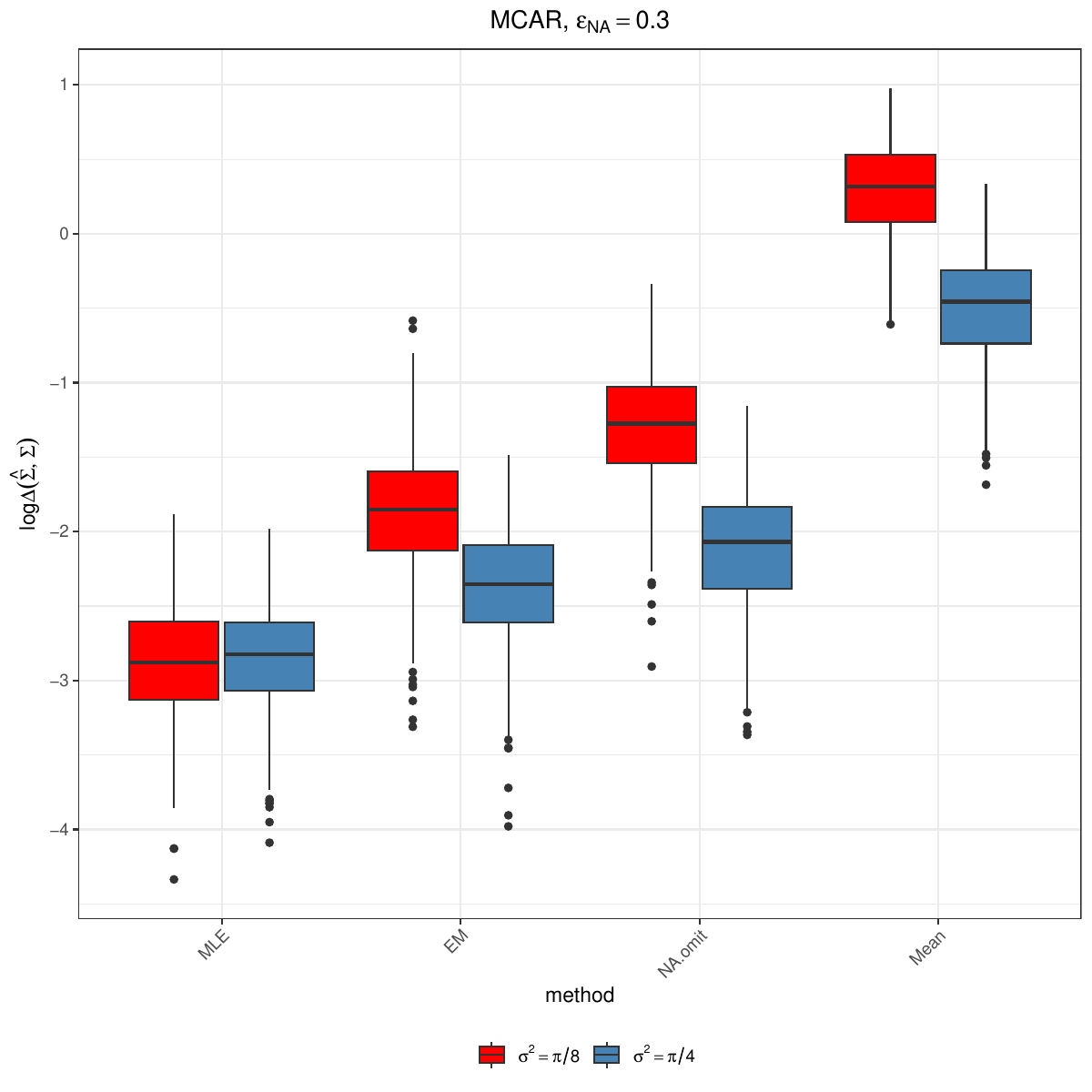}
	\caption{Numerical studies. Distribution of $\log\Delta(\hat\bSigma, \bSigma)$ for different methods uder MCAR when $p=5$. Missingness rate is cellwise.}
	\label{fig:6a}
\end{figure}

\begin{figure}[!h]
	\includegraphics[scale=0.325]{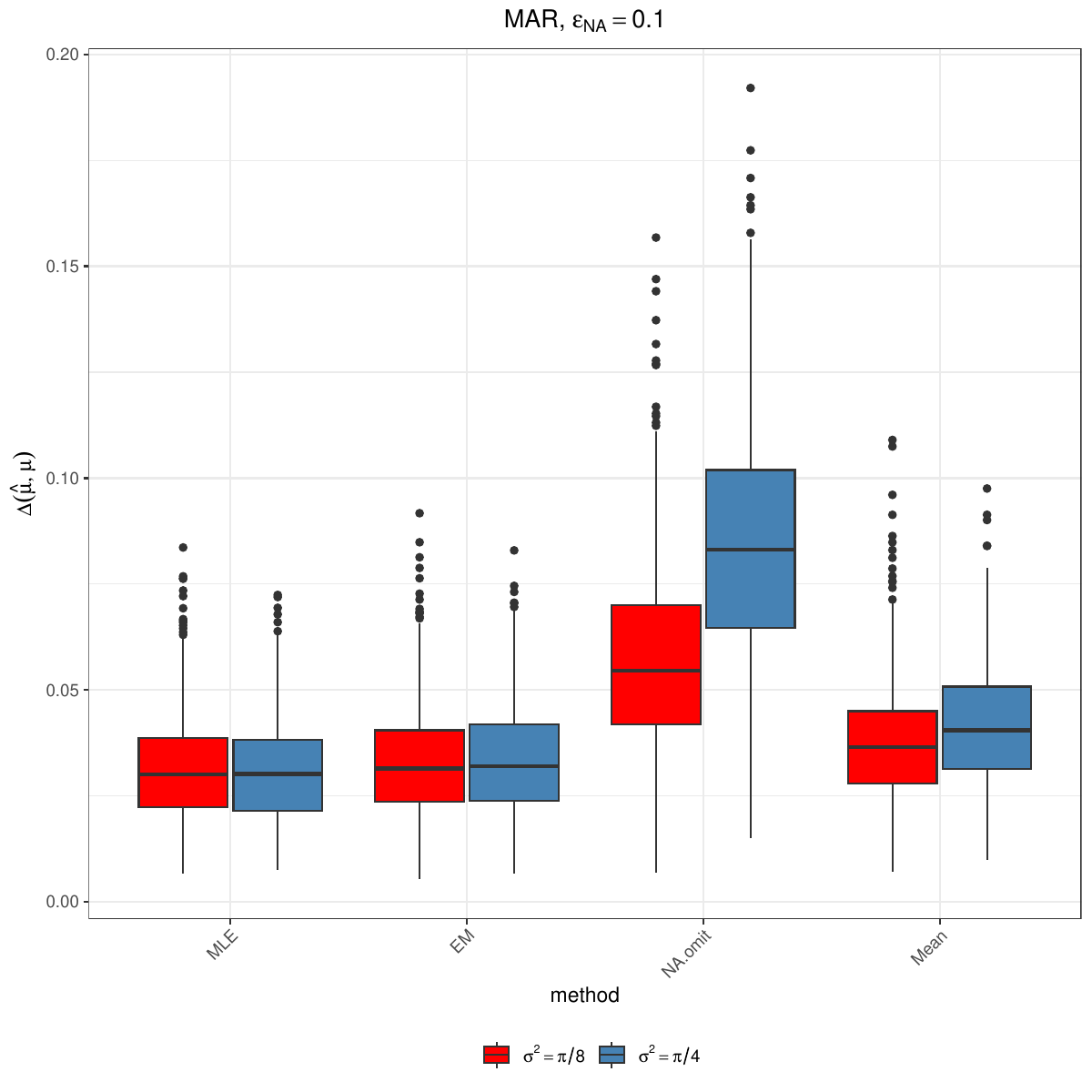}
	\includegraphics[scale=0.325]{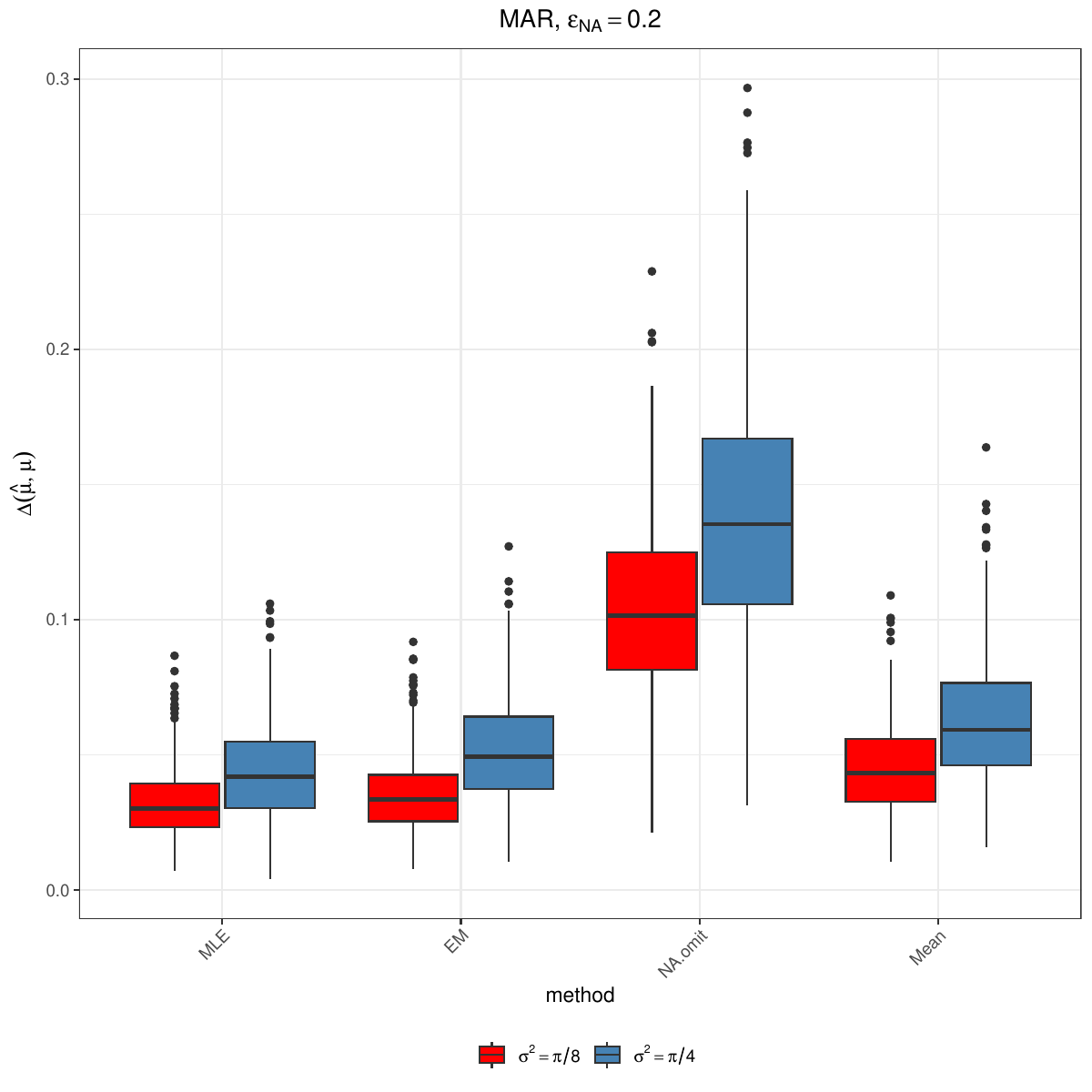}
	\includegraphics[scale=0.325]{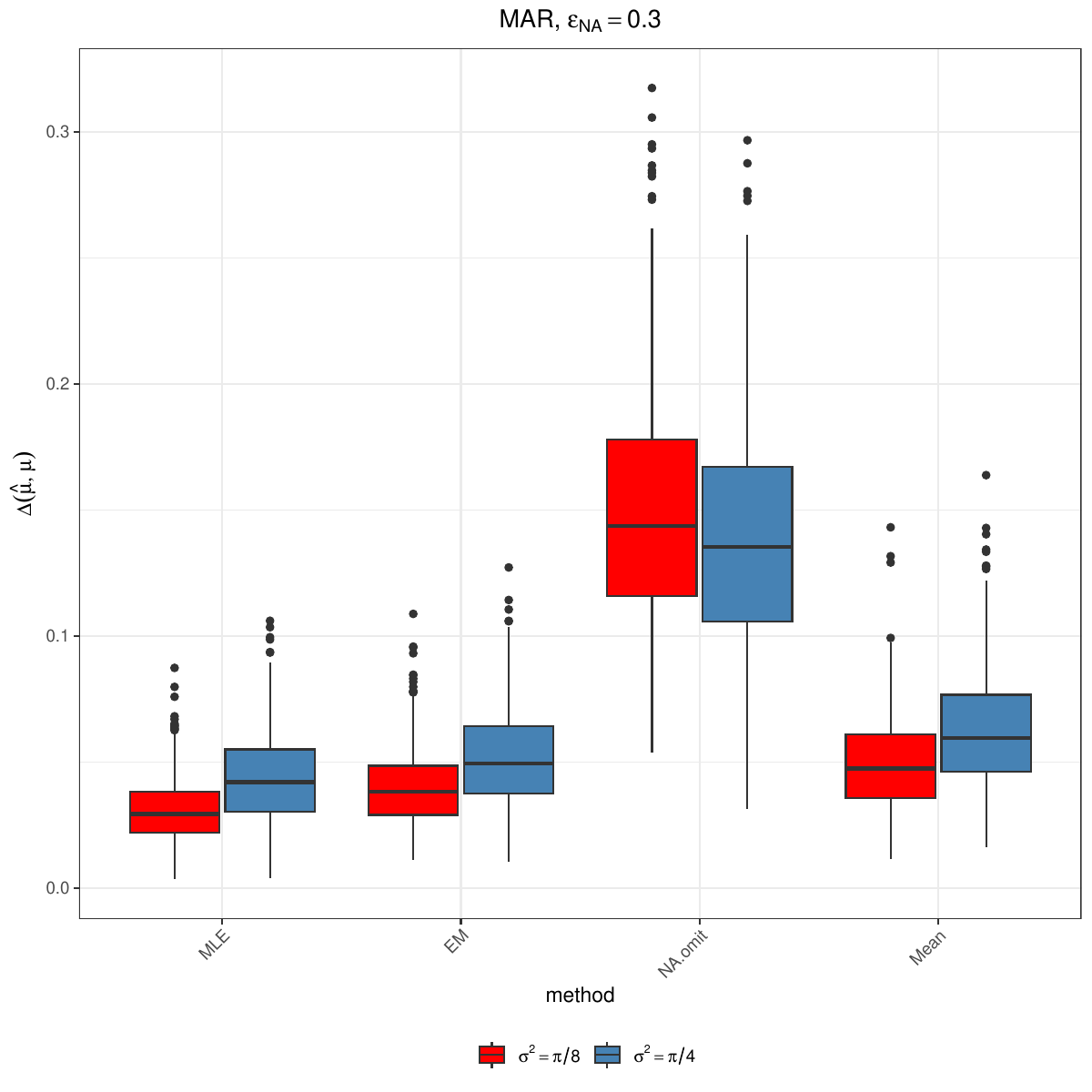}
	\caption{Numerical studies. Distribution of $\Delta(\hat\bmu)$ for different methods uder MAR when $p=5$. Missingness rate is cellwise.}
	\label{fig:7a}
\end{figure}

\begin{figure}[!h]
	\includegraphics[scale=0.325]{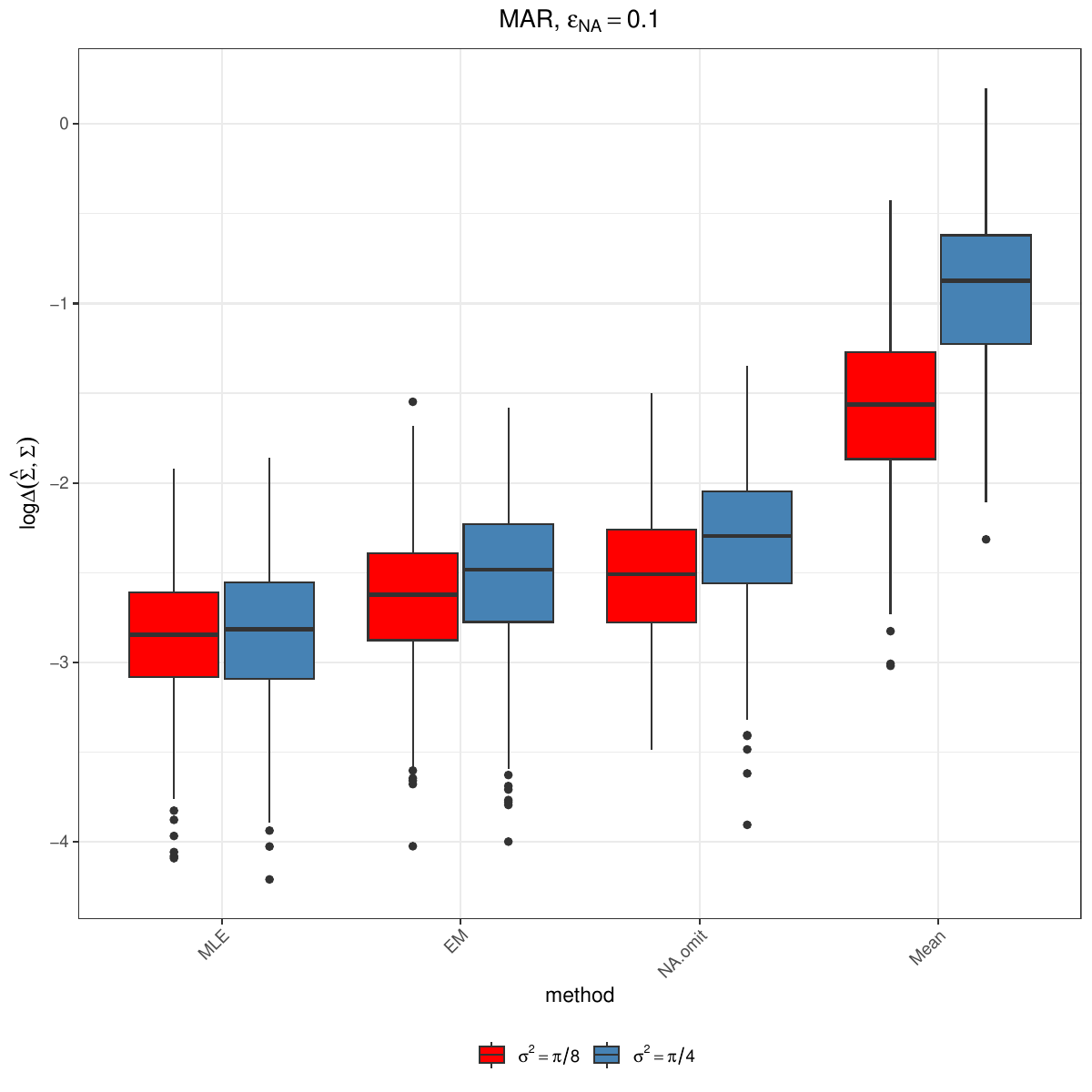}
	\includegraphics[scale=0.325]{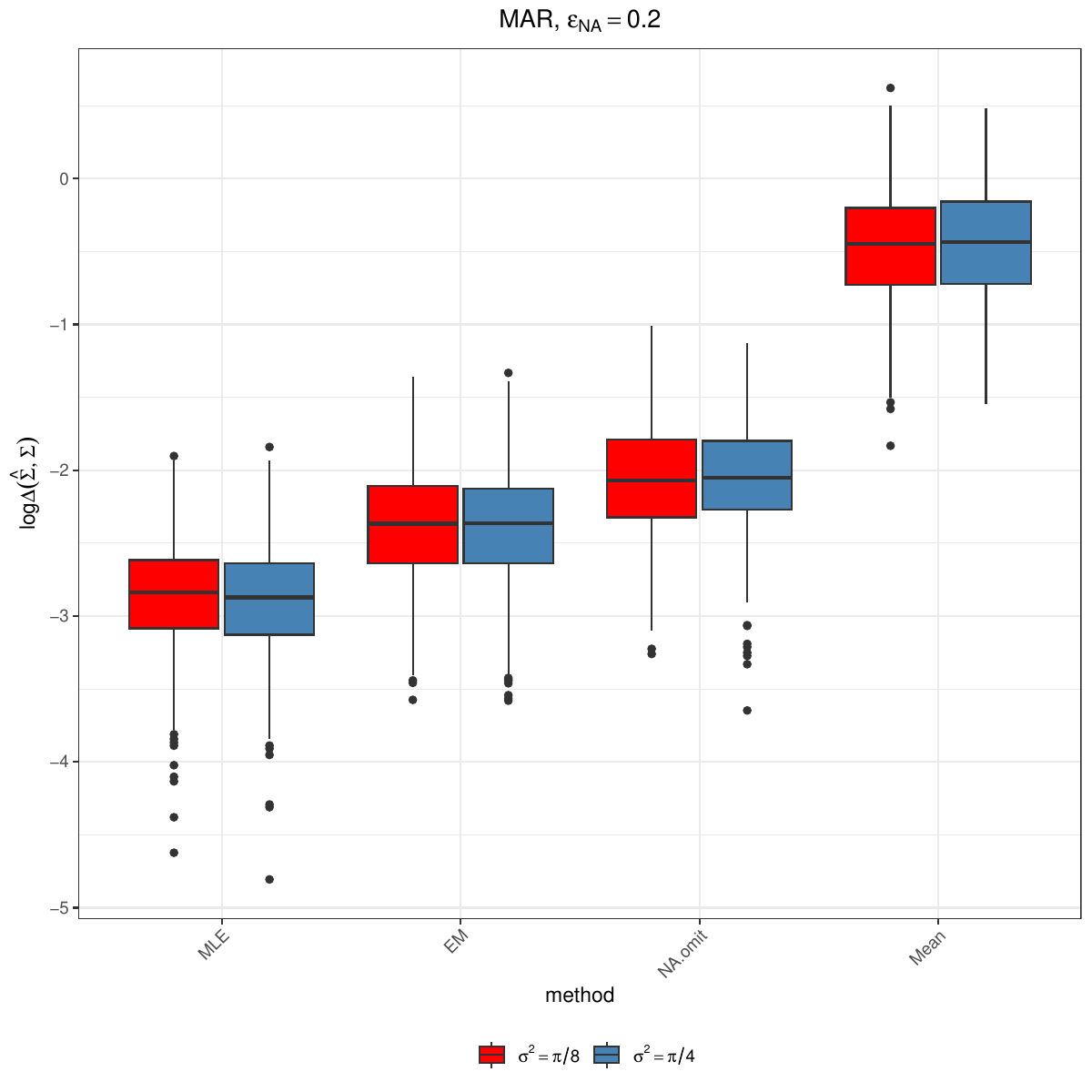}
	\includegraphics[scale=0.325]{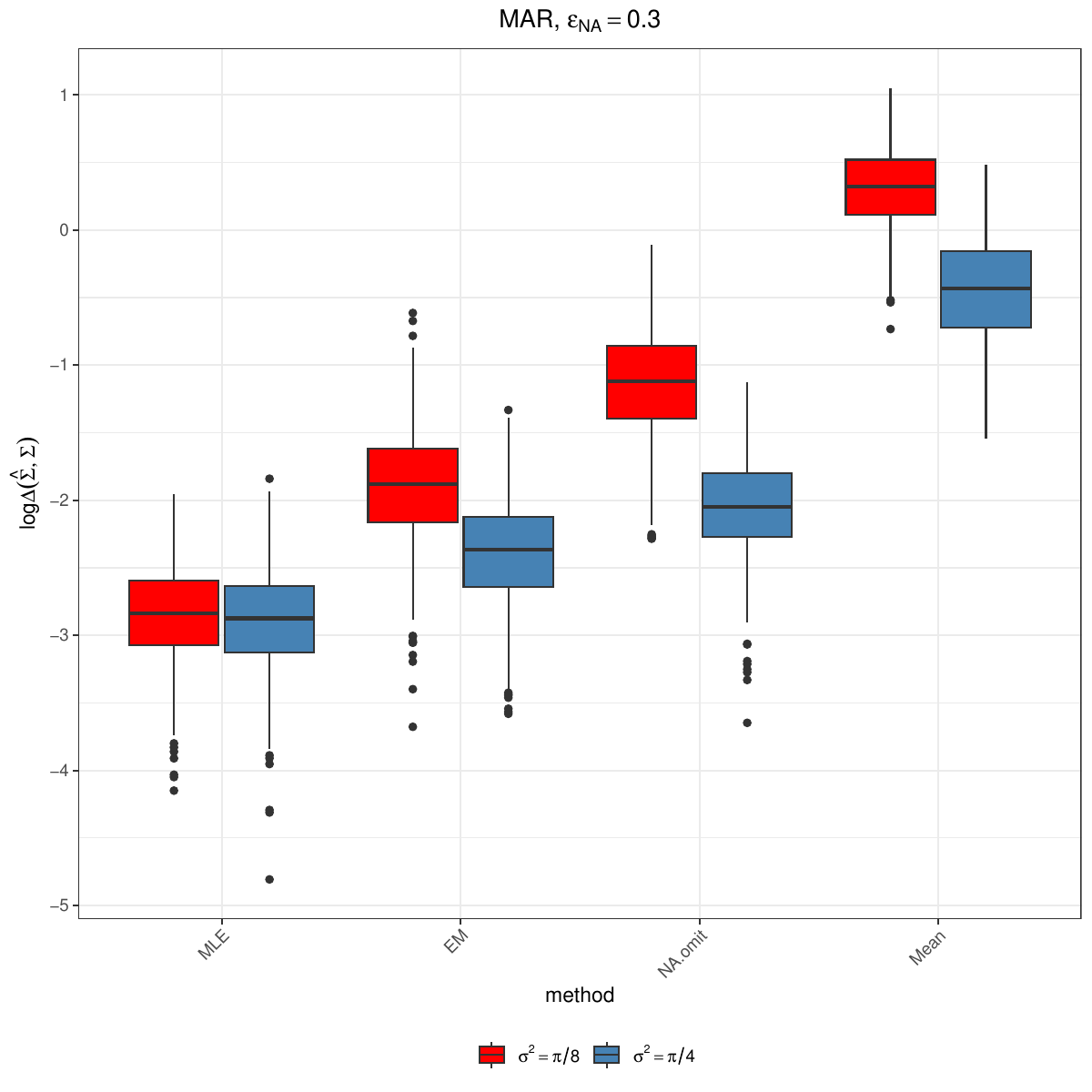}
	\caption{Numerical studies. Distribution of $\log\Delta(\hat\bSigma, \bSigma)$ for different methods uder MAR when $p=5$. Missingness rate is cellwise.}
	\label{fig:8a}
\end{figure}

\begin{figure}[!h]
	\includegraphics[scale=0.325]{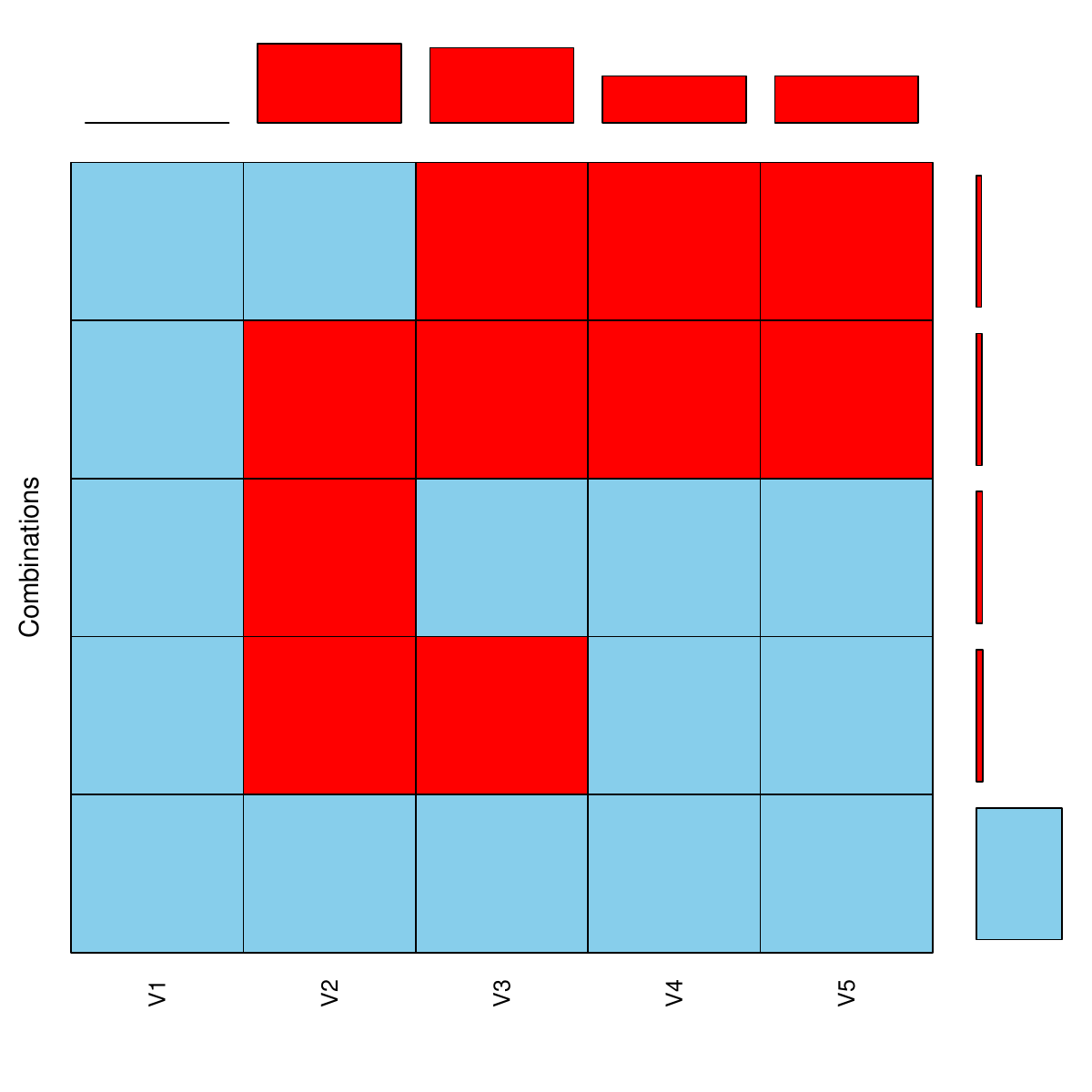}
	\includegraphics[scale=0.325]{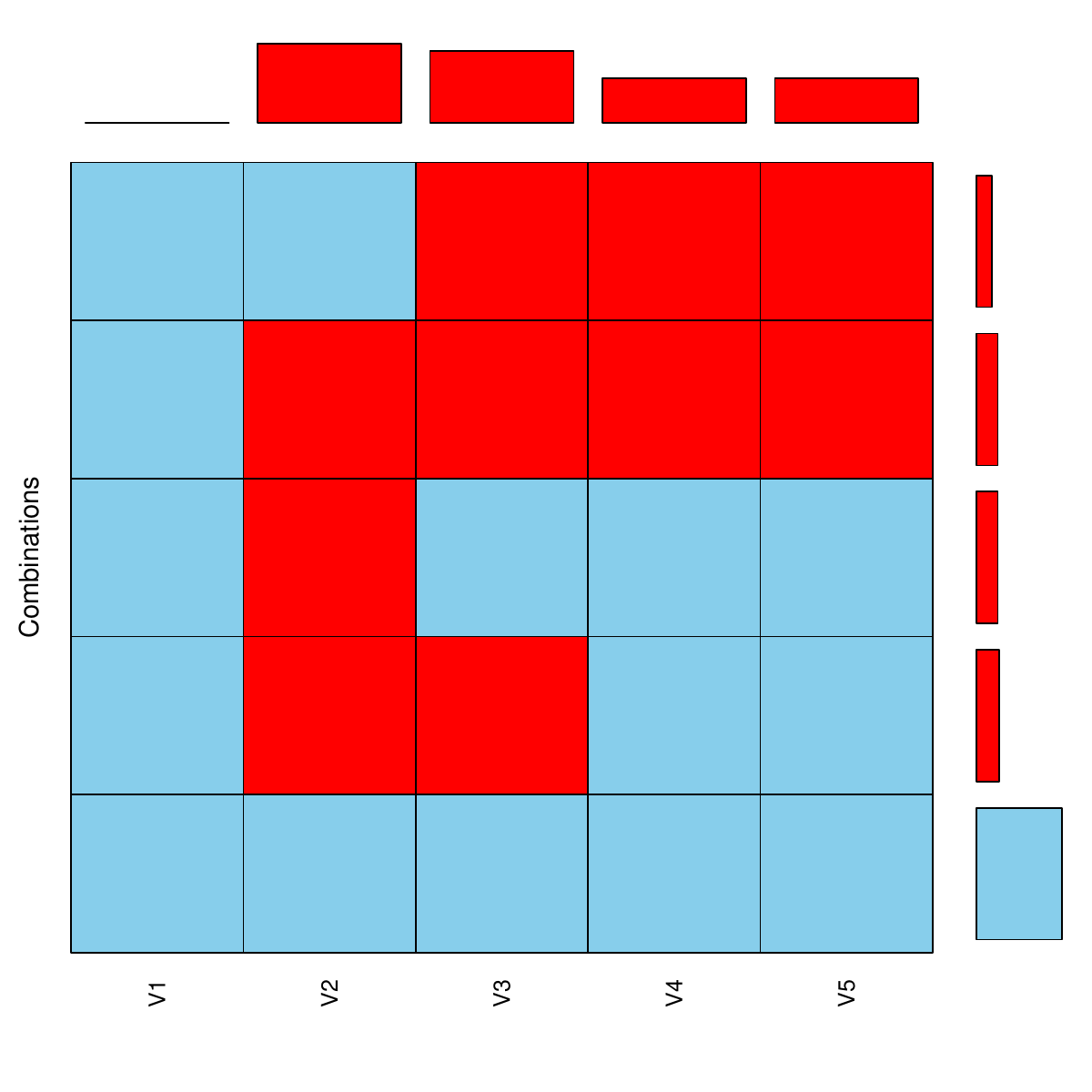}\\
	\includegraphics[scale=0.325]{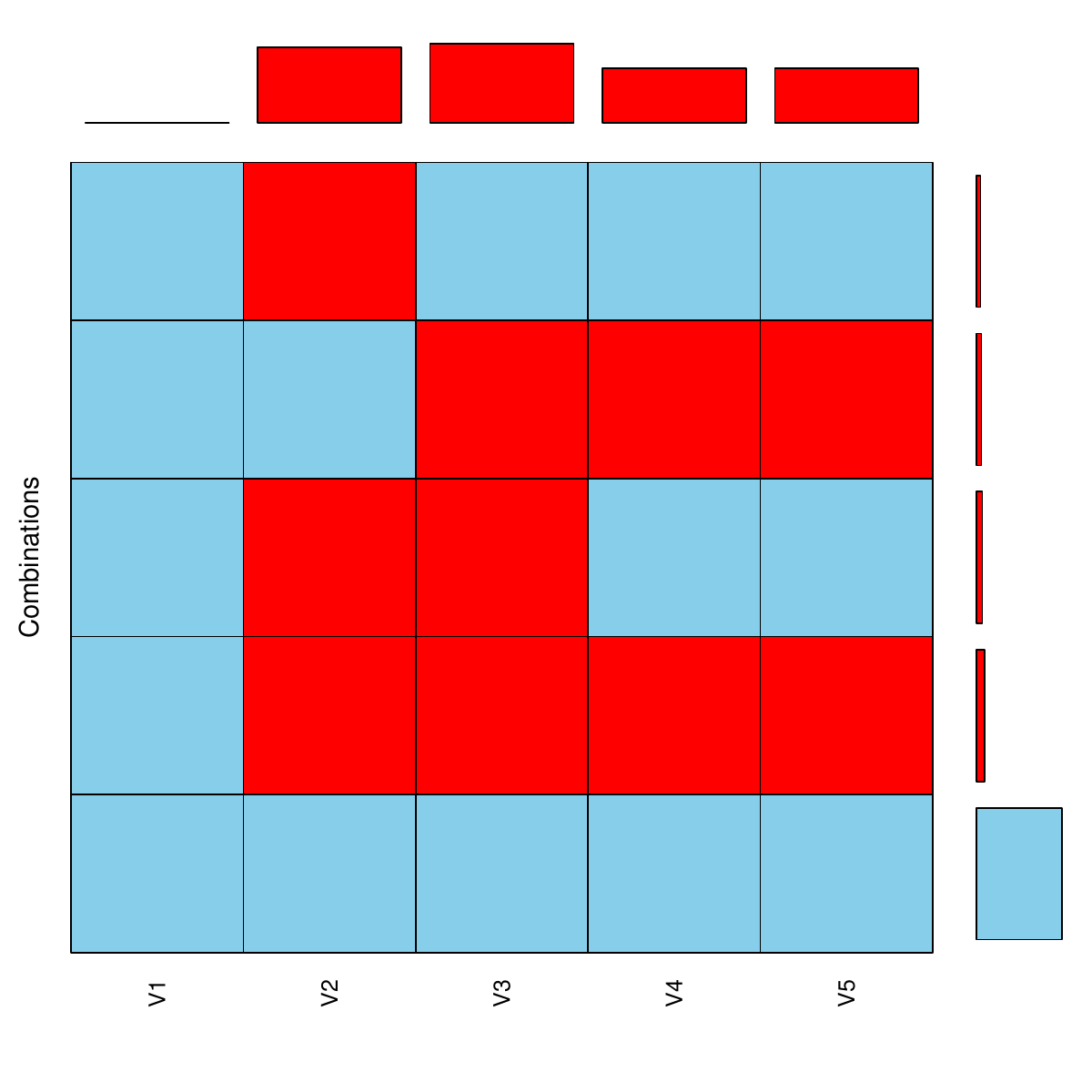}
	\includegraphics[scale=0.325]{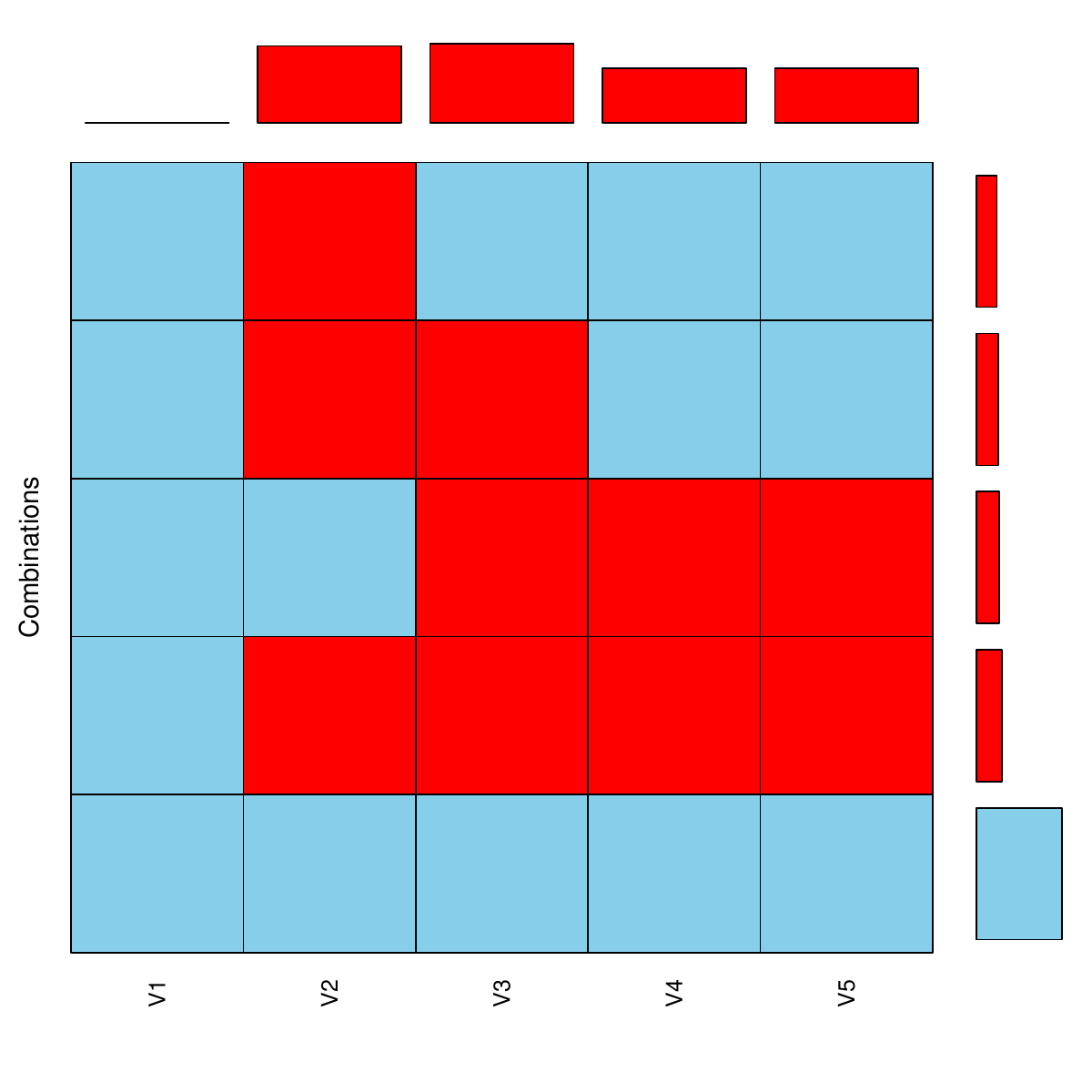}
	\caption{Numerical studies. Missingness patterns when $p=5$, $\epsilon_{NA}=0.2$ under MCAR (top)  and MAR (bottom), casewise (left) and cellwise (right) missingness rate. The top bars give the frequency of missingness by variables; the side bars give the frequency of missingness by pattern. Missing values in red.}
	\label{fig:pattern_p5}
\end{figure}

\end{document}